\newcommand{\pluseq}{\mathrel{+}=}
\newcolumntype{P}[1]{>{\centering\arraybackslash}p{#1}}
\newcolumntype{M}[1]{>{\centering\arraybackslash}m{#1}}
\newcommand*{\eg}{e.g.\@\xspace}
\newcommand*{\ie}{i.e.\@\xspace}
\newcommand{\etal}{et~al.\@\xspace}
\newcolumntype{d}{>{\displaystyle}c}
\newcolumntype{H}{>{\setbox0=\hbox\bgroup}c<{\egroup}@{}}
\newcommand\TTT{\rule{0pt}{3.2ex}}
\newcommand\BBB{\rule[-1.4ex]{0pt}{0pt}}
\newtheorem{theorem}{Theorem}
\newtheorem{thm1}{Theorem}
\newtheorem{lemma}[thm1]{Lemma}
\newtheorem{thm2}{Theorem}
\newtheorem{cor}[thm2]{Corollary}
\theoremstyle{definition}
\newtheorem{mydef}{Definition}
\let\author@tabular\tabular
\begin{document}
\label{firstpage}

\title{Graphlet Decomposition: Framework, Algorithms, and Applications}

\author[N.K. Ahmed et al]
{Nesreen K. Ahmed$\overset{^{1}}{,}$ 
Jennifer Neville$\overset{^{2}}{,}$ 
Ryan Rossi$\overset{^{3}}{,}$\\ 
Nick G. Duffield$\overset{^{4}}{,}$ 
and Theodore L. Willke$\overset{^{1}}{}$ 
\\
\affilsize
\vspace*{-2mm}
$^1$Intel Labs, Intel Corporation\\
\affilsize
\vspace*{-2mm}
$^2$Department of Computer Science, Purdue University\\
\affilsize
\vspace*{-2mm}
$^3$Palo Alto Research Center (PARC)\\
\affilsize
$^4$Department of Electrical and Computer Engineering, Texas A\&M University
}
 
\maketitle

\begin{abstract}
From social science to biology, numerous applications often rely on graphlets for intuitive and meaningful characterization of networks at both the global macro-level as well as the local micro-level. While graphlets have witnessed a tremendous success and impact in a variety of domains, there has yet to be a fast and efficient approach for computing the frequencies of these subgraph patterns. However, existing methods are not scalable to large networks with millions of nodes and edges, which impedes the application of graphlets to new problems that require large-scale network analysis. To address these problems, we propose a fast, efficient, and parallel algorithm for counting graphlets of size $k=\{3,4\}$-nodes that take only a fraction of the time to compute when compared with the current methods used. The proposed graphlet counting algorithms leverages a number of proven combinatorial arguments for different graphlets. For each edge, we count a few graphlets, and with these counts along with the combinatorial arguments, we obtain the exact counts of others in constant time. On a large collection of $300$+ networks from a variety of domains, our graphlet counting strategies are on average $460$x faster than current methods. This brings new opportunities to investigate the use of graphlets on much larger networks and newer applications as we show in the experiments. To the best of our knowledge, this paper provides the largest graphlet computations to date as well as the largest systematic investigation on over $300$+ networks from a variety of domains.
\end{abstract}

\begin{keywords}
Graphlet; Motif; Graph Mining; Graph Kernel; Classification; Graph Features; Higher-order Graph Statistics; Biological Networks; Visual Graph Analytics 
\end{keywords}

\section{Introduction}
\label{sec:intro}
Recursive decomposition of networks is a widely used approach in network analysis to factorize the complex structure of real-world networks into small subgraph patterns of size $k$ nodes. These patterns are called \emph{graphlets}~\cite{prvzulj2004modeling}. Graphlets (also known as motifs~\cite{milo2002network}) are defined as subgraph patterns recurring in real-world networks at frequencies that are statistically significant from those in random networks. Given a network, we can count the number of embedding of each graphlet in the network, creating a profile of sufficient statistics that characterizes the network structure~\cite{shervashidze2009efficient}. While knowing the graphlet frequencies does not uniquely define the network structure, it has been shown that graphlet frequencies often carry significant information about the local network structure in a variety of domains~\cite{Holland_Lein,faust2010puzzle,frank1988triad}. This is in contrast to global topological properties (\eg, diameter, degree distribution), where networks with similar/exact global topological properties can exhibit significantly different local structures.

\subsection{Graphlets, Scalability, \& Applications}

From social science to biology, graphlets have found numerous applications and were used as the building blocks of network analysis~\cite{milo2002network}. In social science, graphlet analysis (typically known as $k$-subgraph census) is widely adopted in sociometric studies~\cite{Holland_Lein,frank1988triad}. Much of the work in this vein focused on analyzing triadic tendencies as important structural features of social networks (\eg, transitivity or triadic closure) as well as analyzing triadic configurations as the basis for various social network theories (\eg, social balance, strength of weak ties, stability of ties, or trust~\cite{granovetter1983strength}). In biology~\cite{prvzulj2004modeling,milenkoviae2008uncovering}, graphlets were widely used for protein function prediction~\cite{shervashidze2009efficient}, network alignment~\cite{milenkovic2010optimal}, and phylogeny~\cite{kuchaiev2010topological} to name a few. More recently, there has been an increased interest in exploring the role of graphlet analysis in computer networking~\cite{feldman2008automatic,hales2008motifs,becchetti2008efficient} (\eg, for web spam detection, analysis of peer-to-peer protocols and Internet AS graphs), chemoinformatics~\cite{ralaivola2005graph,kashima2010graph}, image segmentation~\cite{zhang2013probabilistic}, among others~\cite{zhang2013discovering}.

While graphlet counting and discovery have witnessed a tremendous success and impact in a variety of domains from social science to biology, there has yet to be a fast and efficient approach for computing the frequencies of these patterns. For instance, Shervashidze~\etal~\cite{shervashidze2009efficient} takes hours to count graphlets on relatively small biological networks (\ie, few hundreds/thousands of nodes/edges) and uses such counts as features for graph classification~\cite{vishwanathan2010graph}. Previous work showed that graphlet counting is computationally intensive since the number of possible $k$-subgraphs in a graph $G$ increases exponentially with $k$ in $\mathcal{O}(|V|^k)$ and can be computed in $\mathcal{O}(|V|.\Delta^{k-1})$ for any bounded degree graph, where $\Delta$ is the maximum degree of the graph~\cite{shervashidze2009efficient}.

To address these problems, we propose a fast, efficient, and parallel algorithm for counting graphlets of size $k=\{3,4\}$-nodes that take only a fraction of the time to compute when compared with the current methods used. The proposed graphlet counting algorithm leverages a number of proven combinatorial arguments for different graphlets. For each edge, we count a few graphlets, and with these counts along with the combinatorial arguments, we obtain the exact counts of others in constant time. On a large collection of $300$+ networks from a variety of domains, our graphlet counting strategies are on average $460$x faster than current methods. This brings new opportunities to investigate the use of graphlets on much larger networks and newer applications as we show in our experiments. To the best of our knowledge, this paper provides the largest graphlet computations to date as well as the largest systematic investigation on over $300$+ networks.

Furthermore, a number of important machine learning tasks are likely to benefit from such an approach, including graph anomaly detection~\cite{noble2003graph}, as well as using graphlets as features for improving community detection~\cite{schaeffer2007graph}, role discovery~\cite{rossi2014roles}, graph classification~\cite{vishwanathan2010graph}, and relational learning~\cite{getoor2007introduction}.

We test the scalability of our proposed approach experimentally on $300+$ networks from a variety of domains, such as biological, social, and technological domains. We compare our approach to the state-of-the-art exact counting methods such as RAGE~\cite{marcus2012rage}, FANMOD~\cite{wernicke2006fanmod}, and Orca~\cite{hovcevar2014combinatorial}. We found that RAGE~\cite{marcus2012rage} took 2400 seconds to count graphlets on a small $26$k node graph, whereas our proposed method is $460$x faster, taking only $0.01$ seconds. We also note that FANMOD~\cite{wernicke2006fanmod}, another recent approach, takes $172800$ seconds, and Orca~\cite{hovcevar2014combinatorial} takes $2.5$ seconds for the same small graph. Our exact graphlet analysis is well-suited for shared-memory multi-core architectures (CPU and GPU), distributed architectures (MPI), and hybrid implementations that leverage the advantages of both.

\subsection{Contributions}
\begin{list}{{$\bullet$}}{\itemsep=3px}
\item \textbf{Algorithms.} A fast, efficient, and parallel graphlet counting algorithm that leverages a number of combinatorial arguments that we show for different graphlets. The combinatorial arguments we show in this paper enable us to obtain significant improvement on the scalability of graphlet counting. 

\item \textbf{Scalability.} The proposed graphlet counting algorithm achieves on average $460$x runtime improvement over the state-of-the-art methods. In addition, we analyze graphlet counts on graphs of sizes that are beyond the scope of the state-of-the-art (\eg, on graphs with hundred million nodes and billion edges). 

\item \textbf{Effectiveness.} Largest graphlet computations to date and largest systematic evaluation on over $300$+ large-scale networks from a variety of domains.

\item \textbf{Applications.} We systematically investigate a variety of existing and new applications for graphlet counting, such as finding unique patterns in graphs, graph similarity, and graph classification. 
\end{list}

\section{Background}
\label{sec:background}
Graphlets are subgraph patterns recurring in real-world networks at frequencies that are significantly higher than those in random networks~\cite{milo2002network,prvzulj2004modeling}. Previous work showed that graphlets can be used to define universal classes of networks~\cite{milo2002network}. Moreover, graphlets are at the heart and foundation of many network analysis tasks (\eg, network classification, network alignment, etc.)~\cite{prvzulj2004modeling,milenkoviae2008uncovering,hayes2013graphlet}. In this paper, we introduce an efficient algorithm to compute the number of embedding of each graphlet of size $k=\{2,3,4\}$ nodes in the network (see Table~\ref{table:graphlet_notation} for notation).

{
\setlength{\tabcolsep}
{3.1pt}
\begin{table}
\caption{Summary of graphlet notation}
\medskip
\label{table:graphlet_notation}
\small \scriptsize
\scalebox{1.0}{
\begin{tabularx}{1.0\textwidth}{M{0.2cm} M{0.4cm}M{0.4cm} M{2cm} M{1.3cm} M{0.8cm}M{0.4cm}M{0.4cm}M{0.7cm}M{0.4cm}M{0.4cm}M{0.4cm}M{0.3cm} M{0.3cm}M{0.3cm} }
\midrule
\multicolumn{15}
{p{1.00\textwidth}}
{
Summary of the notation and properties for the graphlets of size $k = \{2,3,4\}$.
Note that $\rho$ denotes density, $\Delta$ and $\bar{d}$ denote the max and mean degree, whereas assortativity is denoted by $r$.
Also, $|T|$ denotes the total number of triangles, $K$ is the max k-core number, $\chi$ denotes the Chromatic number, whereas ${\rm D}$ denotes the diameter, ${\rm B}$ denotes the max betweenness, and $|C|$ denotes the number of components.
Note that if $|C|>1$, then $r$, ${\rm D}$, and ${\rm B}$ are from the largest component.
}
\\
\midrule
&
\multicolumn{2}{c}{\textbf{Graphlet}} 
 & Description & Complement & $\rho$ & $\Delta$ & $\bar{d}$ & $r$ & $|T|$ & $K$ & $\chi$ & ${\rm D}$ & ${\rm B}$ & $|C|$\\ 
\midrule 
\multicolumn{14}{l}{($k=4$)$-$\textsc{Graphlets}} \\
\midrule 
\TTT\BBB
\multirow{8}{*}{\rotatebox{90}{\mbox{}
}} 
\multirow{10}{*}{\rotatebox{90}{\mbox{\textsc{Connected}}}} 
&  \includegraphics[scale=0.04]{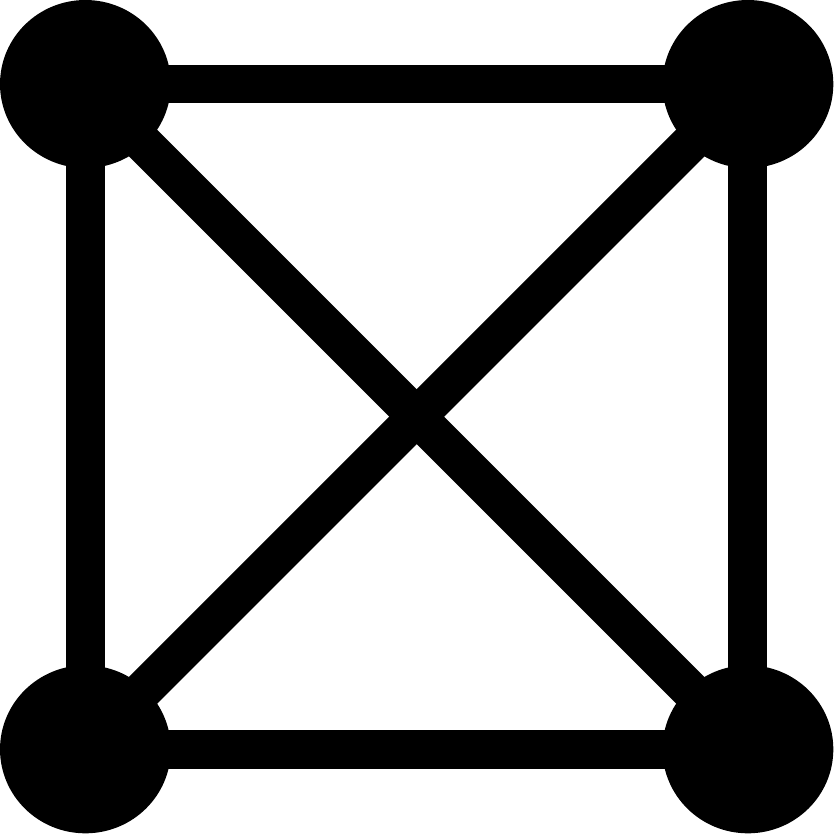} 
& $g_{4_1}$  & 4-clique 
&  \includegraphics[scale=0.04]{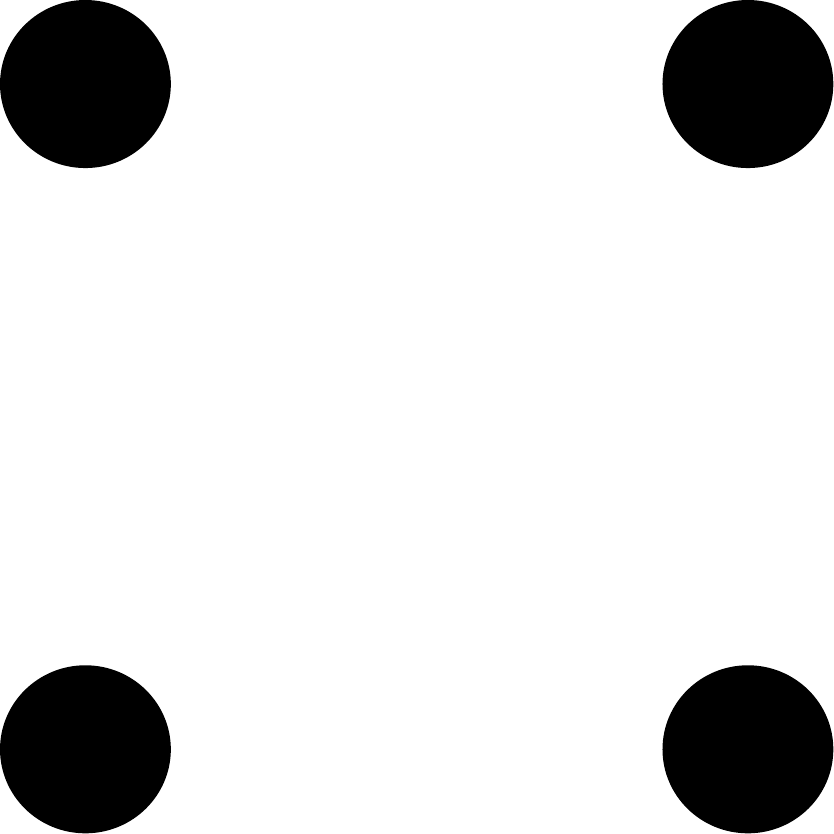} 
& 1.00 & 3 & 3.0 & 1.00 & 4 & 3 & 4 & 1 & 0 & 1\\ 
\TTT\BBB
&  \includegraphics[scale=0.04]{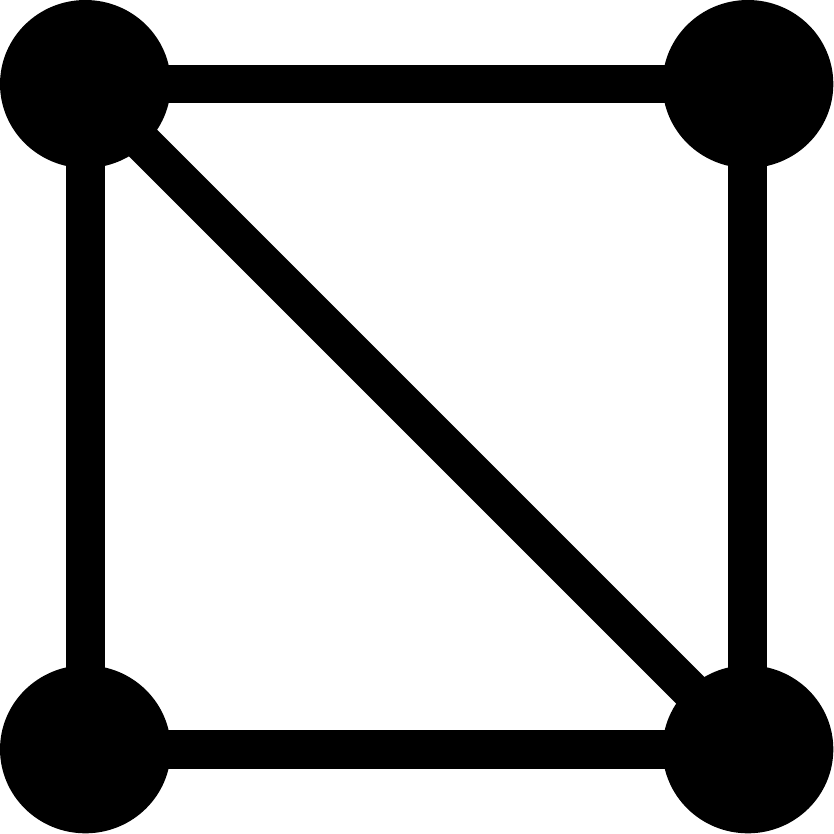} 
& $g_{4_2}$  & 4-chordalcycle 
&  \includegraphics[scale=0.04]{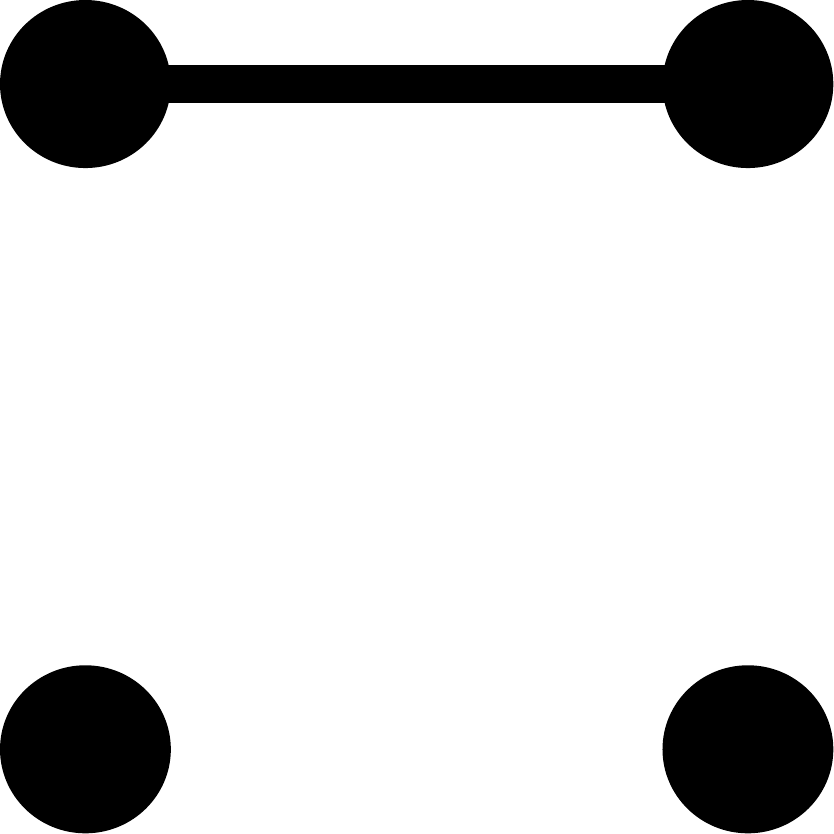} 
& 0.83 & 3 & 2.5 & -0.66 & 2 & 2 & 3 & 2 & 1 & 1 \\ 
\TTT\BBB
&  \includegraphics[scale=0.04]{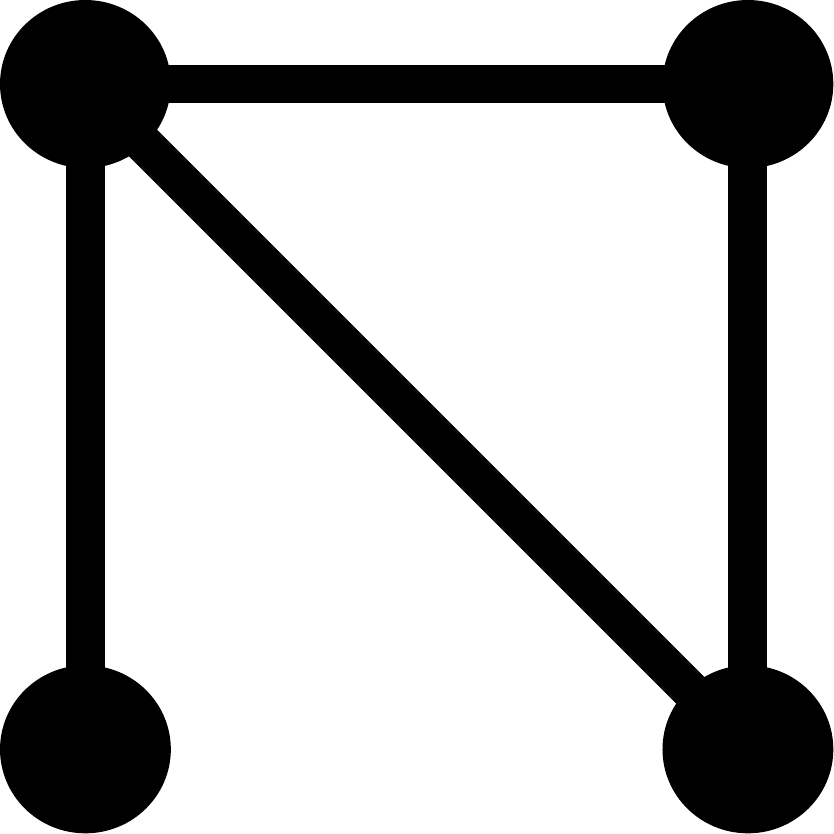} 
& $g_{4_3}$  & 4-tailedtriangle 
&  \includegraphics[scale=0.04]{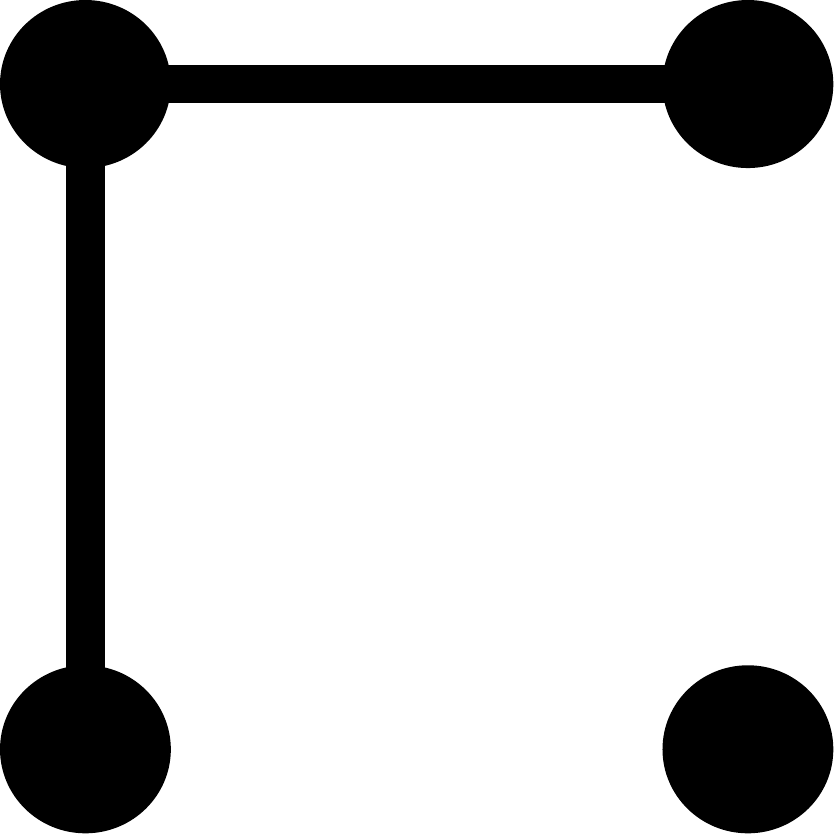} 
& 0.67 & 3 & 2.0 & -0.71 & 1 & 2 & 3 & 2 & 2 & 1\\
\TTT\BBB
&  \includegraphics[scale=0.04]{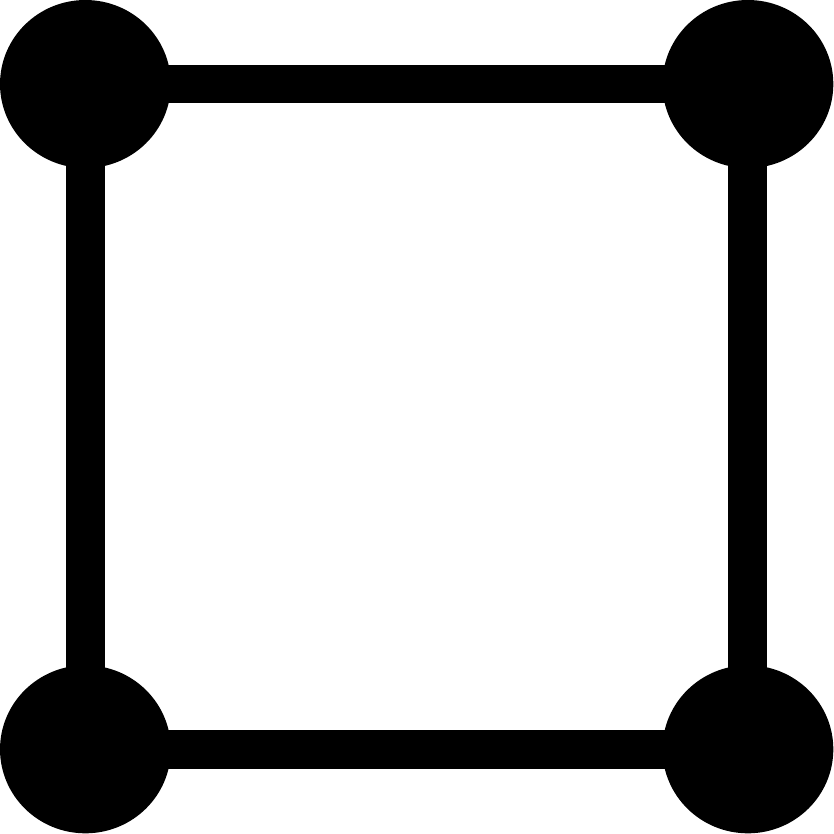} 
& $g_{4_4}$  & 4-cycle 
&  \includegraphics[scale=0.04]{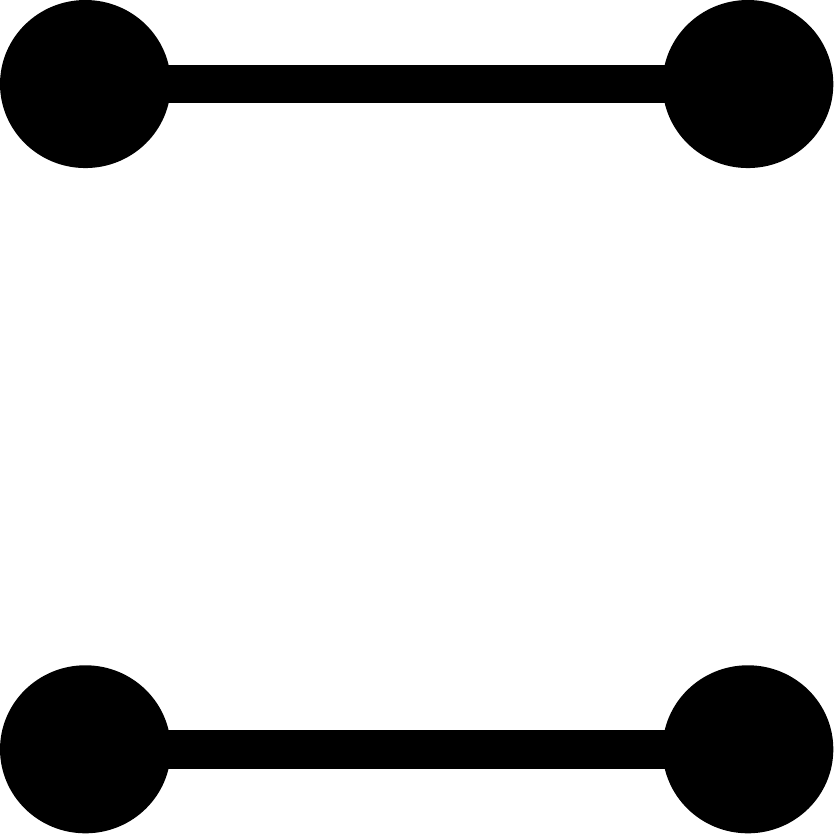} 
& 0.67 & 2 & 2.0 & 1.00 & 0 & 2 & 2 & 2 & 1 & 1\\
\TTT\BBB
&  \includegraphics[scale=0.04]{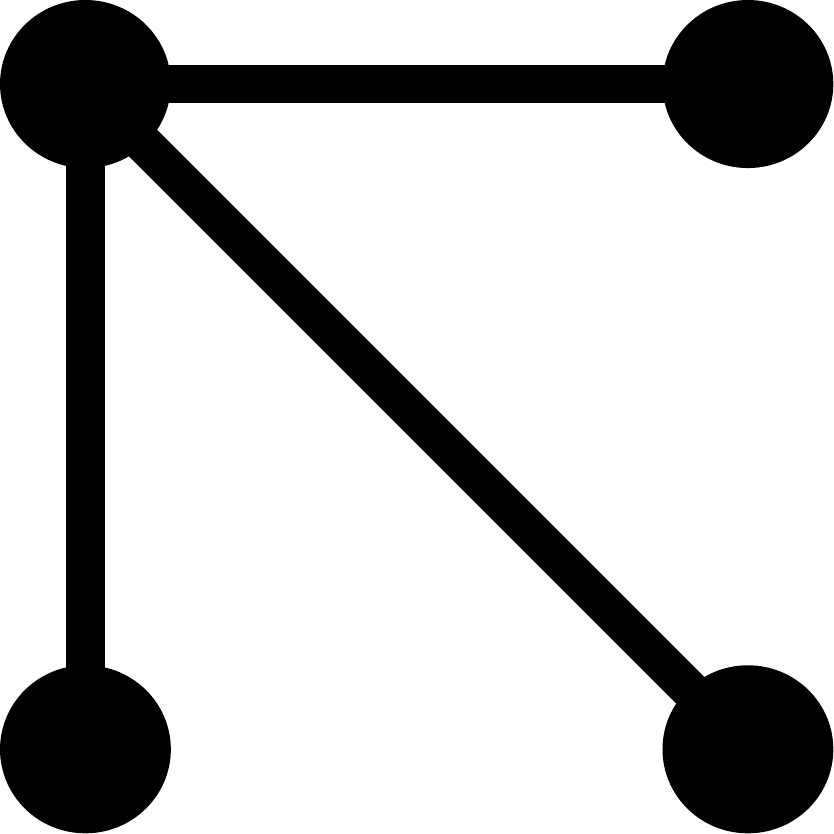} 
& $g_{4_5}$  & 3-star 
&  \includegraphics[scale=0.04]{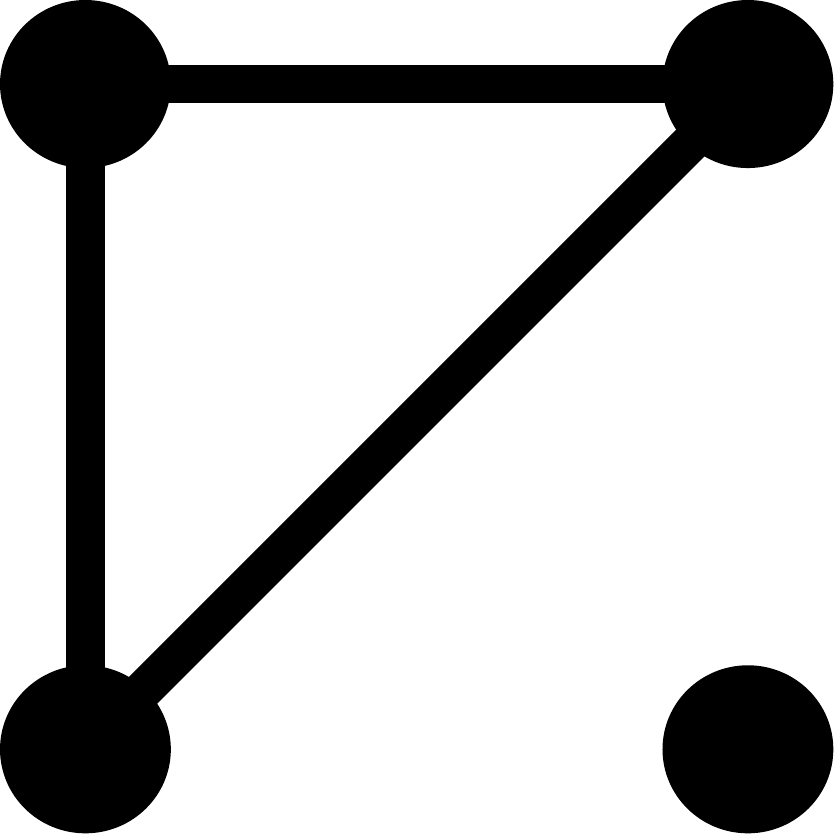} 
& 0.50 & 3 & 1.5 & -1.00 & 0 & 1 & 2 & 2 & 3 & 1\\
\TTT\BBB
&  \includegraphics[scale=0.04]{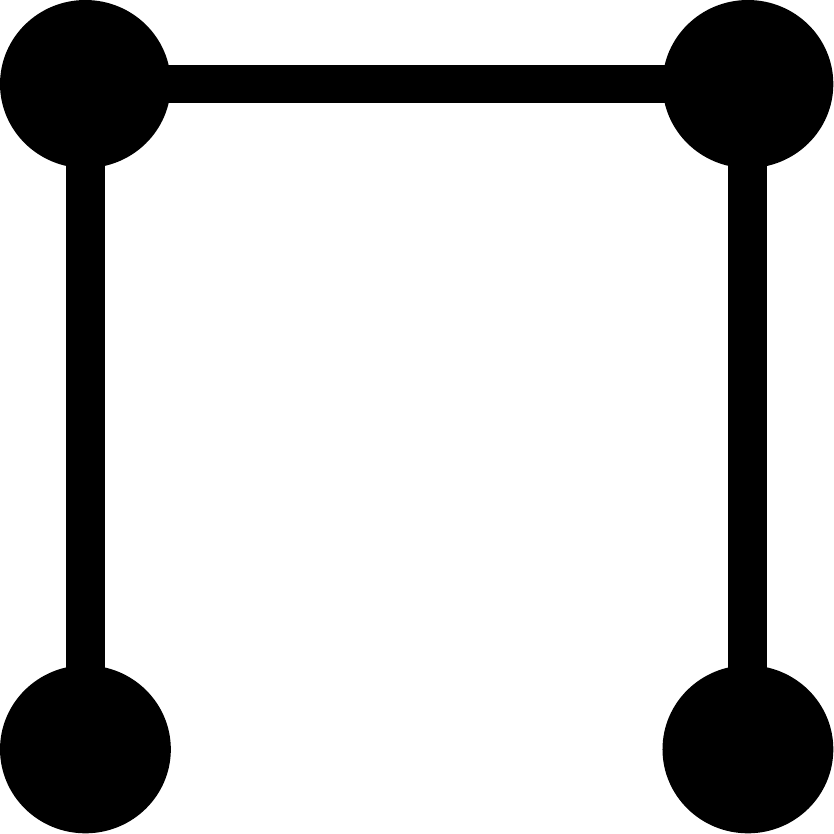} 
& $g_{4_6}$  & 4-path 
&  \includegraphics[scale=0.04]{graphics/graphlets/4-path.pdf} 
& 0.50 & 2 & 1.5 & -0.50 & 0 & 1 & 2 & 3 & 2 & 1\\
\midrule 
\TTT\BBB
\multirow{8}{*}{\rotatebox{90}{\mbox{\textsc{Disconnected}}}} 
&  \includegraphics[scale=0.04]{graphics/graphlets/4-node-triangle.pdf} 
& $g_{4_7}$  & 4-node-1-triangle 
&  \includegraphics[scale=0.04]{graphics/graphlets/4-star.pdf} 
& 0.50 & 2 & 1.5 & 1.00 & 1 & 2 & 3 & 1 & 0 & 2\\
\TTT\BBB
&  \includegraphics[scale=0.04]{graphics/graphlets/4-node-star.pdf} 
& $g_{4_8}$  & 4-node-2-star 
&  \includegraphics[scale=0.04]{graphics/graphlets/tailed-triangle.pdf} 
& 0.33 & 2 & 1.0 & -1.00 & 0 & 1 & 2 & 2 & 1 & 2\\
\TTT\BBB
&  \includegraphics[scale=0.04]{graphics/graphlets/4-node-2edges.pdf} 
& $g_{4_9}$  & 4-node-2-edge 
&  \includegraphics[scale=0.04]{graphics/graphlets/4-cycle.pdf} 
& 0.33 & 1 & 1.0 & 1.00 & 0 & 1 & 2 & 1 & 0 & 2\\
\TTT\BBB
&  \includegraphics[scale=0.04]{graphics/graphlets/4-node-1edge.pdf} 
& $g_{4_{10}}$  & 4-node-1-edge 
&  \includegraphics[scale=0.04]{graphics/graphlets/chordal-cycle.pdf}
& 0.17 & 1 & 0.5 & 1.00 & 0 & 1 & 2 & 1 & 0 & 3\\
\TTT\BBB
&  \includegraphics[scale=0.04]{graphics/graphlets/4-node-indep.pdf} 
& $g_{4_{11}}$  & 4-node-independent 
&  \includegraphics[scale=0.04]{graphics/graphlets/4-clique.pdf} 
& 0.00 & 0 & 0.0 & 0.00 & 0 & 0 & 1 & $\infty$ & 0 & 4\\
\midrule 
\multicolumn{14}{l}{($k=3$)$-$\textsc{Graphlets}} \\
\midrule 
\TTT\BBB
\multirow{8}{*}{\rotatebox{90}{\mbox{}
}} 
&  \includegraphics[scale=0.04]{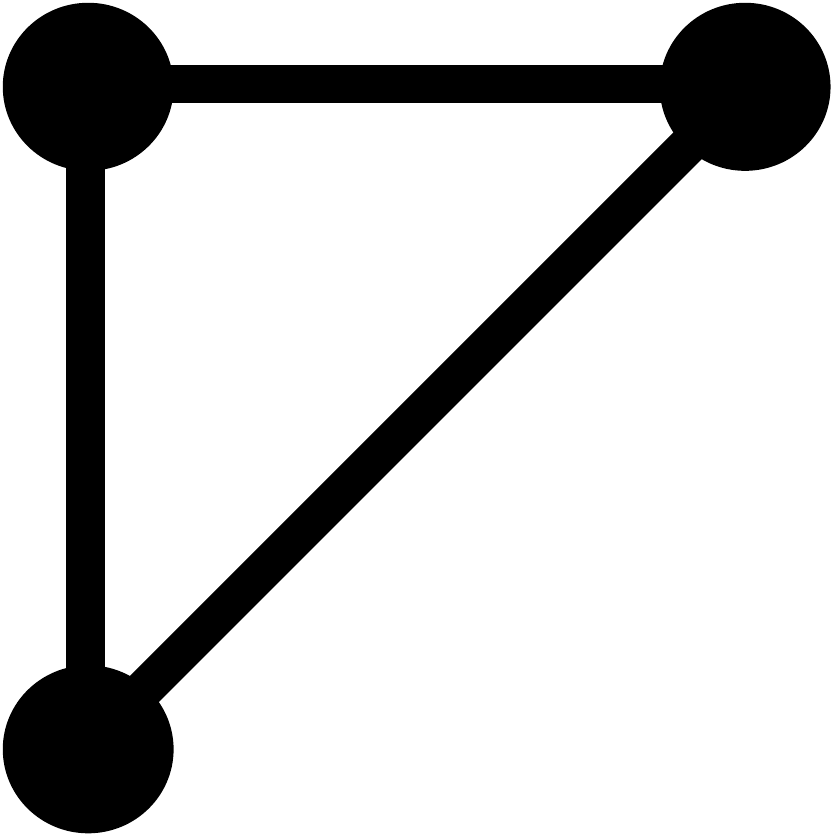} 
& $g_{3_1}$  & triangle 
&  \includegraphics[scale=0.04]{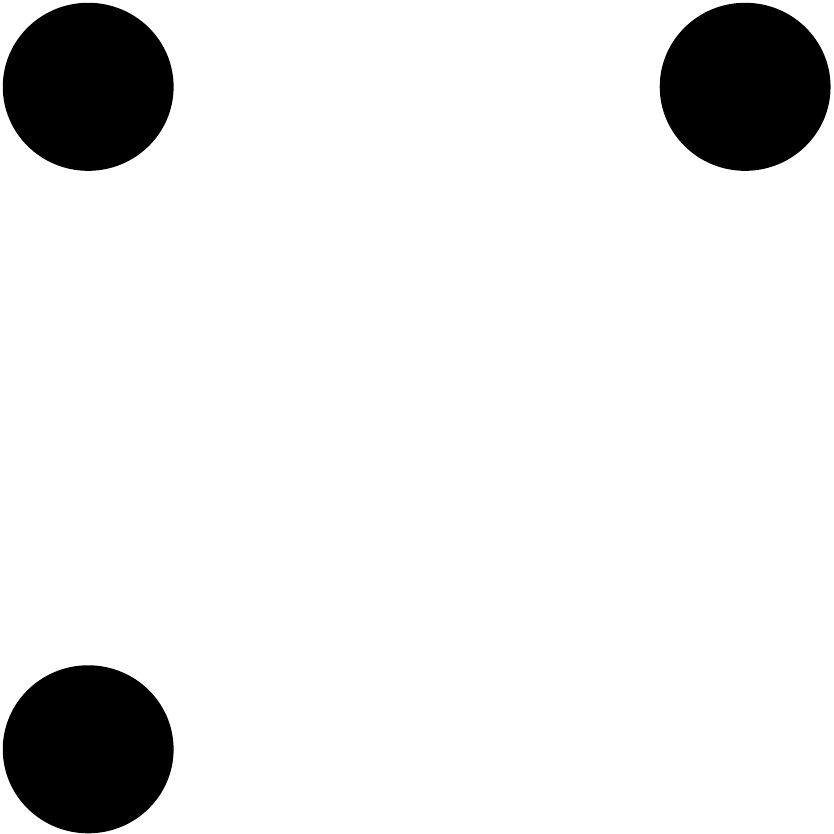} 
& 1.00 & 2 & 2.0 & 1.00 & 1 & 2 & 3 & 1 & 0 & 1\\ 
\TTT\BBB
&  \includegraphics[scale=0.04]{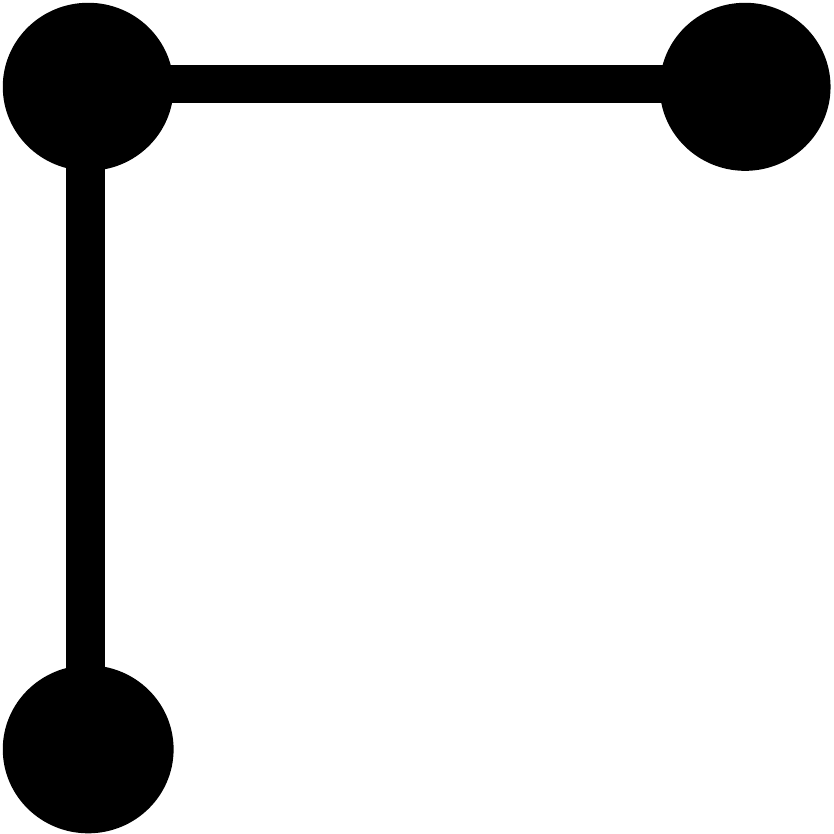} 
& $g_{3_2}$  & 2-star 
&  \includegraphics[scale=0.04]{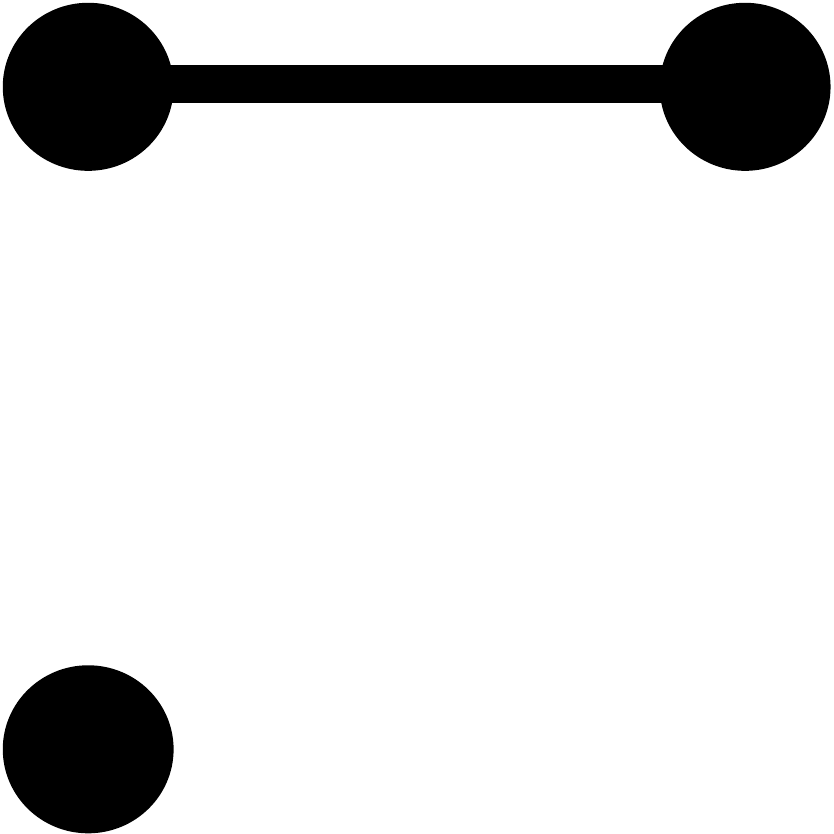} 
& 0.67 & 2 & 1.33 & -1.00  & 0 & 1 & 2 & 2 & 1 & 1\\ 
\TTT\BBB
&  \includegraphics[scale=0.04]{graphics/graphlets/3-disconnected-1edge.pdf} 
& $g_{3_3}$  & 3-node-1-edge 
&  \includegraphics[scale=0.04]{graphics/graphlets/3-path.pdf} 
& 0.33 & 1 & 0.67 & 1.00  & 0 & 1 & 2 & 1 & 0 & 2\\ 
\TTT\BBB
&  \includegraphics[scale=0.04]{graphics/graphlets/3-disconnected-indep.pdf} 
& $g_{3_4}$  & 3-node-independent 
&  \includegraphics[scale=0.04]{graphics/graphlets/3-triangle.pdf} 
& 0.00 & 0 & 0.00 & 0.00  & 0 & 0 & 1 & $\infty$ & 0 & 3\\ 
\midrule 
\multicolumn{14}{l}{($k=2$)$-$\textsc{Graphlets}} \\
\midrule 
\TTT\BBB
\multirow{8}{*}{\rotatebox{90}{\mbox{}
}} 
&  \includegraphics[scale=0.04]{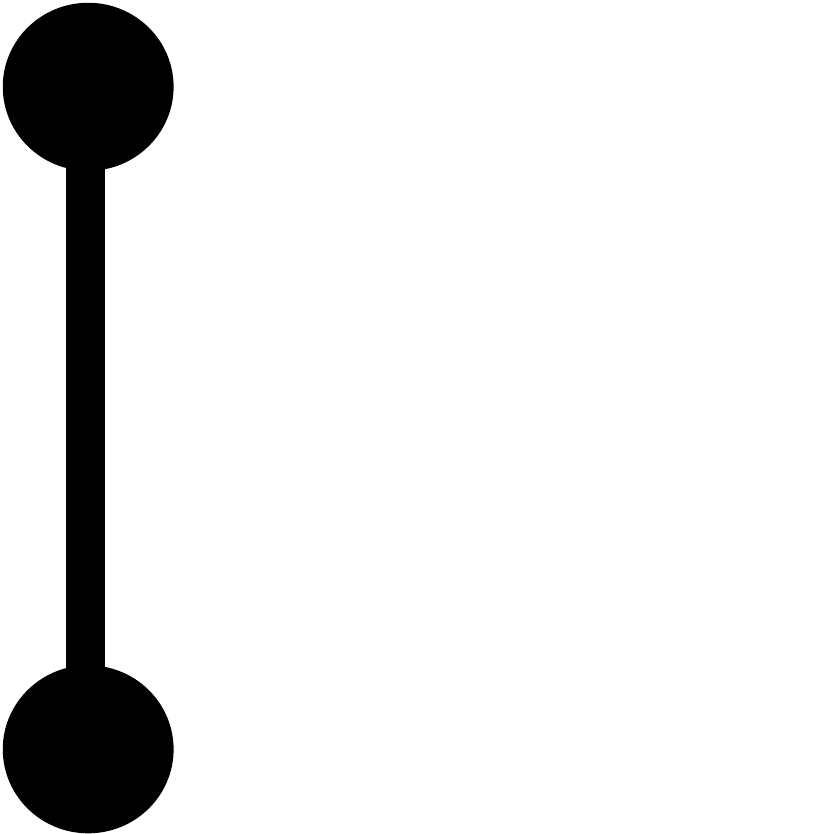} 
& $g_{2_1}$  & edge 
&  \includegraphics[scale=0.04]{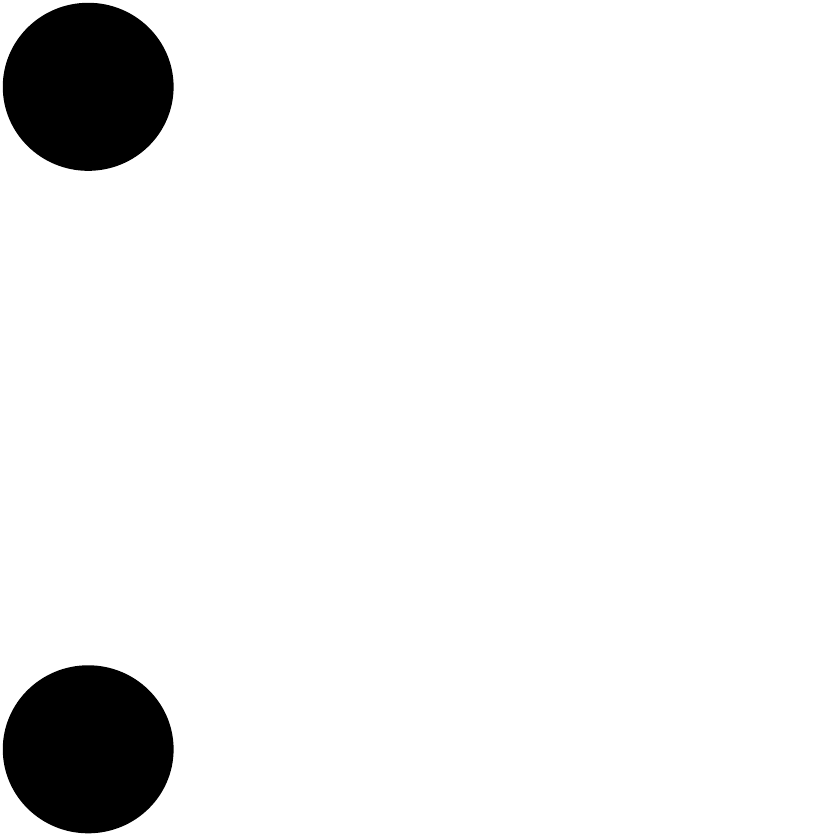} 
& 1.00 & 1 & 1.0 & 1.00  & 0 & 1 & 2 & 1 & 0 & 1\\ 
\TTT\BBB
&  \includegraphics[scale=0.04]{graphics/graphlets/2-disconnected.pdf} 
& $g_{2_2}$  & 2-node-independent 
&  \includegraphics[scale=0.04]{graphics/graphlets/2-edge.pdf} 
& 0.00 & 0 & 0.0 & 0.00  & 0 & 0 & 1 & $\infty$ & 0 & 2\\ 
\midrule
\end{tabularx}
}
\vspace{-3.mm}
\end{table}
}

\subsection{Notation and Definitions}
Given an undirected simple input graph $G=(V,E)$, a graphlet of size $k$ nodes is defined as any subgraph $G_k \subset G$ which consists of a subset of $k$ nodes of the graph $G$. 
In this paper, we mainly focus on computing the frequencies of induced graphlets. An \emph{induced} graphlet is an induced subgraph that consists of \emph{all} edges between its nodes that are present in the input graph (as described in Definition~\ref{def:ind_graphlet}). 
In addition, we distinguish between \emph{connected} and \emph{disconnected} graphlets (see Table~\ref{table:graphlet_notation}). A graphlet is connected if there is a path from any node to any other node in the graphlet (see Definition~\ref{def:conn_graphlet}). Table~\ref{table:graphlet_notation} provides a summary of the notation and properties of all possible induced graphlets of size $k = \{2,3,4\}$. 

\begin{mydef}{Induced Graphlet:}\label{def:ind_graphlet} 
an induced graphlet $G_k = (V_k,E_k)$ is a subgraph that consists of a subset of $k$ vertices of the graph $G = (V,E)$ (i.e., $V_k \subset V$) together with all the edges whose endpoints are both in this subset (\ie, $E_k = \{ \forall e \in E \,|\, e = (u,v) \wedge u,v \in V_k \}$).  
\end{mydef}

\begin{mydef}{Connected Graphlet:}\label{def:conn_graphlet}
a graphlet $G_k=(V_k,E_k)$ is connected when there is a path from any node to any other node in the graphlet (\ie, $ \forall u, v \in V_k,$\,$\exists P_{u-v}:u,...,w,...,v$, such that $d(u,v) \geq 0 \wedge d(u,v) \neq \infty$). By definition, there exist one and only one connected component in a graphlet $G_k$ (\ie, $|C|=1$) if and only if $G_k$ is connected.    
\end{mydef}

\vspace{1.mm}
\noindent\shadowbox{\begin{minipage}[t]{0.97\linewidth}
\noindent
{\sf \bfseries \sc Problem Definition}. 
Given a family of graphlets of size $k$ nodes $\mathcal{G}_k = \{g_{k_1}, g_{k_2}, ..., g_{k_m}\}$, our goal is to count the number of embeddings (appearances) of each graphlet $g_{k_i} \in \mathcal{G}_k$ in the input graph $G$. In other words, we need to count the number of induced graphlets $G_k$ in $G$ that are isomorphic to each graphlet $g_{k_i} \in \mathcal{G}_k$ in the family, such a number is denoted by $G \choose g_{k_i}$ ~\cite{HGTbook}.
\end{minipage}}
\vspace{1.mm}
 
A graphlet $g_{k_i} \in \mathcal{G}_k$ is embedded in the graph $G$, if and only if there is an injective mapping $\sigma: V_{g_{k_i}} \rightarrow V$, with $e=(u,v) \in E_{g_{k_i}}$ if and only if $e'=(\sigma(u),\sigma(v)) \in E$. Table~\ref{table:graphlet_notation} shows that $|\mathcal{G}_k|=\{2,4,11\}$ when $k=\{2,3,4\}$ respectively. Further, given a family $\mathcal{G}_k = \{g_{k_1}, g_{k_2}, ..., g_{k_m}\}$ of graphlets of size $k$ nodes, we define $f(g_{k_i},G)$ as the relative frequency of any graphlet $g_{k_i} \in \mathcal{G}_k$ in the input graph $G$.

\subsection{Relationship to Graph Complement} 
The complement of a graph $G$, denoted by $\bar{G}$, is the graph defined on the same vertices as $G$ such that two vertices are connected in $\bar{G}$ if and only if they are not connected in $G$. Therefore, the graph sum $G+\bar{G}$ gives the complete graph on the set of vertices of $G$. There are direct relationships between the frequencies of graphlets and the frequencies of their complement. For each graphlet $g_{k_i}$, there exists a non-isomorphic complementary graphlet pattern $\bar{g}_{k_i}$, such that two vertices are connected in $\bar{g}_{k_i}$ if and only if they are not connected in $g_{k_i}$~\cite{HGTbook}. For example, cliques and independent sets of size $k$ nodes are pairs of complementary graphlets. Similarly, chordal cycles of size $4$ nodes are complementary to the $4$-node-$1$edge graphlet (see Table~\ref{table:graphlet_notation}). It is also worth noting that the $4$-path graphlet is a self-complementary pattern, which means the $4$-path is isomorphic to itself. From this discussion, it is clear that the number of embeddings of each graphlet $g_{k_i} \in \mathcal{G}_k$ in the input graph $G$ is equivalent to the number of embeddings of its complementary graphlet $\bar{g}_{k_i}$ in the complement graph $\bar{G}$. In other words, $f(g_{k_i},G) = f(\bar{g}_{k_i},\bar{G})$~\cite{HGTbook}.

\subsection{Relationship to Graph/Matrix Reconstruction Theorems} 
The graph reconstruction conjecture~\cite{HGTbook}, states that an undirected graph $G$ can be uniquely determined up to an isomorphism, from the set of all possible vertex-deleted subgraphs of $G$ (\ie, $\{G_v\}_{v \in V}$)~\cite{mckay1997small}. Verification of this conjecture for all possible graphs up to $6$ vertices was carried by Kelly~\cite{kelly1957}, and later was extended to up to $11$ vertices by McKay~\cite{mckay1997small}. Clearly, if two graphs are isomorphic (\ie, $G \cong G'$), then their graphlet frequencies would be the same (\ie, $f_k(G) = f_k(G')$), but the reverse remains a conjecture for the general case of graphs. In contrast, the matrix reconstruction theorem has been resolved~\cite{manvel1971reconstruction}, which states that any $N \times N$ matrix can be reconstructed from its list of all possible principal minors obtained by the deletion of the $k$-th row and the $k$-th column~\cite{manvel1971reconstruction}, which is the foundation of a class of graph kernels called the \emph{graphlet kernel}~\cite{shervashidze2009efficient}.

\subsection{Related Work}
In this section, we briefly discuss some of the related work, highlighting various graph mining and machine learning tasks that would benefit from our approach. Much of the previous work focused on counting certain types of graphlets (\eg, only connected graphlets such as cliques and cycles)~\cite{kloks2000finding,wernicke2006fanmod,hovcevar2014combinatorial}. However, a number of graph mining and machine learning tasks rely on counting all graphlets of a certain size. 

For example, some previous work used the full spectrum of graphlet frequencies to define a domain-independent coordinate system in which collections of graphs can be compactly represented and analyzed within a common space~\cite{ugander2013subgraph}. Moreover, a variety of graph kernels have been proposed in machine learning (\eg, graphlet, subtree, and random walk kernels)~\cite{vishwanathan2010graph,costa2010fast,shervashidze2009efficient} to bridge the gap between graph learning and kernel methods. And some types of the graph kernels, in particular the graphlet kernel, rely on counting all graphlets. However, a general limitation of most graph kernels (including the graphlet kernel) is that they scale poorly to large graphs with more than few hundreds/thousands of nodes~\cite{vishwanathan2010graph}. Thus, our fast algorithms would speedup the computations of these methods and their related applications in graph modeling, similarity, and comparisons. 

Recently, there is an increased interest in sampling and other heuristic approaches for obtaining approximate counts of various graphlets~\cite{bhuiyan2012guise,gonen2009approximating}.
However, our approach focuses on exact graphlet counting and thus sampling methods are outside the scope of this paper. Nevertheless, the analysis and combinatorial arguments we show in this paper can be used along with efficient sampling methods to provide more accurate and efficient approximations. 

In addition, the aim and scope of this paper is different from the aforementioned problem of graph reconstruction. While graph reconstruction tries to test for the notion of isomorphism and structure equivalence between graphs, our goal is to relax the notion of equivalence to some form of \emph{structural similarity} between graphs, such that the graph similarity is measured using the feature representation of graphlets.

\section{Framework}
\label{sec:framework}
In this section, we describe our approach for graphlet counting that takes only a fraction of the time to compute when compared with the current methods used. We introduce a number of combinatorial arguments that we show for different graphlets. The proposed graphlet counting algorithm leverages these combinatorial arguments to obtain significant improvement on the scalability of graphlet counting. For each edge, we count only a few graphlets, and with these counts along with the combinatorial arguments, we derive the exact counts of the others in constant time.

\subsection{Searching Edge Neighborhoods}  

Our proposed algorithm iterates over all the edges of the input graph $G = (V,E)$. For each edge $e=(u,v) \in E$, we define the \emph{neighborhood} of an edge $e$, denoted by $\mathcal{N}(e)$, as the set of all nodes that are connected to the endpoints of $e$ --- \ie, $\mathcal{N}(e) = \{\mathcal{N}(u) \setminus \{v\}\} \cup \{\mathcal{N}(v) \setminus \{u\}\}$, where $\mathcal{N}(u) \textrm{ and } \mathcal{N}(v)$ are the set of neighbors of $u \textrm{ and } v$ respectively. Given a single edge $e=(u,v) \in E$, we explore the subgraph surrounding this edge --- \ie, the subgraph induced by both its endpoints and the nodes in its neighborhood. We call this subgraph the \emph{egonet} of the edge $e$, where $e$ is the center (ego) of the subgraph.

We search for possible graphlet patterns of size $k=\{3,4\}$ in the egonets of all edges in the graph. By searching egonets of edges, we first map the problem to the local (lower-dimensional) space induced by the neighborhood of each edge, and then merge the search results for all edges. Searching over a local low-dimensional space of edge neighborhoods is clearly more efficient than searching over the global high-dimensional space of the whole graph. Moreover, searching over a local low-dimensional space of edge neighborhoods is amenable to parallel implementation, which offers additional speedup over iterative methods. Note that exhaustive search of the egonet of any edge $e \in E$ yields at least $\mathcal{O}(\Delta^{k-1})$ asymptotically, where $\Delta$ is the maximum degree in $G$. Clearly, exhaustive search is computationally intensive for large graphs, and our approach is more efficient as we will show next. 

\subsection{Counting Graphlets of Size $(k=3)$ Nodes}
Algorithm~\ref{alg:parallel-graphlet} (\textsc{TriadCensus}) shows how to count graphlets of size $k=3$ for each edge.
There are four possible graphlets of size $k=3$ nodes, where only $g_{3_1}$ (\ie, triangle patterns) and $g_{3_2}$ (\ie, $2$-star patterns) are connected graphlets (see Table~\ref{table:graphlet_notation}).

\algrenewcommand{\alglinenumber}[1]{\scriptsize#1:}
\begin{figure}
\begin{center}
\begin{minipage}{1.0\linewidth}
\begin{algorithm}[H]
\caption{\,\small{Our exact triad census algorithm for counting all $3$-node graphlets. The algorithm takes an undirected graph as input and returns the frequencies of all $3$-node graphlets $f(\mathcal{G}_3,G)$.}}
\label{alg:parallel-graphlet}

\begin{spacing}{1.2}
\fontsize{8}{9}\selectfont
\begin{algorithmic}[1]
\Procedure {TriadCensus}{$G=\left (V,E\right )$}
\State Initialize Array $X$
\parfor[$e=(u,v) \in E$]
\State $\mathrm{Star}_u = \emptyset,  \mathrm{Star}_v = \emptyset, \mathrm{Tri}_e = \emptyset$ 
\For {$w \in \mathcal{N}(u)$ \label{alg:tri_start}}
\If{$w = v$} \textbf{continue}
\EndIf
\State Add $w$ to $\mathrm{Star}_u$ and set $X(w) = 1$
\EndFor
\For {$w \in \mathcal{N}(v)$}
\If{$w = u$} \textbf{continue} \EndIf
\If{$X(w) = 1$}\Comment{found triangle}
\State Add $w$ to $\mathrm{Tri}_e$
\State Remove $w$ from $\mathrm{Star}_u$
\Else \,Add $w$ to $\mathrm{Star}_v$\label{alg:tri_end}
\EndIf
\EndFor
\State $f(g_{3_1},G) \pluseq |\mathrm{Tri}_e|$\label{alg:motifs3_st} 
\State $f(g_{3_2},G) \pluseq |\mathrm{Star}_u| + |\mathrm{Star}_v|$
\State $f(g_{3_3},G) \pluseq |V| - |\mathcal{N}(u) \cup \mathcal{N}(v)|$\label{alg:g3_3}\label{alg:motifs3_en}
\For {$w \in \mathcal{N}(u)$} $X(w) = 0$ \EndFor 
\endpar
\State $f(g_{3_1},G) = \nicefrac{1}{3}.f(g_{3_1},G)$ 
\State $f(g_{3_2},G) = \nicefrac{1}{2}.f(g_{3_2},G)$
\State $f(g_{3_4},G) = {|V| \choose 3} -  f(g_{3_1},G) - f(g_{3_2},G) - f(g_{3_3},G)$\label{alg:g3_4} 
\State \textbf{return} $f(\mathcal{G}_3,G)$
\EndProcedure
\end{algorithmic}
\end{spacing}
\end{algorithm}
\end{minipage}
\end{center}
\end{figure}

\subsubsection*{\textbf{Connected graphlets of size $k=3$.}} Lines~\ref{alg:tri_start}---\ref{alg:tri_end} of Algorithm~\ref{alg:parallel-graphlet} show how to find and count triangles incident to an edge. For any edge $e = (u,v)$, a triangle $(u,v,w)$ exists, if and only if $w$ is connected to \emph{both} $u$ and $v$. Let $\mathrm{Tri}_e$ be the set of all nodes that form a triangle with $e = (u,v)$, and $|\mathrm{Tri}_e|$ be the number of such triangles. Then, $\mathrm{Tri}_e$ is the set of overlapping nodes in the neighborhoods of $u$ and $v$ --- $\mathrm{Tri}_e = \mathcal{N}(u) \cap \mathcal{N}(v)$. Note that Algorithm~\ref{alg:parallel-graphlet} counts each triangle three times (one time for each edge in the triangle), and therefore we divide the total count by $3$ as in Equation~\eqref{eq:triangle},

\begin{equation}\label{eq:triangle}
f(g_{3_1},G) = \frac{1}{3}.\sum\limits_{e=(u,v) \in E} |\mathrm{Tri}_e|
\end{equation}

Now we need to count $2$-star patterns (\ie, $g_{3_2}$). For any edge $e = (u,v)$, let $\textrm{Star}_e$ be the set of all nodes that form a $2$-star with $e$, and $|\textrm{Star}_e|$ be the number of such star patterns. A $2$-star pattern $(u,v,w)$ exists, if and only if $w$ is connected to \emph{either} $u$ or $v$ but not both. Accordingly, $\textrm{Star}_e = \mathrm{Star}_u \cup \mathrm{Star}_v$, where $\mathrm{Star}_u$ and $\mathrm{Star}_v$ are the set of nodes that form a $2$-star with $e$ centered at $u$ and $v$ respectively. More formally, $\mathrm{Star}_u$ can be defined as $\mathrm{Star}_u = \{ w \in \mathcal{N}(u) \setminus \{v\} | w \notin \mathcal{N}(v) \}$, and $\mathrm{Star}_v$ can be defined as $\mathrm{Star}_v = \{ w \in \mathcal{N}(v) \setminus \{u\} | w \notin \mathcal{N}(u) \}$. 

Similar to counting triangles, Algorithm~\ref{alg:parallel-graphlet} counts each $2$-star pattern two times (one time for each edge in the $2$-star). Thus, we divide the sum for all edges by $2$ as follows, 
\begin{equation}\label{eq:2-star}
f(g_{3_2},G) = \frac{1}{2}.\sum\limits_{e=(u,v) \in E} |\mathrm{Star}_u|+|\mathrm{Star}_v|
\end{equation}

\subsubsection*{\textbf{Disconnected graphlets of size $k=3$.}}
There are two disconnected graphlets of size $k=3$ nodes, $g_{3_3}$ (\ie, the $3$-node-1-edge pattern) and $g_{3_4}$  (\ie, the independent set defined on $3$ nodes) (see Table~\ref{table:graphlet_notation}). Lines~\ref{alg:g3_3} and \ref{alg:g3_4} show how to count these patterns. 

Equation~\eqref{eq:g3_3} shows that the number of $3$-node-1-edge graphlets per edge $e$ is equivalent to the number of all nodes that are not in the neighborhood subgraph (egonet) of edge $e$ (\ie, $V \setminus \{\mathcal{N}(u) \cup \mathcal{N}(v)\}$),
\begin{equation}\label{eq:g3_3}
f(g_{3_3},G) = \sum\limits_{e=(u,v) \in E} |V| - |\mathcal{N}(u) \cup \mathcal{N}(v)|
\end{equation} 

\noindent
where $|\mathcal{N}(u) \cup \mathcal{N}(v)| = |\mathrm{Tri}_e|+|\mathrm{Star_e}|+|\{u,v\}|$. Note that the number of $3$-node-1-edge graphlets can be computed in $o(1)$ for each edge.

Given that the total number of graphlets of size $3$ nodes is $N \choose 3$, Equation~\eqref{eq:g3_4} shows how to compute the frequency of $g_{3_4}$, which clearly can be done in $o(1)$,
\begin{equation}\label{eq:g3_4}
f(g_{3_4},G) = {|V| \choose 3} -  \big( f(g_{3_1},G) + f(g_{3_2},G) + f(g_{3_3},G) \big)
\end{equation}

The complexity of counting all graphlets of size $k=3$ is $\mathcal{O}(|E|.\Delta)$ asymptotically as we show next in Lemma~\ref{thm:graphlet_3}. 
\begin{lemma}\label{thm:graphlet_3}
Algorithm~\ref{alg:parallel-graphlet} counts all graphlets of size $k=3$--nodes in $\mathcal{O}(|E|.\Delta)$. 
\end{lemma}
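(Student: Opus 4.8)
The plan is to bound the running time by analyzing the work performed in a single iteration of the main loop (one edge $e=(u,v)$) and then summing that cost over all edges, finishing with a standard degree-sum inequality. The stated $\mathcal{O}(|E|\cdot\Delta)$ is the aggregate work across all iterations of the \textbf{parallel for} (parallelism only affects wall-clock time, not total work). Since the post-loop steps --- normalizing $f(g_{3_1},G)$ and $f(g_{3_2},G)$ and computing $f(g_{3_4},G)$ from the $\binom{|V|}{3}$ identity --- are each $\mathcal{O}(1)$, the asymptotic cost is governed entirely by the body of the loop.

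First I would argue that processing a single edge $e=(u,v)$ costs $\mathcal{O}(\deg(u)+\deg(v))$. The first loop scans $\mathcal{N}(u)$ and, per neighbor, performs a constant-time insertion into $\mathrm{Star}_u$ and a constant-time write $X(w)=1$, for $\mathcal{O}(\deg(u))$ total. The second loop scans $\mathcal{N}(v)$, and each iteration performs only a constant-time membership test $X(w)=1$ (an array lookup) together with a constant number of set updates, giving $\mathcal{O}(\deg(v))$. The three count updates are $\mathcal{O}(1)$ provided the cardinalities $|\mathrm{Tri}_e|$, $|\mathrm{Star}_u|$, $|\mathrm{Star}_v|$ are maintained as running counters, and provided one uses the identity $|\mathcal{N}(u)\cup\mathcal{N}(v)| = |\mathrm{Tri}_e|+|\mathrm{Star}_e|+2$ stated above so that $f(g_{3_3},G)$ requires no fresh union computation. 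The crucial last step is the reset loop, which clears $X$ only over $\mathcal{N}(u)$ rather than over all of $V$; this keeps the reset at $\mathcal{O}(\deg(u))$ and is precisely what prevents a blowup to $\mathcal{O}(|E|\cdot|V|)$.

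Summing the per-edge cost over all edges gives
\begin{equation*}
\sum_{e=(u,v)\in E}\mathcal{O}\big(\deg(u)+\deg(v)\big)=\mathcal{O}\!\left(\sum_{v\in V}\deg(v)^2\right),
\end{equation*}
since each vertex $w$ contributes its degree once per incident edge, i.e. $\deg(w)$ times, for a total of $\deg(w)^2$. I would then bound $\sum_{v}\deg(v)^2\le \Delta\sum_{v}\deg(v)=2\Delta|E|$, using that every degree is at most $\Delta$ together with the handshake identity $\sum_v\deg(v)=2|E|$. This yields the claimed $\mathcal{O}(|E|\cdot\Delta)$ bound.

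The main obstacle is not the arithmetic but justifying the $\mathcal{O}(1)$-per-operation claims that make the per-edge cost linear in the incident degrees. Specifically, one must argue that $X$ is a dense array indexed by vertex id so that queries and writes are genuinely constant time, that the set sizes are tracked incrementally rather than recomputed, and --- most importantly --- that restricting the reset to $\mathcal{N}(u)$ still leaves $X$ fully zeroed before the next edge is processed. The latter holds because only entries $X(w)$ with $w\in\mathcal{N}(u)$ are ever set to $1$ in an iteration, so clearing exactly those restores the invariant. Once these bookkeeping invariants are established, the degree-sum argument is routine.
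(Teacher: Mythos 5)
Your proof is correct and takes essentially the same route as the paper's: a per-edge cost of $\mathcal{O}(|\mathcal{N}(u)|+|\mathcal{N}(v)|)=\mathcal{O}(\Delta)$ for finding $\mathrm{Tri}_e$ and $\mathrm{Star}_e$, constant time for the disconnected counts, summed over all edges to give $\mathcal{O}(|E|\cdot\Delta)$. Your extra care --- the intermediate bound $\sum_{v}\deg(v)^2\le 2\Delta|E|$ via the handshake identity, and the bookkeeping invariants (dense array $X$, incremental counters, resetting $X$ only over $\mathcal{N}(u)$) --- merely makes explicit what the paper asserts directly, and does not change the argument.
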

\begin{proof}
\noindent
For each edge $e=(u,v)$ such that $e \in E$, the runtime complexity of counting all triangle and $2$-star patterns incident to $e$ (\ie, $\mathrm{Tri}_e, \textrm{Star}_e$ respectively) is $O(|\mathcal{N}(u)|+|\mathcal{N}(v)|)$, and is asymptotically $O(\Delta)$ where $\Delta$ is the maximum degree in the graph. Further, the runtime complexity of counting all $3$-node-1-edge patterns of size $k=3$ incident to $e$ can be counted in constant time $o(1)$. Therefore, the total runtime complexity for counting all graphlets of size $k=3$ in the graph is $\mathcal{O} \Big(\sum\limits_{e \in E} (\Delta + o(1)) \Big) = \mathcal{O}(|E|.\Delta)$.       
\end{proof}

\section{Counting Graphlets of Size $(k=4)$ Nodes}
\label{sec:motifs-k=4}
An exhaustive search of the egonet of any edge to count all $4$-node graphlets independently yields $\mathcal{O}(\Delta^{3})$ asymptotically, where $\Delta$ is the maximum degree in $G$. Clearly, exhaustive search is computationally intensive for large graphs. On the other hand, our approach is hierarchical and more efficient as we show next. 

For each edge $e=(u,v)$, we start by finding triangles and $2$-star patterns. Our central principle is that any $4$-node graphlet $g_{4_i}$ can be decomposed into four $3$-node graphlets~\cite{HGTbook}, obtained by deleting one node from $g_{4_i}$ each time. Thus, we jointly count all possible $4$-node graphlets by leveraging the knowledge obtained from finding $3$-node graphlets and some combinatorial arguments that describe the relationships between pairs of graphlets. We summarize this procedure in the following steps:
\begin{list}{{$\bullet$}}{\itemsep=2px}
\item\underline{\textsc{Step~$1$}:} For each edge $e$, find all neighborhood nodes forming triangle and $2$-star patterns with $e$.
\item\underline{\textsc{Step~$2$}:} For each edge $e$, use the knowledge from step~$1$ to count only $4$-cliques and $4$-cycles.
\item\underline{\textsc{Step~$3$}:} For each edge $e$, use the knowledge from step~$1$ and some combinatorial arguments to compute unrestricted counts for all $4$-node graphlets in constant time.
\item\underline{\textsc{Step~$4$}:} Merge the counts from all edges in the graph, and use combinatorial arguments involving unrestricted counts to obtain the counts of all other graphlets. 
\end{list}

Note that we refer to the unrestricted counts as the counts that can be computed in constant time and using only the knowledge obtained from step~$1$. Next, we discuss the details of our approach. We start by discussing the graphlet transition diagram to show the pairwise relationships between different $4$-node graphlets. Then, we discuss a general principle for counting $4$-node graphlets, which leverages the graphlet transition diagram and some combinatorial arguments to improve the performance of graphlet counting.

\subsection{Graphlet Transition Diagram}
Assume that each graphlet is a state, Fig.~\ref{fig:transition_diagram} shows all possible \emph{$\pm 1$ edge} transitions between the states of all $4$-node graphlets. We can transition from one graphlet to another by the deletion (denoted by dashed right arrows) or addition (denoted by solid left arrows) of a single edge. We define six different classes of possible edge roles denoted by the colors from black to orange (see Table in the top-right corner in Fig.~\ref{fig:transition_diagram}). An \emph{edge role} is an edge-level connectivity pattern (\eg, a chord edge), where two edges belong to the same role (\ie, class) if they are similar in their topological features. For each edge, we define a topological feature vector that consists of the number of triangles and $2$-stars incident to this edge. Then, we classify edges to one of the six roles based on their feature vectors. Thus, all edges that appear in $4$-node graphlets are colored by their roles. In addition, the transition arrows are colored similar to the edge roles to denote which edge type should be deleted/added to transition from one graphlet to another. Note that a single edge deletion/addition changes the role (class) of other edges in the graphlet. The table in the top-left corner of Fig.~\ref{fig:transition_diagram} shows the number of edge roles per each graphlet.

\begin{figure}
\centering
\includegraphics[width=3.7in]{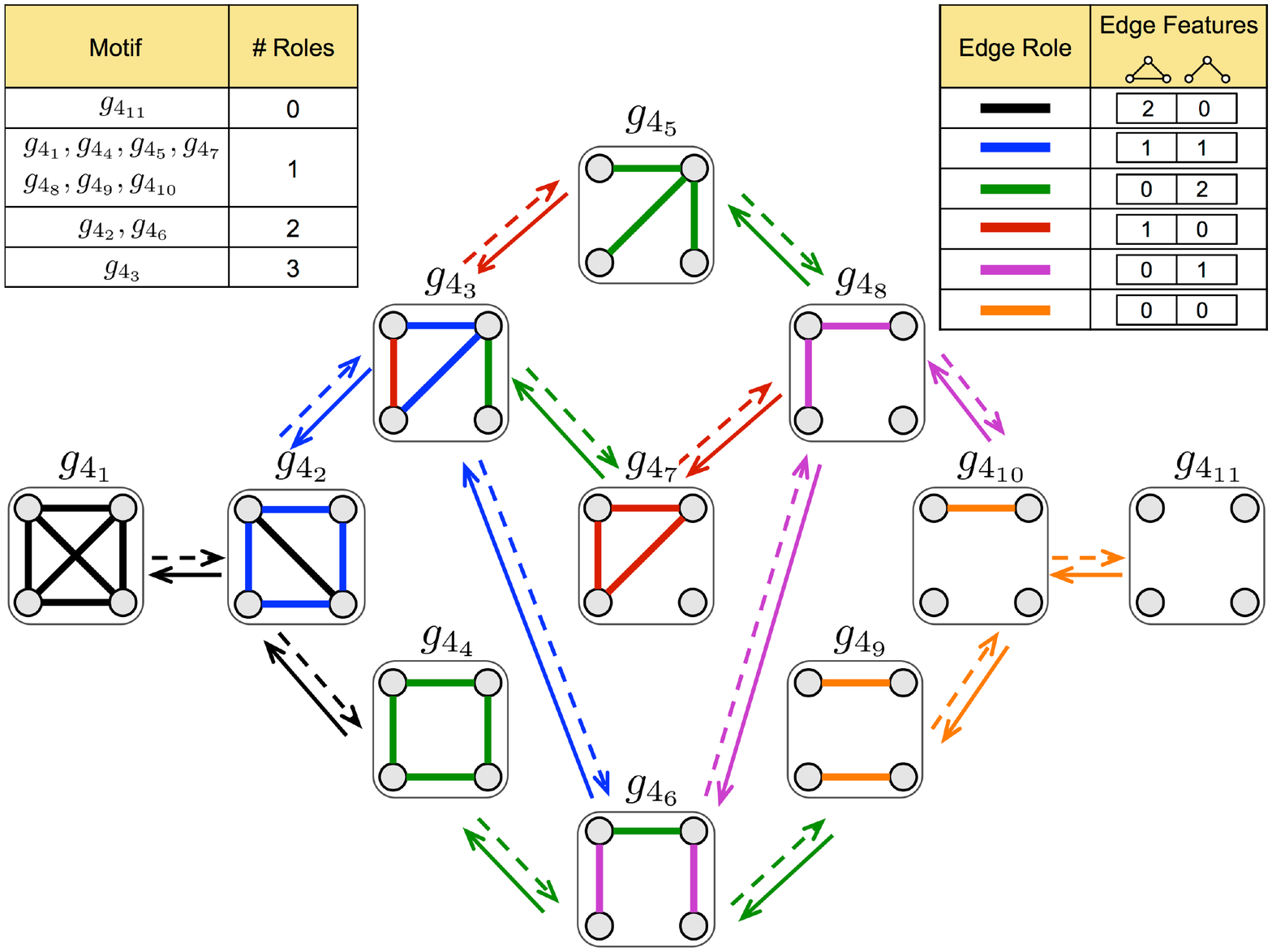}
\caption{\textbf{(4--node) graphlet transition diagram:} Figure shows all possible \emph{$\pm 1$ edge} transitions between the set of all $4$-node graphlets. Dashed right arrows denote the deletion of one edge to transition from one graphlet to another. Solid left arrows denote the addition of one edge to transition from one graphlet to another. Edges are colored by their feature-based roles, where the set of feature are defined by the number of triangles and $2$-stars incident to an edge (see Table in the top-right corner). We define six different classes of edge roles colored from black to orange (see Table in the top-right corner). Dashed/solid arrows are colored similar to the edge roles to denote which edge would be deleted/added to transition from one graphlet to another. The table in the top-left corner shows the number of edge roles per each graphlet.
}
\label{fig:transition_diagram}
\end{figure} 

For example, consider the $4$-clique graphlet ($g_{4_1}$), where each edge participates exactly in two triangles. Therefore, all the edges in a $4$-clique graphlet ($g_{4_1}$) belong to the first role (denoted by the black color). Similarly, consider the $4$-chordalcycle ($g_{4_2}$), where each edge (except the chord edge) participates exactly in one triangle and one $2$-star. Therefore, all edges in a $4$-chordalcycle "$g_{4_2}$" belong to the second role (denoted by the blue color) except for the chord edge which belongs to the first role (denoted by the black color). Fig.~\ref{fig:transition_diagram} shows how to transition from the $4$-clique to the $4$-chordalcycle "$g_{4_2}$" by deleting one (any) edge from the $4$-clique.

\subsection{General Principle for Counting Graphlets of size $k=4$}
Generally speaking, suppose we have $N^{(e)}$ distinct $4$-node subgraphs that contains an edge $e = (u,v)$, 
\begin{equation}
N^{(e)} = \big|\big\{\{u,v,w,r\} \; | \; w,r \in V \setminus \{u,v\} \wedge w \neq r\big\}\big|
\end{equation}

Each subgraph $\{u,v,w,r\}$ in this collection may satisfy one or two properties $a_i, a_j \in A = \{T, S_u, S_v, I \}$. These properties describe the topological properties of nodes $w$ and $r$ with respect to edge $e$, such that $A_w = a_i$ if $\{u,v,w\}$ forms subgraph pattern $a_i$, and $A_r = a_j$ if $\{u,v,r\}$ forms subgraph pattern $a_j$. For example, $A_w = T$ if $w$ forms a triangle with $e$, and $A_w = S_u \textrm{ or } S_v$ if $w$ forms a $2$-star with $e$ centered around $u$ or $v$ respectively. Also, $A_w = I$ if $w$ is independent (disconnected) from $e$. We clarify these properties by example in Fig.~\ref{fig:egonet-example}.

\begin{figure}
\centering
\includegraphics[width=2.6in]{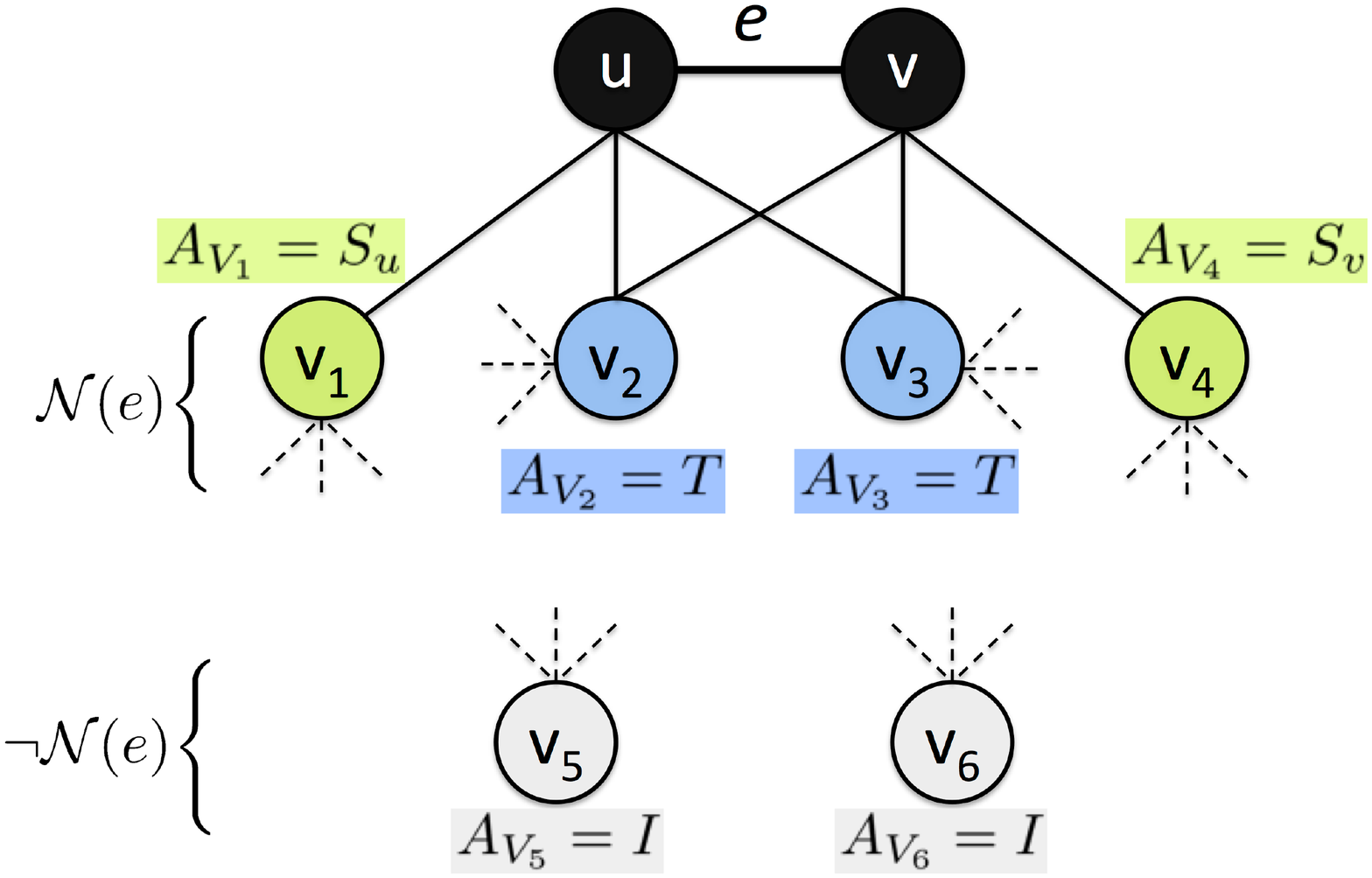}
\caption{
Let $T$ denote the nodes forming triangles with edge $(u,v)$ (\ie, $V_2,V_3$), whereas $S_u$ and $S_v$ denote the nodes forming $2$-stars centered at $u$ and $v$ respectively (\ie, $V_1,V_4$), and let $I$ denote the nodes that are not connected to edge $e$ (\ie, $V_5,V_6$). Further, the dotted lines represent edges incident to these nodes.
}
\label{fig:egonet-example}
\end{figure}

Let $N^{(e)}_{a_i,a_j}$ denote the number having properties $a_i, a_j \in A$,

\begin{equation}\label{eqn:n_ai_aj}
N^{(e)}_{a_i,a_j} = \Bigg|\Bigg\{\{u,v,w,r\} \; \Big| 
\begin{subarray}{l} 
w,r \in V \setminus \{u,v\} \\
\wedge w \neq r \\
\wedge A_w = a_i, A_r =a_j 
\end{subarray}
\Bigg\}\Bigg|
\end{equation}

Now that we defined the topological properties of nodes $w$ and $r$ relative to edge $e$, we need to define whether nodes $w$ and $r$ are connected themselves. Let $e'_{wr}$ represent whether $w$ and $r$ are connected or not, such that $e'_{wr} = 1$ if $(w,r) \in E$ and $e'_{wr} = 0$ otherwise. Accordingly, let $N^{(e)}_{a_i,a_j,e'_{wr}}$ denotes the number of $4$-node graphlets $\{u,v,w,r\}$, where $w,r$ satisfy property $a_i,a_j \in A$ and $e'_{wr} \in \{0,1\}$,

\begin{equation}\label{eqn:n_ai_aj_1}
N^{(e)}_{a_i,a_j,e'_{wr}} = \Bigg|\Bigg\{\{u,v,w,r\} \; \Bigg| 
\begin{subarray}{l} 
w,r \in V \setminus \{u,v\} \\
\wedge w \neq r \\
\wedge A_w = a_i, A_r =a_j \\
\wedge e'_{wr} \in \{0,1\}
\end{subarray}
\Bigg\}\Bigg|
\end{equation}

For example, $N^{(e)}_{T,T,1}$ is the number of all graphlets $\{u,v,w,r\}$ containing edge $e$, where both $w$ and $r$ are forming triangles with $e$ and there exist an edge between $w$ and $r$. Using Equations~\eqref{eqn:n_ai_aj} and ~\eqref{eqn:n_ai_aj_1}, we provide a general principle for graphlet counting in the following theorem. 

\begin{theorem}{General Principle for Graphlet Counting:}\label{thm:inc_exc}
Given a graph $G$, for any edge $e=(u,v)$ in $G$, and for any properties $a_i, a_j \in A$, the number of $4$-node graphlets $\{u,v,w,r\}$ satisfies the following rule, 
\begin{equation}
N^{(e)}_{a_i,a_j,0} = N^{(e)}_{a_i,a_j} - N^{(e)}_{a_i,a_j,1}
\end{equation}
\end{theorem}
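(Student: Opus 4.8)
The plan is to prove the identity by a simple partition (inclusion--exclusion) argument applied to the collection of $4$-node subgraphs containing the edge $e=(u,v)$ that realize the fixed role pair $a_i,a_j$. The key observation is that the defining condition for $N^{(e)}_{a_i,a_j}$ in Equation~\eqref{eqn:n_ai_aj} is \emph{identical} to the condition defining $N^{(e)}_{a_i,a_j,e'_{wr}}$ in Equation~\eqref{eqn:n_ai_aj_1}, except that the latter additionally pins down the value of $e'_{wr}$. Since, by its definition, $e'_{wr}$ is the indicator that equals $1$ when $(w,r)\in E$ and $0$ otherwise, every subgraph $\{u,v,w,r\}$ counted by $N^{(e)}_{a_i,a_j}$ lies in exactly one of the two cases $e'_{wr}=0$ or $e'_{wr}=1$.

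Concretely, first I would fix $a_i,a_j\in A$ and let $S$ be the set of all $\{u,v,w,r\}$ with $w,r\in V\setminus\{u,v\}$, $w\neq r$, $A_w=a_i$, and $A_r=a_j$, so that $|S|=N^{(e)}_{a_i,a_j}$ under whatever counting convention is used in Equation~\eqref{eqn:n_ai_aj}. I would then split $S=S_0\cup S_1$, where $S_0$ (resp.\ $S_1$) gathers the subgraphs with $(w,r)\notin E$ (resp.\ $(w,r)\in E$). Because both $S_0$ and $S_1$ inherit exactly the same counting convention from $S$, we have $|S_0|=N^{(e)}_{a_i,a_j,0}$ and $|S_1|=N^{(e)}_{a_i,a_j,1}$ directly from Equation~\eqref{eqn:n_ai_aj_1}.

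Finally, since the single edge $(w,r)$ is either present or absent---two mutually exclusive and jointly exhaustive events---the sets $S_0$ and $S_1$ are disjoint and their union is all of $S$. Hence $|S|=|S_0|+|S_1|$, \ie $N^{(e)}_{a_i,a_j}=N^{(e)}_{a_i,a_j,0}+N^{(e)}_{a_i,a_j,1}$, and rearranging gives the asserted rule $N^{(e)}_{a_i,a_j,0}=N^{(e)}_{a_i,a_j}-N^{(e)}_{a_i,a_j,1}$. I do not expect any genuine obstacle: the statement is a direct consequence of the binary dichotomy between presence and absence of $(w,r)$, so the only points needing care are that the role constraints $A_w=a_i$ and $A_r=a_j$ are untouched by conditioning on $e'_{wr}$, and that the diagonal $w=r$ is excluded uniformly in all three quantities so nothing is double-counted or dropped at the boundary. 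The real combinatorial work of the framework lies not in this identity but in subsequently evaluating the individual terms $N^{(e)}_{a_i,a_j}$ and $N^{(e)}_{a_i,a_j,1}$ through the triangle and $2$-star counts obtained in Step~$1$.
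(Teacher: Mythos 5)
Your proposal is correct and follows essentially the same route as the paper: the paper's proof is the same elementary dichotomy on whether $(w,r)\in E$, merely phrased as inclusion--exclusion cancellation (a subgraph with $e'_{wr}=1$ is counted once in $N^{(e)}_{a_i,a_j}$ and once in $N^{(e)}_{a_i,a_j,1}$, so it contributes zero to the difference), whereas you phrase it as a disjoint partition $S=S_0\cup S_1$ and rearrange. Your explicit attention to the counting convention being inherited uniformly by $S_0$ and $S_1$ is a mild improvement in rigor, but the argument is the same.
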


\begin{proof}
Suppose there is a subgraph $\{u,v,w,r\}$ containing edge $e$, where nodes $w$ and $r$ satisfy $a_i, a_j$ properties respectively, and $(w,r) \in E$. Then the expression on the right side counts this subgraph once in the $N^{(e)}_{a_i,a_j}$ term, and once in the $N^{(e)}_{a_i,a_j,1}$. By the principle of inclusion-exclusion~\cite{stanley1986enumerative}, the total contribution of the subgraph $\{u,v,w,r\}$ in $N^{(e)}_{a_i,a_j,0}$ is zero. Thus, $N^{(e)}_{a_i,a_j,0}$ is the number of graphlets having properties $a_i, a_j$, but $(w,r) \notin E$.        
\end{proof}

Clearly, it is sufficient to compute $N^{(e)}_{a_i,a_j}$ and $N^{(e)}_{a_i,a_j,1}$ only, and use Theorem~\ref{thm:inc_exc} to compute $N^{(e)}_{a_i,a_j,0}$ in constant time. Note that $N^{(e)}_{a_i,a_j}$ is an unrestricted count and can be computed in constant time using the knowledge we have from finding $3$-node graphlets.

To simplify the discussion in the following sections, we precisely show how to compute $N^{(e)}_{a_i,a_j}$, the number of $4$-node graphlets $\{u,v,w,r\}$ such that $w,r$ satisfy property $a_i,a_j \in A$ respectively. Let $\mathcal{W}_{a_i}$ be the set of nodes with property $a_i \in A$ (\ie, $\mathcal{W}_{a_i} = \{ w \in V \setminus \{u,v\} \;|\; A_w = a_i, \forall a_i \in A\}$), and similarly $\mathcal{R}_{a_j}$ be the set of nodes with property $a_j \in A$ (\ie, $\mathcal{R}_{a_j} = \{ r \in V \setminus \{u,v\} \;|\; A_r = a_j, \forall a_j \in A\}$). If $a_i = a_j$, then $\mathcal{W}_{a_i} = \mathcal{R}_{a_j}$. Thus,

\begin{equation}\label{eqn:comb_n_ai_ai}
N^{(e)}_{a_i,a_i} = {|\mathcal{W}_{a_i}| \choose 2} = \frac{1}{2}.(|\mathcal{W}_{a_i}| - 1).|\mathcal{W}_{a_i}|
\end{equation}   

However, if $a_i \neq a_j$, then $\mathcal{W}_{a_i}$ and $\mathcal{R}_{a_j}$ are mutually exclusive (\ie, $\mathcal{W}_{a_i} \cap \mathcal{R}_{a_j} = \emptyset$). 

\noindent
Thus, we get the following,

\begin{equation}\label{eqn:comb_n_ai_aj}
N^{(e)}_{a_i,a_j} = |\mathcal{W}_{a_i}|.|\mathcal{R}_{a_j}| 
\end{equation}

\subsection{Analysis \& Combinatorial Arguments}
\label{sec:lemmas}
In this section, we discuss combinatorial arguments involving unrestricted counts that can be computed computed directly from our knowledge of $3$-node graphlets.  These combinatorial arguments capture the relationships between the counts of pairs of $4$-node graphlets. The proofs of these relationships are based on Theorem~\ref{thm:inc_exc} and the transition diagram in Fig.~\ref{fig:transition_diagram}. For each pair of graphlets $g_{4_i}$ and $g_{4_j}$, we show the relationship for each edge in the graph (in Corollary~\ref{thm:cliq}--\ref{thm:4node-1edg}), then we show a generalization for the whole graph (in Lemma~\ref{thm:rel-cliq-chcy}--\ref{thm:rel-4node-2edge-1edge}).

\subsubsection{Relationship between $4$-Cliques \& $4$-ChordalCycles}
\begin{cor}\label{thm:cliq}
For any edge $e=(u,v)$ in the graph, the number of $4$-cliques containing $e$ is $N^{(e)}_{T,T,1}$.
\end{cor}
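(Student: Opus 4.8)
The plan is to prove this corollary by a direct double-counting (bijection) argument, unfolding both sides into explicit edge-presence conditions and verifying that they coincide. Since the statement is essentially definitional, I expect no deep obstacle; the work lies in carefully matching the ``triangle'' property $T$ to the underlying edges and in confirming that the correspondence is exactly one-to-one, so that no subgraph is counted twice.

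First I would recall that, by Definition~\ref{def:ind_graphlet}, a $4$-clique on the vertex set $\{u,v,w,r\}$ is the induced subgraph in which all ${4 \choose 2}=6$ edges are present. Fix an edge $e=(u,v)\in E$. Any $4$-clique containing $e$ is determined by the two remaining vertices $w,r \in V\setminus\{u,v\}$ with $w\neq r$, and requires exactly the five further edges $(u,w),(v,w),(u,r),(v,r),(w,r)$ to all lie in $E$, since the sixth edge $(u,v)=e$ is present by hypothesis.

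Next I would translate these five edge conditions into the language of the count $N^{(e)}_{T,T,1}$. Recalling that $A_w=T$ holds precisely when $\{u,v,w\}$ forms a triangle with $e$, i.e.\ when both $(u,w)\in E$ and $(v,w)\in E$, the requirement $(u,w),(v,w)\in E$ is exactly $A_w=T$; symmetrically $(u,r),(v,r)\in E$ is exactly $A_r=T$; and $(w,r)\in E$ is precisely $e'_{wr}=1$. Hence the unordered pair $\{w,r\}$ completes $e$ to a $4$-clique if and only if $A_w=T$, $A_r=T$, and $e'_{wr}=1$, which is exactly the membership condition defining $N^{(e)}_{T,T,1}$ in Equation~\eqref{eqn:n_ai_aj_1} with $a_i=a_j=T$.

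Finally I would conclude that the map sending a $4$-clique containing $e$ to its associated subgraph $\{u,v,w,r\}$ is a bijection onto the collection counted by $N^{(e)}_{T,T,1}$; because that count is taken over unordered pairs, consistent with the ${|\mathcal{W}_T| \choose 2}$ form in Equation~\eqref{eqn:comb_n_ai_ai}, no $4$-clique is enumerated more than once. Therefore the number of $4$-cliques containing $e$ equals $N^{(e)}_{T,T,1}$. The only point demanding care is this last unordered/no-double-counting check, since an ordered reading of the pair $(w,r)$ would overcount by a factor of two; confirming that the set-based definition forecloses this is the main, and only mild, subtlety.
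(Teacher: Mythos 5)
Your proof is correct, and it is essentially the argument the paper leaves implicit: the paper states this corollary without proof, treating it as immediate from the definition of $N^{(e)}_{T,T,1}$ in Equation~\eqref{eqn:n_ai_aj_1} (reinforced by its worked example stating that $N^{(e)}_{T,T,1}$ counts subgraphs where both $w$ and $r$ form triangles with $e$ and $(w,r)\in E$). Your careful unfolding of the five edge conditions and the explicit check that the set-based (unordered) definition prevents double counting simply makes that definitional identification rigorous.
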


\begin{cor}\label{thm:chcy}
For any edge $e=(u,v)$ in the graph, the number of $4$-chordalcycles, where $e$ is the chord edge of the cycle (denoted by the black color in Fig.~\ref{fig:transition_diagram}), is $N^{(e)}_{T,T,0}$.
\end{cor}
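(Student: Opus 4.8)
The plan is to establish a bijection between the $4$-chordalcycles having $e=(u,v)$ as their chord and the $4$-node subgraphs counted by $N^{(e)}_{T,T,0}$. First I would recall the combinatorial structure of the chordalcycle graphlet $g_{4_2}$: it is the complete graph $K_4$ with exactly one edge removed, equivalently a $4$-cycle together with a single chord (consistent with its density $0.83 = \nicefrac{5}{6}$ and its two triangles in Table~\ref{table:graphlet_notation}). In this pattern the chord joins the two degree-$3$ vertices, while the remaining two degree-$2$ vertices are each adjacent to both chord endpoints but are nonadjacent to one another. Hence, for $e$ to play the role of the chord, its endpoints $u$ and $v$ must be exactly the two degree-$3$ vertices; equivalently, $e$ is the unique edge lying in two triangles, which is precisely the black edge role of Fig.~\ref{fig:transition_diagram}.

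Next I would translate this structural requirement into the property language of Eq.~\eqref{eqn:n_ai_aj_1}. Writing the graphlet as $\{u,v,w,r\}$, the condition that $w$ and $r$ are the degree-$2$ vertices adjacent to both $u$ and $v$ says exactly that each of $w,r$ forms a triangle with $e$, i.e.\ $A_w = A_r = T$. The condition that the single removed edge of $K_4$ is the pair $\{w,r\}$ says that $(w,r)\notin E$, i.e.\ $e'_{wr}=0$. Because graphlets are induced (Definition~\ref{def:ind_graphlet}), the induced subgraph on $\{u,v,w,r\}$ contains precisely the edges $uv,uw,vw,ur,vr$ and omits $wr$; these are exactly the five edges of $K_4\setminus\{wr\}$, with $u,v$ as the degree-$3$ endpoints. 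This shows the correspondence runs both ways: every chordalcycle with chord $e$ arises from a unique unordered pair $\{w,r\}$ with $A_w=A_r=T$ and $e'_{wr}=0$, and every such pair yields one such chordalcycle. Therefore the number of chordalcycles with $e$ as chord equals the number of unordered pairs with these properties, which is by definition $N^{(e)}_{T,T,0}$.

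As an alternative route I would derive the same conclusion from Theorem~\ref{thm:inc_exc} together with Corollary~\ref{thm:cliq}. Since $N^{(e)}_{T,T,0}=N^{(e)}_{T,T}-N^{(e)}_{T,T,1}$, and $N^{(e)}_{T,T}$ enumerates all pairs $w,r$ each forming a triangle with $e$, splitting this count according to whether $(w,r)\in E$ partitions it into $4$-cliques (when $e'_{wr}=1$) and chord-$e$ chordalcycles (when $e'_{wr}=0$). Subtracting the clique count $N^{(e)}_{T,T,1}$ identified in Corollary~\ref{thm:cliq} then leaves precisely the chordalcycles with $e$ as chord.

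I expect the only genuine subtlety to be verifying that the induced subgraph is exactly a chordalcycle and that $e$ truly occupies the chord (degree-$3$--degree-$3$) position rather than one of the ordinary cycle edges. Once the induced-subgraph convention of Definition~\ref{def:ind_graphlet} pins down which edges are present and which are absent among $\{u,v,w,r\}$, this identification is immediate, so the argument reduces to a bookkeeping verification rather than any nontrivial computation.
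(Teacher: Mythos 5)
Your proof is correct and matches the paper's intended justification: the paper states this corollary without an explicit proof (its one worked example, Corollary~\ref{thm:ttri3}, uses exactly your style of argument --- identifying the induced subgraph on $\{u,v,w,r\}$ from the conditions $A_w=A_r=T$ and $e'_{wr}=0$), and your bijection between chord-$e$ chordalcycles and pairs counted by $N^{(e)}_{T,T,0}$ is precisely that reasoning made explicit. Your alternative derivation via Theorem~\ref{thm:inc_exc} and Corollary~\ref{thm:cliq} likewise mirrors how the paper uses these corollaries in Lemma~\ref{thm:rel-cliq-chcy}, so nothing is missing.
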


\begin{lemma}\label{thm:rel-cliq-chcy}
For any graph $G$, the relationship between the counts of $4$-cliques (\ie, $f(g_{4_1},G)$) and $4$-chordalcycles (\ie, $f(g_{4_2},G)$) is,
\begin{equation}
f(g_{4_2},G) = \sum\limits_{e \in E} {|\mathrm{Tri}_e| \choose 2} - 6.f(g_{4_1},G) \nonumber
\end{equation}
\end{lemma}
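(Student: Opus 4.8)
The plan is to sum the per-edge identity from Corollary~\ref{thm:chcy} over all edges of $G$ and then rewrite the right-hand side using Theorem~\ref{thm:inc_exc} together with the combinatorial count in Equation~\eqref{eqn:comb_n_ai_ai}. The key observation feeding the substitution is that the set of triangle-forming nodes $\mathcal{W}_T$ is exactly $\mathrm{Tri}_e$, so $|\mathcal{W}_T| = |\mathrm{Tri}_e|$ and therefore $N^{(e)}_{T,T} = {|\mathrm{Tri}_e| \choose 2}$ by Equation~\eqref{eqn:comb_n_ai_ai}.

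First I would pin down the multiplicity with which each target graphlet is counted in its edge-indexed sum. By Corollary~\ref{thm:chcy}, $N^{(e)}_{T,T,0}$ counts the $4$-chordalcycles in which $e$ plays the role of the chord edge. A $4$-chordalcycle is $K_4$ with one edge removed and thus has a \emph{unique} chord (the single edge lying in both of its triangles), so summing $N^{(e)}_{T,T,0}$ over all edges counts each chordalcycle exactly once, giving $\sum_{e \in E} N^{(e)}_{T,T,0} = f(g_{4_2},G)$. Symmetrically, by Corollary~\ref{thm:cliq}, $N^{(e)}_{T,T,1}$ counts the $4$-cliques containing $e$; since every $4$-clique has exactly six edges and is counted once for each of them, $\sum_{e \in E} N^{(e)}_{T,T,1} = 6\,f(g_{4_1},G)$.

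Next I would invoke Theorem~\ref{thm:inc_exc} in the form $N^{(e)}_{T,T,0} = N^{(e)}_{T,T} - N^{(e)}_{T,T,1}$, substitute $N^{(e)}_{T,T} = {|\mathrm{Tri}_e| \choose 2}$, and sum over all $e \in E$. Combining this with the two multiplicity identities from the previous paragraph yields $f(g_{4_2},G) = \sum_{e \in E}{|\mathrm{Tri}_e| \choose 2} - 6\,f(g_{4_1},G)$, which is the claimed relationship.

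The only genuinely delicate point is the bookkeeping of multiplicities, and in particular verifying that no edge of a chordalcycle \emph{other} than its chord contributes to $N^{(e)}_{T,T,0}$. For a non-chord edge $e$, one of the two remaining vertices is an endpoint of the missing edge and hence fails to form a triangle with $e$, so the pair of triangle-nodes demanded by $N^{(e)}_{T,T,0}$ never materializes; this is what guarantees that each chordalcycle is counted with multiplicity exactly one. Dually, one checks that a $4$-clique is counted with multiplicity exactly six. Once these two counts are confirmed, everything else is a direct substitution from the stated corollaries, Theorem~\ref{thm:inc_exc}, and Equation~\eqref{eqn:comb_n_ai_ai}.
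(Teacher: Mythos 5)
Your proof is correct and follows essentially the same route as the paper's: sum the identity of Theorem~\ref{thm:inc_exc} over all edges, substitute $N^{(e)}_{T,T} = \binom{|\mathrm{Tri}_e|}{2}$ via Equation~\eqref{eqn:comb_n_ai_ai}, and convert the two edge-indexed sums using the multiplicities $6$ for $4$-cliques and $1$ for chordalcycles from Corollaries~\ref{thm:cliq} and~\ref{thm:chcy}. Your explicit check that no non-chord edge of a chordalcycle contributes to $N^{(e)}_{T,T,0}$ (because one remaining vertex is an endpoint of the missing edge and so fails the triangle property) is a welcome bit of bookkeeping that the paper leaves implicit in the corollaries, but it does not change the argument.
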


\begin{proof}
From Theorem~\ref{thm:inc_exc} and the addition principle~\cite{stanley1986enumerative}, the total count for all edges in $G$ is,  
\begin{equation}\label{eqn:add_n_TT}
\sum\limits_{e \in E} N^{(e)}_{T,T,0} = \sum\limits_{e \in E} N^{(e)}_{T,T} - \sum\limits_{e \in E} N^{(e)}_{T,T,1}
\end{equation}

Given that $N^{(e)}_{T,T}$ is the number of $4$-node subgraphs $\{u,v,w,r\}$ containing $e$, such that $A_w=T,A_r=T$. Thus, from Eq.~\eqref{eqn:comb_n_ai_ai}, $N^{(e)}_{T,T} = {|\mathrm{Tri}_e| \choose 2}$. From Corollary~\ref{thm:cliq}, each $4$-clique will be counted $6$ times (once for each edge in the clique). Thus, the total count of $4$-cliques in $G$ is $f(g_{4_1},G) = \frac{1}{6}.\sum\limits_{e \in E} N^{(e)}_{T,T,1}$. Similarly, from Corollary~\ref{thm:chcy}, each $4$-chordalcycle is counted only once for each chord edge. Thus, the total count of $4$-chordalcycles in $G$ is $f(g_{4_2},G) = \sum\limits_{e \in E} N^{(e)}_{T,T,0}$. By direct substitution in Eq.~\eqref{eqn:add_n_TT}, this lemma is true. 
\end{proof}

\subsubsection{Relationship between $4$-Cycles \& $4$-Paths}
\begin{cor}\label{thm:cyc}
For any edge $e=(u,v)$ in the graph, the number of $4$-cycles containing $e$ is $N^{(e)}_{S_u,S_v,1}$.
\end{cor}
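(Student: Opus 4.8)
The plan is to prove the corollary by a direct structural characterization: I would show that the induced $4$-cycles through $e=(u,v)$ are in one-to-one correspondence with the subgraphs $\{u,v,w,r\}$ enumerated by $N^{(e)}_{S_u,S_v,1}$. First I would fix an arbitrary $4$-cycle $C$ on the vertex set $\{u,v,w,r\}$ that contains $e$. Since every vertex of a $4$-cycle has degree exactly two and $(u,v)$ already accounts for one incident edge at each of $u$ and $v$, the second edge at $u$ and the second edge at $v$ must run to the two remaining vertices; after labeling, $u$ is joined to one of $\{w,r\}$ and $v$ to the other, while the edge $(w,r)$ closes the cycle. Crucially, the two ``diagonals'' of the cycle are absent, since otherwise the induced subgraph on $\{u,v,w,r\}$ would be a chordal cycle or a clique rather than a $4$-cycle.

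The second step is to translate this picture into the role language of Fig.~\ref{fig:transition_diagram}. The neighbor of $u$ that is not adjacent to $v$ carries role $S_u$, the neighbor of $v$ that is not adjacent to $u$ carries role $S_v$, and the closing edge gives $e'_{wr}=1$. Hence every induced $4$-cycle through $e$ produces a subgraph with one vertex in $\mathcal{W}_{S_u}$, one in $\mathcal{R}_{S_v}$, and an edge between them, so it is counted by $N^{(e)}_{S_u,S_v,1}$.

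For the converse I would take any $\{u,v,w,r\}$ contributing to $N^{(e)}_{S_u,S_v,1}$ --- say $w$ has role $S_u$, $r$ has role $S_v$, and $(w,r)\in E$ --- and simply list the edges among the four vertices. The role $S_u$ forces $(u,w)\in E$ and $(v,w)\notin E$, the role $S_v$ forces $(v,r)\in E$ and $(u,r)\notin E$, and together with $(u,v)$ and $(w,r)$ this leaves exactly the four edges $(u,v),(u,w),(w,r),(r,v)$ present and both diagonals absent --- precisely an induced $4$-cycle through $e$. Since $N^{(e)}_{S_u,S_v,1}$ enumerates unordered pairs drawn from the disjoint classes $\mathcal{W}_{S_u}$ and $\mathcal{R}_{S_v}$ (the $a_i\neq a_j$ case of Eq.~\eqref{eqn:comb_n_ai_aj}), each such cycle is counted once, so the two sides agree.

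The one point that needs care --- and what I would flag as the only real obstacle --- is the induced-subgraph constraint: I must be sure both diagonals $(v,w)$ and $(u,r)$ are genuinely absent, so that the subgraph is a $4$-cycle and not a chordal cycle or a $4$-clique. This is exactly where the $S_u/S_v$ role bookkeeping earns its keep, since $A_w=S_u$ already encodes $(v,w)\notin E$ and $A_r=S_v$ already encodes $(u,r)\notin E$. The roles therefore enforce the induced condition automatically, and beyond this careful labeling there is no substantive difficulty.
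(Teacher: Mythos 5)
Your proof is correct and takes essentially the same approach as the paper: the paper asserts this corollary without an explicit proof, treating it as immediate from the role definitions (it writes out this style of direct characterization only for the analogous Corollary~\ref{thm:ttri3}), and your two-way verification --- in particular the observation that $A_w=S_u$ and $A_r=S_v$ already encode the absence of the diagonals $(v,w)$ and $(u,r)$, so the induced condition is automatic and each $4$-cycle through $e$ is counted exactly once via Eq.~\eqref{eqn:comb_n_ai_aj} --- is precisely that implicit argument made explicit. No gaps.
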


\begin{cor}\label{thm:path}
For any edge $e=(u,v)$ in the graph, the number of $4$-paths containing $e$, where $e$ is the middle edge in the path (denoted by the green color in Fig.~\ref{fig:transition_diagram}), is $N^{(e)}_{S_u,S_v,0}$.
\end{cor}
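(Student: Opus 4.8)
The plan is to establish a bijection between the $4$-paths whose middle edge is $e=(u,v)$ and the subgraphs $\{u,v,w,r\}$ enumerated by $N^{(e)}_{S_u,S_v,0}$; this is the $(S_u,S_v)$ analogue of the $(T,T)$ argument used for the chordal cycles in Lemma~\ref{thm:rel-cliq-chcy}. First I would unfold the star properties: $A_w=S_u$ means $w\in\mathcal{N}(u)\setminus\{v\}$ with $w\notin\mathcal{N}(v)$, and $A_r=S_v$ means $r\in\mathcal{N}(v)\setminus\{u\}$ with $r\notin\mathcal{N}(u)$. Consequently the induced subgraph on $\{u,v,w,r\}$ is forced to contain exactly the edges $(w,u)$, $(u,v)=e$, and $(v,r)$, while the two ``cross'' edges $(w,v)$ and $(u,r)$ are necessarily absent by the very definitions of $S_u$ and $S_v$. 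The only edge still free to be present or absent is $(w,r)$.

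Next I would impose the third index $e'_{wr}=0$, i.e.\ $(w,r)\notin E$, as in the restricted count \eqref{eqn:n_ai_aj_1}. Under this condition the induced subgraph on $\{u,v,w,r\}$ has precisely the edge set $\{(w,u),(u,v),(v,r)\}$, which is a $4$-path with vertex sequence $w,u,v,r$ and with $e$ sitting as its middle edge (the green edge of Fig.~\ref{fig:transition_diagram}). Conversely, every $4$-path whose middle edge is $e$ attaches its two endpoints to the two distinct endpoints of $e$; labelling the endpoint adjacent to $u$ as $w$ and the one adjacent to $v$ as $r$ recovers exactly a subgraph with $A_w=S_u$, $A_r=S_v$, and $(w,r)\notin E$. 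This supplies both directions of the bijection.

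Finally I would rule out double counting. Since $S_u\neq S_v$, the sets $\mathcal{W}_{S_u}$ and $\mathcal{R}_{S_v}$ are mutually exclusive (the regime of \eqref{eqn:comb_n_ai_aj}), so the endpoint adjacent to $u$ and the endpoint adjacent to $v$ can never be interchanged, and each path is enumerated exactly once in $N^{(e)}_{S_u,S_v,0}$. As a consistency check, combining this with Corollary~\ref{thm:cyc} gives the full split of $N^{(e)}_{S_u,S_v}$ dictated by Theorem~\ref{thm:inc_exc}: when $(w,r)\in E$ the vertices $w,u,v,r$ close into a $4$-cycle, counted by $N^{(e)}_{S_u,S_v,1}$, and when $(w,r)\notin E$ they form a $4$-path, counted by $N^{(e)}_{S_u,S_v,0}$.

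The main obstacle is not any computation but the exhaustive case analysis certifying the induced-subgraph condition: I must verify that the star properties alone eliminate every competing $4$-node pattern (tailed triangle, chordal cycle, and so on), leaving the presence or absence of $(w,r)$ as the sole degree of freedom. Once that exhaustiveness is confirmed, the identity is immediate.
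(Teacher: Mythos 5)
Your proposal is correct and matches the paper's intended argument: the paper states this corollary without an explicit proof (appealing to Theorem~\ref{thm:inc_exc} and the transition diagram), and the one corollary it does prove, Corollary~\ref{thm:ttri3}, uses exactly your style of unfolding the definitions of $A_w$, $A_r$, and $e'_{wr}$ into an explicit bijection. One small remark: the ``exhaustive case analysis'' you flag as the remaining obstacle is already complete in your second paragraph, since $A_w=S_u$, $A_r=S_v$, the presence of $e=(u,v)$, and the condition $e'_{wr}=0$ together determine the adjacency status of all six vertex pairs of $\{u,v,w,r\}$, leaving no competing pattern to rule out.
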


\begin{lemma}\label{thm:rel-cyc-path}
For any graph $G$, the relationship between the counts of $4$-cycles (\ie, $f(g_{4_4},G)$) and $4$-paths (\ie, $f(g_{4_6},G)$) is,
\begin{equation}
f(g_{4_6},G) = \sum\limits_{e \in E} {|\mathrm{Star}_u|.|\mathrm{Star}_v|} - 4.f(g_{4_4},G) \nonumber
\end{equation}
\end{lemma}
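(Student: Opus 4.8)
The plan is to mirror the argument used for Lemma~\ref{thm:rel-cliq-chcy}, replacing the edge-property pair $(T,T)$ by $(S_u,S_v)$ and invoking Corollaries~\ref{thm:cyc} and~\ref{thm:path} in place of Corollaries~\ref{thm:cliq} and~\ref{thm:chcy}. First I would instantiate Theorem~\ref{thm:inc_exc} with $a_i=S_u$ and $a_j=S_v$ to get the per-edge identity $N^{(e)}_{S_u,S_v,0}=N^{(e)}_{S_u,S_v}-N^{(e)}_{S_u,S_v,1}$, and then sum over all edges by the addition principle to obtain
\begin{equation}
\sum_{e \in E} N^{(e)}_{S_u,S_v,0} = \sum_{e \in E} N^{(e)}_{S_u,S_v} - \sum_{e \in E} N^{(e)}_{S_u,S_v,1}. \nonumber
\end{equation}

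Next I would evaluate the unrestricted middle term. Because $S_u\neq S_v$, the node sets $\mathcal{W}_{S_u}=\mathrm{Star}_u$ and $\mathcal{R}_{S_v}=\mathrm{Star}_v$ are mutually exclusive, so Eq.~\eqref{eqn:comb_n_ai_aj} gives $N^{(e)}_{S_u,S_v}=|\mathrm{Star}_u|\cdot|\mathrm{Star}_v|$, an unrestricted count available directly from the $3$-node bookkeeping of Algorithm~\ref{alg:parallel-graphlet}. What remains is to tie the two restricted sums to the global frequencies $f(g_{4_4},G)$ and $f(g_{4_6},G)$, and this is where the multiplicity factors appear.

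The crux of the proof, and the only step that is not purely formal, is determining how many times each $4$-cycle and each $4$-path is recorded as $e$ ranges over all edges. By Corollary~\ref{thm:cyc}, a given $4$-cycle is counted in $N^{(e)}_{S_u,S_v,1}$ once for \emph{every} one of its four edges: on each edge the two opposite vertices attach as a $2$-star at $u$ and a $2$-star at $v$ and are themselves adjacent, so each contributes exactly one configuration with $A_w=S_u$, $A_r=S_v$, $e'_{wr}=1$. Hence $f(g_{4_4},G)=\tfrac{1}{4}\sum_{e\in E}N^{(e)}_{S_u,S_v,1}$. By contrast, Corollary~\ref{thm:path} records a $4$-path only at its \emph{middle} edge, since only there do the two endpoint vertices lie in $\mathrm{Star}_u$ and $\mathrm{Star}_v$ while staying non-adjacent; at either outer edge one of the two remaining vertices is instead independent of $e$, so the path does not land in the $(S_u,S_v,0)$ bucket. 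As the middle edge of a $P_4$ is unique, each $4$-path is counted exactly once, giving $f(g_{4_6},G)=\sum_{e\in E}N^{(e)}_{S_u,S_v,0}$.

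Finally I would substitute these three evaluations into the summed identity: the left side becomes $f(g_{4_6},G)$, the middle term becomes $\sum_{e\in E}|\mathrm{Star}_u|\cdot|\mathrm{Star}_v|$, and the last sum becomes $4\,f(g_{4_4},G)$, which is exactly the claimed relation. I expect the multiplicity accounting, the factor $4$ for cycles versus $1$ for paths, to be the main place where care is required, and I would justify it by reading the edge roles directly off the transition diagram in Fig.~\ref{fig:transition_diagram}, just as was done for the clique/chordal-cycle pair in Lemma~\ref{thm:rel-cliq-chcy}.
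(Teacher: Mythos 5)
Your proposal is correct and follows essentially the same route as the paper's own proof: instantiating Theorem~\ref{thm:inc_exc} with $(a_i,a_j)=(S_u,S_v)$, summing over edges, evaluating $N^{(e)}_{S_u,S_v}=|\mathrm{Star}_u|\cdot|\mathrm{Star}_v|$ via Eq.~\eqref{eqn:comb_n_ai_aj}, and using Corollaries~\ref{thm:cyc} and~\ref{thm:path} to assign multiplicity $4$ to each $4$-cycle and $1$ to each $4$-path before substituting back. Your added justification of why a $4$-path is recorded only at its middle edge (at an outer edge one remaining vertex is independent of $e$) is a correct elaboration of a point the paper states without argument.
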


\begin{proof}
From Theorem~\ref{thm:inc_exc} and the addition principle~\cite{stanley1986enumerative}, the total count for all edges in $G$ is,  
\begin{equation}\label{eqn:add_n_SuSv}
\sum\limits_{e \in E} N^{(e)}_{S_u,S_v,0} = \sum\limits_{e \in E} N^{(e)}_{S_u,S_v} - \sum\limits_{e \in E} N^{(e)}_{S_u,S_v,1}
\end{equation}

Given that $N^{(e)}_{S_u,S_v}$ is the number of $4$-node subgraphs $\{u,v,w,r\}$ containing $e$, such that $w,r$ $A_w=S_u,A_r=S_v$. Thus, from Eq.~\eqref{eqn:comb_n_ai_aj}, $N^{(e)}_{S_u,S_v} = |\mathrm{Star}_u|.|\mathrm{Star}_v|$. From Corollary~\ref{thm:cyc}, each $4$-cycle will be counted $4$ times (once for each edge in the cycle). Thus, the total count of $4$-cycles in $G$ is $f(g_{4_4},G) = \frac{1}{4}.\sum\limits_{e \in E} N^{(e)}_{S_u,S_v,1}$. Similarly, from Corollary~\ref{thm:path}, each $4$-path is counted only once for each middle edge in the path. Thus, the total count of $4$-paths in $G$ is $f(g_{4_6},G) = \sum\limits_{e \in E} N^{(e)}_{S_u,S_v,0}$. By direct substitution in Eq.~\eqref{eqn:add_n_SuSv}, this lemma is true. 
\end{proof}

\subsubsection{Relationship between $4$-TailedTriangles \& $4$-ChordalCycles}
\begin{cor}\label{thm:ttri}
For any edge $e=(u,v)$ in the graph, the number of $4$-tailedtriangles where $e$ is part of both the triangle and $2$-star patterns (denoted by the blue color in Fig.~\ref{fig:transition_diagram}), is $N^{(e)}_{T,S_u \vee S_v,0}$.
\end{cor}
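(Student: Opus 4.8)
The plan is to establish a structural correspondence between the $4$-tailedtriangles in which $e=(u,v)$ plays the blue edge role and the $4$-node vertex sets counted by $N^{(e)}_{T,S_u\vee S_v,0}$, and then simply read off the count. First I would make precise what it means for $e$ to be a blue edge of a tailed triangle. Recall from the transition diagram that a blue edge is one lying in exactly one triangle and exactly one $2$-star. In a tailed triangle $g_{4_3}$ the only edges lying in a triangle are the three triangle edges, and among these the two incident to the tail-bearing vertex additionally lie in exactly one $2$-star (the path running out along the tail), whereas the triangle edge opposite the tail lies in no $2$-star. Hence $e$ is a blue edge of a tailed triangle precisely when $e$ is a triangle edge and one of its endpoints carries the tail.

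Next I would translate this picture into the properties of the two remaining vertices $w,r$ of the set $\{u,v,w,r\}$. Because $e$ is an edge of the triangle, the third triangle vertex $w$ is adjacent to both $u$ and $v$, \ie $A_w=T$, so $w\in\mathrm{Tri}_e$. Because the tail hangs off one endpoint of $e$, the tail vertex $r$ is adjacent to exactly one of $u,v$, \ie $A_r=S_u$ if the tail is at $u$ and $A_r=S_v$ if the tail is at $v$; in both cases $A_r\in\{S_u,S_v\}$, which is exactly the meaning of the symbol $S_u\vee S_v$, so $r\in\mathrm{Star}_u\cup\mathrm{Star}_v$. Finally, a tailed triangle has exactly four edges (the three triangle edges plus the tail), so $w$ and $r$ must be non-adjacent; this forces $e'_{wr}=0$. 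For the converse I would check that any $w\in\mathrm{Tri}_e$ and $r\in\mathrm{Star}_u\cup\mathrm{Star}_v$ with $(w,r)\notin E$ induce on $\{u,v,w,r\}$ exactly the edge set $\{(u,v),(u,w),(v,w)\}$ together with a single tail edge, which is a tailed triangle having $e$ as a triangle edge incident to the tail vertex, \ie a blue edge. This gives the correspondence in both directions.

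It then remains only to count. The three conditions $A_w=T$, $A_r\in\{S_u,S_v\}$ and $e'_{wr}=0$ are precisely those defining $N^{(e)}_{T,S_u\vee S_v,0}$ in Eq.~\eqref{eqn:n_ai_aj_1}, so the number of blue-edge tailed triangles at $e$ equals $N^{(e)}_{T,S_u\vee S_v,0}$, which is the claim. The step I expect to need the most care is verifying that this correspondence is a genuine bijection with no over- or under-counting: I must argue that for the fixed edge $e$ the roles of ``triangle vertex'' and ``tail vertex'' are forced onto $w$ and $r$ by the adjacency pattern alone (using that $\mathrm{Tri}_e$ and the disjoint union $\mathrm{Star}_u\cup\mathrm{Star}_v$ are pairwise disjoint, so each admissible set is produced by a unique pair), and that the condition $e'_{wr}=0$ is exactly the hinge separating the tailed triangle from the denser graphlet obtained by adding the chord $(w,r)$ (the $4$-chordalcycle $g_{4_2}$ in the transition diagram). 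Once this is confirmed the identity is immediate, and $N^{(e)}_{T,S_u\vee S_v,0}$ can itself be obtained in constant time through Theorem~\ref{thm:inc_exc} together with Eq.~\eqref{eqn:comb_n_ai_aj}.
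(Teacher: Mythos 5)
Your proof is correct and takes essentially the route the paper intends: the paper states this corollary without an explicit proof, relying on the edge-role classification in Fig.~\ref{fig:transition_diagram}, and its proof of the analogous Corollary~\ref{thm:ttri3} follows exactly your template of translating the colored edge role into the properties $A_w=T$, $A_r\in\{S_u,S_v\}$, $e'_{wr}=0$ and checking the correspondence both ways. Your extra care is also well placed --- the disjointness of $\mathrm{Tri}_e$ and $\mathrm{Star}_u\cup\mathrm{Star}_v$ forces the role assignment (so no over-counting per edge), and noting that each tailed triangle has \emph{two} blue edges is exactly what Lemma~\ref{thm:rel-ttri-chcy2} later uses when it divides the aggregate $\sum_{e}N^{(e)}_{T,S_u\vee S_v,0}$ by $2$.
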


\begin{cor}\label{thm:chcy2}
For any edge $e=(u,v)$ in the graph, the number of $4$-chordalcycles where $e$ is a cycle edge (denoted by the blue color in Fig.~\ref{fig:transition_diagram}), is $N^{(e)}_{T,S_u \vee S_v,1}$.
\end{cor}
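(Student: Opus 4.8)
The plan is to prove Corollary~\ref{thm:chcy2} by exhibiting a bijection between the $4$-node subgraphs counted by $N^{(e)}_{T,S_u\vee S_v,1}$ and the $4$-chordalcycles in which $e=(u,v)$ plays the role of a cycle edge. First I would unpack the count: since the properties $S_u$ and $S_v$ are mutually exclusive, $N^{(e)}_{T,S_u\vee S_v,1}=N^{(e)}_{T,S_u,1}+N^{(e)}_{T,S_v,1}$, so it suffices to show that the subgraphs contributing to each summand are exactly chordalcycles with $e$ as a cycle edge, each counted once.

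For the forward direction I would take a subgraph $\{u,v,w,r\}$ with $A_w=T$, $A_r=S_u$ (the $S_v$ case being symmetric) and $e'_{wr}=1$, and read off its edges. Here $A_w=T$ forces $(u,w),(v,w)\in E$; $A_r=S_u$ forces $(u,r)\in E$ and $(v,r)\notin E$; $e'_{wr}=1$ forces $(w,r)\in E$; together with the given $(u,v)\in E$ the induced subgraph has exactly the five edges $(u,v),(u,w),(v,w),(u,r),(w,r)$. This is the $4$-chordalcycle $g_{4_2}$ with chord $(u,w)$ joining the two degree-$3$ vertices $u,w$ and cycle $u\!-\!v\!-\!w\!-\!r\!-\!u$; in particular $e=(u,v)$ joins a degree-$3$ and a degree-$2$ vertex, so it is a cycle edge (blue in Fig.~\ref{fig:transition_diagram}).

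For the reverse direction I would show every chordalcycle with $e$ as a cycle edge arises from exactly one labeled configuration. The key structural fact is that in a $4$-chordalcycle each cycle edge connects one degree-$3$ vertex (a chord endpoint) to one degree-$2$ vertex; hence when $e=(u,v)$ is a cycle edge, the remaining two vertices are the other chord endpoint and the other degree-$2$ vertex. I would verify that the other chord endpoint is adjacent to both $u$ and $v$, so it carries property $T$ and is forced to be $w$, while the remaining degree-$2$ vertex is adjacent to exactly one endpoint of $e$, so it carries $S_u$ or $S_v$ and is forced to be $r$; these two are joined by a cycle edge, giving $e'_{wr}=1$.

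The main obstacle is the ``exactly once'' bookkeeping in the reverse direction: I must rule out any second labeling of the same chordalcycle satisfying $(T,S_u\vee S_v,1)$. This is settled by the degree argument, since the triangle-forming vertex is the \emph{unique} common neighbor of $u$ and $v$ among $\{w,r\}$ and therefore the roles of $w$ and $r$ cannot be swapped, and by the disjointness of $S_u$ and $S_v$, which prevents double counting across the two summands. As a consistency check matching the paper's style, once Corollary~\ref{thm:ttri} is available the statement also follows directly from the blue $\pm1$-edge transition of Fig.~\ref{fig:transition_diagram}: adding the edge $(w,r)$ to the $4$-tailedtriangle configuration counted by $N^{(e)}_{T,S_u\vee S_v,0}$ produces precisely the chordalcycle configuration $N^{(e)}_{T,S_u\vee S_v,1}$, in agreement with Theorem~\ref{thm:inc_exc}.
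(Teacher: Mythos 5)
Your proposal is correct and takes essentially the same route as the paper: the paper proves this corollary implicitly by unpacking the property definitions (exactly in the style of its explicit proof of Corollary~\ref{thm:ttri3}), namely that $A_w=T$, $A_r=S_u\vee S_v$, and $e'_{wr}=1$ determine the five-edge induced subgraph, which is precisely a $4$-chordalcycle with $e$ on the cycle, contributing once to the count. Your degree-based uniqueness argument and the split $N^{(e)}_{T,S_u\vee S_v,1}=N^{(e)}_{T,S_u,1}+N^{(e)}_{T,S_v,1}$ merely spell out bookkeeping the paper leaves to the transition diagram in Fig.~\ref{fig:transition_diagram}.
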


\begin{lemma}\label{thm:rel-ttri-chcy2}
For any graph $G$, the relationship between the counts of $4$-chordalcycles (\ie, $f(g_{4_2},G)$) and $4$-tailedtriangles (\ie, $f(g_{4_3},G)$) is,
\begin{equation}
2.f(g_{4_3},G) = \sum\limits_{e \in E} {|\mathrm{Tri}_e|.(|\mathrm{Star}_u|+|\mathrm{Star}_v|)} - 4.f(g_{4_2},G) \nonumber
\end{equation}
\end{lemma}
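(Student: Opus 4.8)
The plan is to reuse the template of Lemmas~\ref{thm:rel-cliq-chcy} and~\ref{thm:rel-cyc-path}, now instantiating Theorem~\ref{thm:inc_exc} with the property pair $a_i = T$ and $a_j = S_u \vee S_v$. First I would write the per-edge identity $N^{(e)}_{T,S_u \vee S_v,0} = N^{(e)}_{T,S_u \vee S_v} - N^{(e)}_{T,S_u \vee S_v,1}$ and sum it over all edges via the addition principle to obtain
\[
\sum_{e \in E} N^{(e)}_{T,S_u \vee S_v,0} = \sum_{e \in E} N^{(e)}_{T,S_u \vee S_v} - \sum_{e \in E} N^{(e)}_{T,S_u \vee S_v,1}.
\]

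Next I would evaluate the unrestricted count. The triangle nodes of $e$ form the set $\mathrm{Tri}_e$, while the star nodes form the disjoint union $\mathrm{Star}_u \cup \mathrm{Star}_v$, so $|\mathcal{R}_{S_u \vee S_v}| = |\mathrm{Star}_u| + |\mathrm{Star}_v|$. Since $T \neq (S_u \vee S_v)$, Eq.~\eqref{eqn:comb_n_ai_aj} yields $N^{(e)}_{T,S_u \vee S_v} = |\mathrm{Tri}_e| \cdot (|\mathrm{Star}_u| + |\mathrm{Star}_v|)$, which is exactly the summation appearing in the statement.

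The crux of the argument, and the step I expect to be the main obstacle, is determining the correct overcounting multiplicities for the two remaining sums, because here they differ ($4$ versus $2$), which is precisely why the factor of $2$ appears on the left-hand side. By Corollary~\ref{thm:chcy2}, a $4$-chordalcycle is detected once for every \emph{cycle} edge in the blue role; reading Fig.~\ref{fig:transition_diagram}, a chordalcycle has four cycle edges, and for each of them the triangle node and the star node are uniquely determined and happen to be adjacent, so each chordalcycle is counted four times and $\sum_{e \in E} N^{(e)}_{T,S_u \vee S_v,1} = 4 \cdot f(g_{4_2},G)$. By Corollary~\ref{thm:ttri}, a $4$-tailedtriangle is detected once for every edge lying simultaneously on the triangle and carrying the $2$-star tail; these are exactly the two triangle edges incident to the vertex where the tail attaches (for each such edge the triangle node and the tail node are non-adjacent), so each tailed triangle is counted twice and $\sum_{e \in E} N^{(e)}_{T,S_u \vee S_v,0} = 2 \cdot f(g_{4_3},G)$.

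Finally I would substitute these three evaluations into the summed identity to obtain $2 \cdot f(g_{4_3},G) = \sum_{e \in E} |\mathrm{Tri}_e| \cdot (|\mathrm{Star}_u| + |\mathrm{Star}_v|) - 4 \cdot f(g_{4_2},G)$, as claimed. The delicate part is purely the multiplicity bookkeeping via a short case analysis on edge roles in each graphlet; once the roles are correctly read off the transition diagram, the substitution is mechanical.
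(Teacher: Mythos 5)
Your proposal is correct and follows essentially the same route as the paper's proof: the same instantiation of Theorem~\ref{thm:inc_exc} with $a_i = T$, $a_j = S_u \vee S_v$, the same evaluation $N^{(e)}_{T,S_u \vee S_v} = |\mathrm{Tri}_e|\cdot(|\mathrm{Star}_u|+|\mathrm{Star}_v|)$ via Eq.~\eqref{eqn:comb_n_ai_aj}, and the same multiplicities ($4$ per chordal cycle via Corollary~\ref{thm:chcy2}, $2$ per tailed triangle via Corollary~\ref{thm:ttri}). Your explicit identification of the two blue-role edges as the triangle edges incident to the tail-attachment vertex is a slightly more detailed justification of the factor $2$ than the paper's appeal to Fig.~\ref{fig:transition_diagram}, but it is the same argument.
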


\begin{proof}
From Theorem~\ref{thm:inc_exc} and the addition principle~\cite{stanley1986enumerative}, the total count for all edges in $G$ is,  
\begin{equation}\label{eqn:add_n_TSuSv}
\sum\limits_{e \in E} N^{(e)}_{T,S_u \vee S_v,0} = \sum\limits_{e \in E} N^{(e)}_{T,S_u \vee S_v} -\sum\limits_{e \in E} N^{(e)}_{T,S_u \vee S_v,1}
\end{equation}

Given that $N^{(e)}_{T,S_u \vee S_v} = N^{(e)}_{T,S_u} +  N^{(e)}_{T,S_v}$ is the number of $4$-node subgraphs $\{u,v,w,r\}$ containing $e$, such that $A_w=T,A_r=S_u \vee S_v$. Thus, from Eq.~\eqref{eqn:comb_n_ai_aj}, $N^{(e)}_{T,S_u \vee S_v} = |\mathrm{Tri}_e|.(|\mathrm{Star}_u|+|\mathrm{Star}_v|)$. Now, from Corollary~\ref{thm:chcy2}, each $4$-chordalcycle is counted $4$ times (once for each edge in the cycle). Thus, the total count of $4$-chordalcycle in $G$ is $f(g_{4_2},G) = \frac{1}{4}.\sum\limits_{e \in E} N^{(e)}_{T,S_u \vee S_v,1}$. Similarly, from Corollary~\ref{thm:ttri}, each $4$-tailedtriangle will be counted $2$ times (once for each blue edge as in Fig.~\ref{fig:transition_diagram}). Thus, the total count of $4$-tailedtriangle in $G$ is $f(g_{4_3},G) = \frac{1}{2}.\sum\limits_{e \in E} N^{(e)}_{T,S_u \vee S_v,0}$. By direct substitution in Eq.~\eqref{eqn:add_n_TSuSv}, this lemma is true. 
\end{proof}

\subsubsection{Relationship between $4$-TailedTriangles \& $3$-Stars}
\begin{cor}\label{thm:ttri2}
For any edge $e=(u,v)$ in the graph, the number of $4$-tailedtriangles with $e$ as the tail edge (denoted by the green color in Fig.~\ref{fig:transition_diagram}) and $u$ is part of the triangle, is $N^{(e)}_{S_u,S_u,1}$.
\end{cor}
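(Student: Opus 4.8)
The plan is to prove the identity by exhibiting a bijection between the $4$-node subgraphs counted by $N^{(e)}_{S_u,S_u,1}$ and the tailed triangles of the stated type, relying only on Definition~\ref{def:ind_graphlet} and the definitions of the properties $S_u$ and $e'_{wr}$ already fixed above. Since these corollaries are meant to be read off the transition diagram, the work is almost entirely a careful unpacking of the three property labels into concrete edge/non-edge constraints on the induced subgraph $\{u,v,w,r\}$, followed by a check that the counting multiplicity is one.

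First I would translate the labels into adjacency conditions. By definition of a $2$-star centered at $u$, the condition $A_w = S_u$ forces $(u,w)\in E$ and $(v,w)\notin E$, and likewise $A_r = S_u$ forces $(u,r)\in E$ and $(v,r)\notin E$; the condition $e'_{wr}=1$ forces $(w,r)\in E$. Together with the hypothesis $e=(u,v)\in E$ and the fact that $w,r\in V\setminus\{u,v\}$ with $w\neq r$, the induced subgraph on the four distinct vertices $\{u,v,w,r\}$ then contains exactly the edges $(u,v),(u,w),(u,r),(w,r)$ and, crucially, none of $(v,w),(v,r)$. I would then identify this with $g_{4_3}$: the edges $(u,w),(u,r),(w,r)$ form a triangle on $\{u,w,r\}$, while $(u,v)$ attaches $v$ as a pendant leaf to the triangle vertex $u$. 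Because precisely four edges are present, this is an \emph{induced} tailed triangle in which $e=(u,v)$ is the tail edge and $u$ is the triangle vertex carrying the tail, matching the green edge role in Fig.~\ref{fig:transition_diagram}.

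For the converse, given any induced tailed triangle containing $e$ as its tail edge with $u$ in the triangle, the two triangle vertices other than $u$ must each be adjacent to $u$, adjacent to each other, and (since the graphlet is induced and $v$ is a leaf) nonadjacent to $v$; hence they form exactly a pair $\{w,r\}$ with $A_w=A_r=S_u$ and $e'_{wr}=1$. The main point to verify is single-counting: the label $S_u$ on both endpoints (as opposed to $S_u$ and $S_v$) is what unambiguously pins $u$ as the triangle vertex and $v$ as the leaf, and since the case $a_i=a_j$ in Eq.~\eqref{eqn:comb_n_ai_ai} tallies \emph{unordered} pairs $\{w,r\}$, the inherent symmetry of the two triangle vertices produces no overcount. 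I expect this multiplicity bookkeeping, rather than the structural identification, to be the only delicate step.
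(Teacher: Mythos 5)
Your proof is correct and matches the paper's intended justification: the paper states this corollary without a dedicated proof, treating it as a direct reading of the edge roles in Fig.~\ref{fig:transition_diagram}, and its one worked corollary proof (for Corollary~\ref{thm:ttri3}) follows exactly your pattern of unpacking the property labels $A_w$, $A_r$, $e'_{wr}$ into adjacency constraints on the induced subgraph $\{u,v,w,r\}$. Your added care on the unordered-pair multiplicity via Eq.~\eqref{eqn:comb_n_ai_ai} is a sound formalization of what the paper leaves implicit.
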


In a similar fashion, the number of $4$-tailedtriangles with $e$ as the tail edge and $v$ is part of the triangle is $N^{(e)}_{S_v,S_v,1}$. Thus, the total number of $4$-tailedtriangles with $e$ as the tail edge and $u \vee v$ is part of the triangle is $N^{(e)}_{S_{\textbf{.}},S_{\textbf{.}},1} = N^{(e)}_{S_u,S_u,1} + N^{(e)}_{S_v,S_v,1}$.

\begin{cor}\label{thm:3star}
For any edge $e=(u,v)$ in the graph, the number of $3$-star centered around $u$ is $N^{(e)}_{S_u,S_u,0}$.
\end{cor}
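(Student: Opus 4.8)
The plan is to prove the identity by exhibiting a bijection, for the fixed edge $e=(u,v)$, between the $4$-node subgraphs counted by $N^{(e)}_{S_u,S_u,0}$ and the induced $3$-stars centered at $u$ that contain $e$. First I would unpack the definition in Eq.~\eqref{eqn:n_ai_aj_1}: a subgraph $\{u,v,w,r\}$ is counted by $N^{(e)}_{S_u,S_u,0}$ exactly when $A_w=S_u$, $A_r=S_u$, and $e'_{wr}=0$. By the definition of the property $S_u$, each of $w$ and $r$ is adjacent to $u$ but not to $v$, and since $e'_{wr}=0$ the nodes $w$ and $r$ are mutually non-adjacent.

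Next I would verify the induced structure. Combining the edge $e=(u,v)$ with the adjacencies just listed, the induced subgraph on $\{u,v,w,r\}$ contains precisely the edges $(u,v)$, $(u,w)$, and $(u,r)$ and no others: the three nodes $v,w,r$ are pairwise non-adjacent, since $v$ is not adjacent to $w$ or $r$ (because $w,r\in S_u$) and $w$ is not adjacent to $r$ (because $e'_{wr}=0$). Hence $u$ has degree $3$ while each of $v,w,r$ has degree $1$, which is exactly the induced $3$-star $g_{4_5}$ with center $u$, and $e$ plays the role of a star (green) edge in Fig.~\ref{fig:transition_diagram}, consistent with the transition-diagram labeling.

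For the converse I would start from any induced $3$-star with center $u$ that contains $e=(u,v)$. Because $u$ is the center and $v$ is adjacent to $u$, the node $v$ is one of the three leaves; the other two leaves $w,r$ are adjacent to $u$, non-adjacent to $v$, and non-adjacent to each other (leaves of a star are pairwise non-adjacent in the induced subgraph), so $A_w=A_r=S_u$ and $e'_{wr}=0$, and the unordered pair $\{w,r\}$ is counted once by $N^{(e)}_{S_u,S_u,0}$. This correspondence is one-to-one: with $u$ designated as center and $v$ as the distinguished leaf coming from $e$, each $3$-star determines its remaining pair of leaves $\{w,r\}$ uniquely and conversely. The only point requiring care—the one I would flag as the main (though minor) obstacle—is bookkeeping about multiplicity: the claim is a strictly \emph{per-edge} count with $u$ fixed as center, so no single $3$-star is counted twice inside one $N^{(e)}_{S_u,S_u,0}$, while any over-counting of a given $3$-star across the different edges of the graph is deferred to the subsequent whole-graph lemma. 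Beyond this, the argument is a direct unpacking of the definitions, since the $S_u$/$S_u$/$0$ conditions force exactly the $3$-star adjacency pattern.
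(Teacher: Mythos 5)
Your proof is correct and follows essentially the same route as the paper, which justifies this corollary (like its siblings) by directly unpacking the definition of $N^{(e)}_{S_u,S_u,0}$ from Eq.~\eqref{eqn:n_ai_aj_1} against the transition diagram --- exactly the style of argument the paper writes out explicitly only for Corollary~\ref{thm:ttri3}. Your added care about per-edge versus whole-graph multiplicity matches how the paper handles it, deferring the factor of $3$ to Lemma~\ref{thm:rel-ttri-3star}.
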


Again, the number of $3$-stars centered around $v$ is $N^{(e)}_{S_v,S_v,0}$. Thus, the total number of $3$-stars centered around $u \textrm{ or } v$ is $N^{(e)}_{S_\textbf{.},S_\textbf{.},0} = N^{(e)}_{S_u,S_u,0}+N^{(e)}_{S_v,S_v,0}$. 

\begin{lemma}\label{thm:rel-ttri-3star}
For any graph $G$, the relationship between the counts of $3$-stars (\ie, $f(g_{4_5},G)$) and $4$-tailedtriangles (\ie, $f(g_{4_3},G)$) is,
\begin{equation}
3.f(g_{4_5},G) = \sum\limits_{e \in E} {{|\mathrm{Star}_u| \choose 2} + {|\mathrm{Star}_v| \choose 2}} - f(g_{4_3},G) \nonumber
\end{equation}
\end{lemma}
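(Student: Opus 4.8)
The plan is to reuse verbatim the template of Lemmas~\ref{thm:rel-cliq-chcy}--\ref{thm:rel-ttri-chcy2}: instantiate the general counting principle of Theorem~\ref{thm:inc_exc} for the two same-property pairs $(S_u,S_u)$ and $(S_v,S_v)$, sum over all edges, plug in the closed form of the unrestricted count from Eq.~\eqref{eqn:comb_n_ai_ai}, and finally divide out the overcounting multiplicity of each graphlet as dictated by Corollaries~\ref{thm:ttri2} and~\ref{thm:3star}. The only genuinely new work relative to the earlier lemmas is pinning down those two multiplicities correctly.

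First I would apply Theorem~\ref{thm:inc_exc} to the property pairs $(S_u,S_u)$ and $(S_v,S_v)$, sum each identity over $e\in E$, and add them via the addition principle. Writing the combined quantities with the shorthand $N^{(e)}_{S_{\textbf{.}},S_{\textbf{.}},b}=N^{(e)}_{S_u,S_u,b}+N^{(e)}_{S_v,S_v,b}$ introduced after Corollaries~\ref{thm:ttri2} and~\ref{thm:3star}, this yields
\begin{equation}
\sum_{e \in E} N^{(e)}_{S_{\textbf{.}},S_{\textbf{.}},0} = \sum_{e \in E}\left(N^{(e)}_{S_u,S_u}+N^{(e)}_{S_v,S_v}\right) - \sum_{e \in E} N^{(e)}_{S_{\textbf{.}},S_{\textbf{.}},1}. \nonumber
\end{equation}
Since in each pair the two free nodes $w,r$ carry the \emph{same} property, Eq.~\eqref{eqn:comb_n_ai_ai} applies directly and gives $N^{(e)}_{S_u,S_u}={|\mathrm{Star}_u| \choose 2}$ and $N^{(e)}_{S_v,S_v}={|\mathrm{Star}_v| \choose 2}$, so the unrestricted term reproduces the summand $\sum_{e\in E}\big({|\mathrm{Star}_u| \choose 2}+{|\mathrm{Star}_v| \choose 2}\big)$ appearing in the claim.

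The crux is the two multiplicity constants. By Corollary~\ref{thm:3star}, the restricted-to-$0$ count enumerates $3$-stars in which $e$ is an edge incident to the star's center; because every $3$-star has exactly three edges and all three meet at its unique center, each $3$-star is recorded three times as $e$ ranges over $E$, giving $\sum_{e\in E} N^{(e)}_{S_{\textbf{.}},S_{\textbf{.}},0}=3\,f(g_{4_5},G)$. By Corollary~\ref{thm:ttri2}, the restricted-to-$1$ count enumerates $4$-tailedtriangles with $e$ as the \emph{tail} (pendant) edge; since a tailed triangle has a single tail edge, each such graphlet is recorded exactly once, giving $\sum_{e\in E} N^{(e)}_{S_{\textbf{.}},S_{\textbf{.}},1}=f(g_{4_3},G)$. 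Substituting these three evaluations into the edge-summed identity gives precisely $3\,f(g_{4_5},G)=\sum_{e\in E}\big({|\mathrm{Star}_u| \choose 2}+{|\mathrm{Star}_v| \choose 2}\big)-f(g_{4_3},G)$.

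The main obstacle is the combinatorial bookkeeping behind these two constants, not the algebra. I would verify that the degenerate endpoint contributes nothing when $e$ is the tail edge: if the pendant endpoint were taken as the prospective star center, its $\mathrm{Star}$ set is empty, so no spurious recount arises, confirming multiplicity one for $g_{4_3}$. I would likewise confirm that, for a $3$-star, every one of its three edges does place the center at an endpoint of $e$ so that the factor is exactly three and not smaller. Once these checks are in place, the result follows by direct substitution, exactly parallel to the proofs of Lemmas~\ref{thm:rel-cliq-chcy}--\ref{thm:rel-ttri-chcy2}.
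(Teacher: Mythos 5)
Your proposal is correct and takes essentially the same route as the paper's own proof: both instantiate Theorem~\ref{thm:inc_exc} for the combined count $N^{(e)}_{S_{\textbf{.}},S_{\textbf{.}}}$, evaluate the unrestricted term via Eq.~\eqref{eqn:comb_n_ai_ai} as ${|\mathrm{Star}_u| \choose 2} + {|\mathrm{Star}_v| \choose 2}$, and fix the same multiplicities---three for each $3$-star (once per star edge, Corollary~\ref{thm:3star}) and one for each $4$-tailedtriangle (its unique tail edge, Corollary~\ref{thm:ttri2})---before substituting back. Your additional check that the pendant endpoint of the tail edge has an empty $\mathrm{Star}$ set and so contributes no spurious count is a detail the paper leaves implicit, but it does not alter the argument.
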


\begin{proof}
From Theorem~\ref{thm:inc_exc} and the addition principle~\cite{stanley1986enumerative}, the total count for all edges in $G$ is,  
\begin{equation}\label{eqn:add_n_SuSu_SvSv}
\sum\limits_{e \in E} N^{(e)}_{S_\textbf{.},S_\textbf{.},0} = \sum\limits_{e \in E} N^{(e)}_{S_\textbf{.},S_\textbf{.}} - \sum\limits_{e \in E} N^{(e)}_{S_\textbf{.},S_\textbf{.},1}
\end{equation}

Given that $N^{(e)}_{S_\textbf{.},S_\textbf{.}} = N^{(e)}_{S_u,S_u}+N^{(e)}_{S_v,S_v}$ is the number of $4$-node subgraphs $\{u,v,w,r\}$ containing $e$, such that $A_w=S_u \wedge A_r=S_u \textrm{ or } A_w=S_v \wedge A_r=S_v$. Thus, from Eq.~\eqref{eqn:comb_n_ai_ai}, $N^{(e)}_{S_\textbf{.},S_\textbf{.}} = {|\mathrm{Star}_u| \choose 2} + {|\mathrm{Star}_v| \choose 2}$. Now, from Corollary~\ref{thm:3star}, each $3$-star is counted $3$ times (once for each edge in the star). Thus, the total count of $3$-stars in $G$ is $f(g_{4_5},G) = \frac{1}{3}.\sum\limits_{e \in E} N^{(e)}_{S_\textbf{.},S_\textbf{.},0}$. Similarly, from Corollary~\ref{thm:ttri2}, each $4$-tailedtriangle will be counted once for each tail edge (denoted by the green color in Fig.~\ref{fig:transition_diagram}). Thus, the total count of $4$-tailedtriangle in $G$ is $f(g_{4_3},G) = \sum\limits_{e \in E} N^{(e)}_{S_\textbf{.},S_\textbf{.},1}$. This holds whether the patterns are centered around $u$ or $v$. By direct substitution in Eq.~\eqref{eqn:add_n_SuSu_SvSv}, this lemma is true. 
\end{proof}

\subsubsection{Relationship between $4$-TailedTriangles \& $4$-Node-1-Triangles}
\begin{cor}\label{thm:4node-tri}
For any edge $e=(u,v)$ in the graph, the number of $4$-node-1-triangle is $N^{(e)}_{T,I,0}$.
\end{cor}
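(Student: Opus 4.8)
The plan is to establish that the subgraphs enumerated by $N^{(e)}_{T,I,0}$ are in one-to-one correspondence with the $4$-node-1-triangle graphlets ($g_{4_7}$) that contain the edge $e=(u,v)$, and that this correspondence records each graphlet exactly once. The argument follows the same template as Corollaries~\ref{thm:cliq}--\ref{thm:3star}: I would unpack the topological properties $T$, $I$, and the condition $e'_{wr}=0$ directly against the induced-subgraph definition, read off the resulting pattern from the transition diagram in Fig.~\ref{fig:transition_diagram}, and then verify the counting factor.

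First I would check the forward inclusion. Take any subgraph $\{u,v,w,r\}$ counted by $N^{(e)}_{T,I,0}$. By definition of the property $T$, node $w$ lies in $\mathcal{N}(u)\cap\mathcal{N}(v)$, so both $(u,w)$ and $(v,w)$ are present and $\{u,v,w\}$ induces a triangle together with $e$. By definition of $I$, node $r\notin\mathcal{N}(u)\cup\mathcal{N}(v)$, so neither $(u,r)$ nor $(v,r)$ is present; and $e'_{wr}=0$ gives $(w,r)\notin E$. Hence the induced subgraph on $\{u,v,w,r\}$ has exactly the three edges of the triangle $\{u,v,w\}$ with $r$ isolated, which is precisely the $4$-node-1-triangle pattern $g_{4_7}$.

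Next I would check the reverse inclusion. Let $\{u,v,w,r\}$ induce a $g_{4_7}$ graphlet containing $e$. Since $g_{4_7}$ consists of one triangle plus an isolated vertex, and the isolated vertex has degree zero, it cannot be an endpoint of $e$; therefore $e$ must be one of the triangle edges. Labeling the remaining triangle vertex $w$ and the isolated vertex $r$, we have $w$ adjacent to both $u$ and $v$ (so $A_w=T$) and $r$ adjacent to none of $u,v,w$ (so $A_r=I$ and $e'_{wr}=0$), placing the subgraph in $N^{(e)}_{T,I,0}$.

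The only point requiring care is the counting multiplicity, and here it resolves immediately: because the two properties are distinct ($T\neq I$), the sets $\mathcal{W}_T$ and $\mathcal{R}_I$ are mutually exclusive by the discussion preceding Eq.~\eqref{eqn:comb_n_ai_aj}, so for the fixed edge $e$ the roles of $w$ (triangle apex) and $r$ (isolated vertex) are unambiguous and each $g_{4_7}$ is recorded exactly once. Consequently the number of $4$-node-1-triangles containing $e$ equals $N^{(e)}_{T,I,0}$. Unlike the clique, cycle, and tailed-triangle corollaries, there is no symmetry that would force an overcount, so I expect no genuine obstacle here; the whole content lies in the definitional pattern-matching of the first two steps.
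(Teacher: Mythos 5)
Your proof is correct and follows essentially the same route as the paper: the paper states this corollary without its own proof, but the companion Corollary~\ref{thm:ttri3} (the $N^{(e)}_{T,I,1}$ case) is proved by exactly this definitional unpacking of $T$, $I$, and $e'_{wr}$ against the induced-subgraph pattern, and your argument is that template with both inclusions and the once-per-edge multiplicity spelled out explicitly. One small caveat: your closing remark suggesting the clique/cycle/tailed-triangle corollaries involve a per-edge overcount is slightly misleading --- all of those per-edge counts are also exactly one, and the multiplicities ($6$, $4$, $2$, and here $3$) only arise in the subsequent lemmas when summing over all edges of $G$ --- but this does not affect the validity of your proof.
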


\begin{cor}\label{thm:ttri3}
For any edge $e=(u,v)$ in the graph, the number of $4$-tailedtriangles with $e$ participating in the triangle but not connected to the tail edge (denoted by the red color in Fig.~\ref{fig:transition_diagram}), is $N^{(e)}_{T,I,1}$.
\end{cor}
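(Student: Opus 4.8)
The plan is to prove the identity by exhibiting a bijection between the induced $4$-node subgraphs counted by $N^{(e)}_{T,I,1}$ and the $4$-tailedtriangles in which $e$ plays the stated (red) role, so that the two quantities are literally equal rather than equal up to a multiplicity. This parallels the structural arguments underlying Corollaries~\ref{thm:cliq}--\ref{thm:ttri2}, and like them it rests on the property definitions in the set $A$ together with the inducedness of the graphlets.

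First I would unfold the definition in Eq.~\eqref{eqn:n_ai_aj_1}: a quadruple $\{u,v,w,r\}$ contributes to $N^{(e)}_{T,I,1}$ exactly when $A_w=T$, $A_r=I$, and $e'_{wr}=1$. By the meaning of the properties, $A_w=T$ says $w$ is adjacent to both $u$ and $v$; $A_r=I$ says $r$ is adjacent to neither $u$ nor $v$; and $e'_{wr}=1$ says $(w,r)\in E$. Since the graphlet is \emph{induced} (Definition~\ref{def:ind_graphlet}), the edge set on $\{u,v,w,r\}$ is thereby forced to be exactly $\{(u,v),(u,w),(v,w),(w,r)\}$, with $(u,r)$ and $(v,r)$ absent. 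These four edges form a triangle on $\{u,v,w\}$ plus the pendant (tail) edge $(w,r)$, which is precisely the $4$-tailedtriangle $g_{4_3}$; moreover $e=(u,v)$ is the unique triangle edge not incident to the tail vertex $w$, \ie the red edge of Fig.~\ref{fig:transition_diagram}.

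Next I would check the converse so the correspondence is onto. Given any induced $4$-tailedtriangle in $G$ in which $e=(u,v)$ is the triangle edge sharing no endpoint with the tail, its triangle is $\{u,v,w\}$ and its tail is $(w,r)$ for uniquely determined $w,r$. Because the graphlet is induced, $r$ is adjacent only to $w$ within the quadruple, hence adjacent to neither $u$ nor $v$, giving $A_r=I$, while $w$ is adjacent to both $u$ and $v$, giving $A_w=T$, and $(w,r)\in E$ gives $e'_{wr}=1$; so the quadruple is counted by $N^{(e)}_{T,I,1}$.

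Finally I would argue that the two directions combine into a bijection. The one point that needs care — and the only place the argument could slip — is the labeling of the two free nodes: within each counted quadruple the node carrying property $T$ and the node carrying property $I$ are distinct and unambiguously identified, since $T\neq I$ forces $\mathcal{W}_{T}\cap\mathcal{R}_{I}=\emptyset$ (exactly the disjointness used to justify the product form in Eq.~\eqref{eqn:comb_n_ai_aj}). Consequently no quadruple is counted with multiplicity, and each $4$-tailedtriangle having $e$ as its red edge matches one and only one such role-labeled quadruple and conversely. This yields that the number of these $4$-tailedtriangles equals $N^{(e)}_{T,I,1}$, as claimed.
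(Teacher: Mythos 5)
Your proof is correct and follows essentially the same route as the paper's: both establish the same if-and-only-if characterization that a quadruple $\{u,v,w,r\}$ satisfies $A_w=T$, $A_r=I$, $e'_{wr}=1$ precisely when it induces a $4$-tailedtriangle with $e$ as the red (non-tail-incident) triangle edge, so each such graphlet contributes exactly once to $N^{(e)}_{T,I,1}$. You are somewhat more explicit than the paper about why the edge set on the quadruple is fully forced and why the role labeling of $w$ and $r$ is unambiguous (the disjointness $\mathcal{W}_{T}\cap\mathcal{R}_{I}=\emptyset$), but these refinements are implicit in the paper's argument rather than a different approach.
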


\begin{proof}
Suppose there is a subgraph $\{u,v,w,r\}$ containing $e$. $\{u,v,w,r\}$ is a $4$-tailedtriangle with $e$ participating in the triangle but not connected to the tail edge, if and only if there are some nodes $w,r$ such that $w \in \mathrm{Tri}_e$, $r \not \mathcal{N}(e)$, and $(w,r) \in E$. This means $r$ is independent of $e$, and $w$ forms a triangle with $e$. As such, $A_w=T$ and $A_r=I$ and $e'_{wr}=1$. More generally, any subgraph $\{u,v,w,r\}$ containing $e$ contributes once in the count $N^{(e)}_{T,I,1}$ if and only if it is a $4$-tailedtriangle with $e$ participating in the triangle but not connected to the tail edge. In Theorem~\ref{thm:inc_exc}, we showed that $N^{(e)}_{T,I,1} \leq N^{(e)}_{T,I}$.
\end{proof}

\begin{lemma}\label{thm:rel-4node-tri-ttri}
For any graph $G$, the relationship between the counts of $4$-tailedtriangles (\ie, $f(g_{4_3},G)$) and $4$-node-1-triangles (\ie, $f(g_{4_7},G)$) is,
\begin{equation}
3.f(g_{4_7},G) = \sum\limits_{e \in E} \Big(\mathrm{Tri}_e.\left( |V| - |\mathcal{N}(u) \cup \mathcal{N}(v)| \right)\Big) - f(g_{4_3},G) \nonumber
\end{equation}
\end{lemma}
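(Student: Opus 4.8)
The plan is to follow the same template established in Lemmas~\ref{thm:rel-cliq-chcy}--\ref{thm:rel-ttri-3star}, specializing Theorem~\ref{thm:inc_exc} to the property pair $(T,I)$ and then fixing the two overcounting factors through Corollaries~\ref{thm:4node-tri} and~\ref{thm:ttri3}. First I would invoke Theorem~\ref{thm:inc_exc} together with the addition principle and sum over all edges, obtaining
\begin{equation}
\sum_{e \in E} N^{(e)}_{T,I,0} = \sum_{e \in E} N^{(e)}_{T,I} - \sum_{e \in E} N^{(e)}_{T,I,1}. \nonumber
\end{equation}
Since $T \neq I$, the node sets $\mathcal{W}_T = \mathrm{Tri}_e$ and $\mathcal{R}_I$ are mutually exclusive, so Eq.~\eqref{eqn:comb_n_ai_aj} gives the unrestricted count $N^{(e)}_{T,I} = |\mathcal{W}_T| \cdot |\mathcal{R}_I|$. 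Here $|\mathcal{R}_I|$ is exactly the number of nodes disconnected from $e$, which by the discussion preceding Eq.~\eqref{eq:g3_3} equals $|V| - |\mathcal{N}(u) \cup \mathcal{N}(v)|$ (recall this quantity already excludes $u$ and $v$, since $v \in \mathcal{N}(u)$ and $u \in \mathcal{N}(v)$). Hence $N^{(e)}_{T,I} = |\mathrm{Tri}_e| \cdot (|V| - |\mathcal{N}(u) \cup \mathcal{N}(v)|)$, which produces the leading sum in the claimed identity.

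Next I would resolve the two multiplicities. By Corollary~\ref{thm:4node-tri}, $N^{(e)}_{T,I,0}$ counts the $4$-node-1-triangles incident to $e$; each such graphlet is a triangle plus an isolated node, and it is counted once for each of its three triangle edges (for every such edge, the remaining triangle vertex plays the role of $w \in \mathrm{Tri}_e$ and the isolated node plays $r$ with $A_r = I$). This yields $3\,f(g_{4_7},G) = \sum_{e \in E} N^{(e)}_{T,I,0}$. By Corollary~\ref{thm:ttri3}, $N^{(e)}_{T,I,1}$ counts the $4$-tailedtriangles in which $e$ lies in the triangle but is not incident to the tail (the red role in Fig.~\ref{fig:transition_diagram}); for a fixed tailed triangle there is a unique such edge, namely the triangle edge opposite the tail-bearing apex, so each $4$-tailedtriangle is counted exactly once, giving $f(g_{4_3},G) = \sum_{e \in E} N^{(e)}_{T,I,1}$. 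Direct substitution of these three evaluations into the summed inclusion--exclusion identity then yields the lemma.

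The step I expect to require the most care is verifying the multiplicity of $1$ for the $4$-tailedtriangle term: I must confirm from the edge roles in the transition diagram that exactly one triangle edge is simultaneously opposite the tail and disjoint from it, so that the red-role count neither double- nor triple-counts each graphlet. The analogous check of the factor $3$ on the $4$-node-1-triangle side is comparatively routine, following from the symmetry of the triangle under permutation of its three edges. Everything else reduces to the combinatorial evaluations already licensed by Eq.~\eqref{eqn:comb_n_ai_aj} and the preceding $3$-node analysis.
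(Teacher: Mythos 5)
Your proof is correct and follows essentially the same route as the paper's: Theorem~\ref{thm:inc_exc} plus the addition principle summed over all edges, the unrestricted count $N^{(e)}_{T,I} = |\mathrm{Tri}_e|\cdot\left(|V| - |\mathcal{N}(u)\cup\mathcal{N}(v)|\right)$ via Eq.~\eqref{eqn:comb_n_ai_aj}, and the multiplicities $3$ for $g_{4_7}$ (Corollary~\ref{thm:4node-tri}) and $1$ for $g_{4_3}$ (Corollary~\ref{thm:ttri3}). Your explicit verification that the red-role edge is the unique triangle edge opposite the tail-bearing apex is a welcome elaboration of what the paper leaves implicit in Corollary~\ref{thm:ttri3}, but it does not alter the argument.
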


\begin{proof}
From Theorem~\ref{thm:inc_exc} and the addition principle~\cite{stanley1986enumerative}, the total count for all edges in $G$ is,  
\begin{equation}\label{eqn:add_n_T_I}
\sum\limits_{e \in E} N^{(e)}_{T,I,0} = \sum\limits_{e \in E} N^{(e)}_{T,I} - \sum\limits_{e \in E} N^{(e)}_{T,I,1}
\end{equation}

Given that $N^{(e)}_{T,I}$ is the number of $4$-node subgraphs $\{u,v,w,r\}$ containing $e$, such that $A_w=T, A_r=I$. And, the number of nodes independent of $e$ is $|V| - |\mathcal{N}(u) \cup \mathcal{N}(v)|$. Thus, from Eq.~\eqref{eqn:comb_n_ai_aj}, $N^{(e)}_{T,I} = \mathrm{Tri}_e . \Big( |V| - |\mathcal{N}(u) \cup \mathcal{N}(v)| \Big)$. Now, from Corollary~\ref{thm:ttri3}, each $4$-tailedtriangle is counted one time (once for the red edge as in Fig.~\ref{fig:transition_diagram}). Thus, the total count of $4$-tailedtriangles in $G$ is $f(g_{4_3},G) = \sum\limits_{e \in E} N^{(e)}_{T,I,1}$. Similarly, from Corollary~\ref{thm:4node-tri}, each $4$-node-1-triangle will be counted $3$ times (once for each edge in the triangle). Thus, the total count of $4$-node-1-triangles in $G$ is $f(g_{4_7},G) = \frac{1}{3}.\sum\limits_{e \in E} N^{(e)}_{T,I,0}$. By direct substitution in Eq.~\eqref{eqn:add_n_T_I}, this lemma is true. 
\end{proof}

\subsubsection{Relationship between $4$-Paths \& $4$-node-$2$-Stars}
\begin{cor}\label{thm:path2}
For any edge $e=(u,v)$ in the graph, the number of $4$-paths where $e$ is the start or end of the path (denoted by the purple color in Fig.~\ref{fig:transition_diagram}), is $N^{(e)}_{S_u \vee S_v,I,1}$.
\end{cor}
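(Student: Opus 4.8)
The plan is to establish a direct bijection between the $4$-paths in which $e=(u,v)$ is a terminal (start or end) edge and the $4$-node subgraphs enumerated by $N^{(e)}_{S_u \vee S_v, I, 1}$, mirroring the edge-determination argument used in the proof of Corollary~\ref{thm:ttri3}. First I would unfold the $\vee$-notation by writing $N^{(e)}_{S_u \vee S_v, I, 1} = N^{(e)}_{S_u, I, 1} + N^{(e)}_{S_v, I, 1}$, exactly as the mixed counts were split in Lemma~\ref{thm:rel-ttri-chcy2}, so that it suffices to characterize each summand separately and then observe the two cases are disjoint.

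Next I would prove the forward direction by pinning down all six potential adjacencies of a subgraph $\{u,v,w,r\}$ counted in $N^{(e)}_{S_u, I, 1}$. By the definition in Eq.~\eqref{eqn:n_ai_aj_1}, $A_w = S_u$ forces $(u,w) \in E$ and $(v,w) \notin E$; $A_r = I$ forces $(u,r) \notin E$ and $(v,r) \notin E$; and $e'_{wr}=1$ forces $(w,r) \in E$. Together with the center edge $(u,v) \in E$, exactly the three edges $(u,v), (u,w), (w,r)$ are present and the other three absent. Since graphlets are induced (Definition~\ref{def:ind_graphlet}), the subgraph on $\{u,v,w,r\}$ is therefore precisely the $4$-path $v - u - w - r$, in which $e$ is the terminal edge incident to the degree-one node $v$. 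The symmetric computation for $N^{(e)}_{S_v, I, 1}$ yields the $4$-path $u - v - w - r$ with $e$ terminal at $u$.

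For the converse I would argue that every $4$-path with $e$ as a terminal edge arises this way. If $e$ is terminal, one endpoint (say $v$) has degree one in the path and the other ($u$) has degree two, so the path must read $v - u - w - r$ for unique remaining nodes $w,r$; reading off the connectivity gives $A_w = S_u$, $A_r = I$, $e'_{wr}=1$, placing the subgraph in $N^{(e)}_{S_u, I, 1}$, and symmetrically for the other endpoint. Because $a_i = S_u \neq I = a_j$, the labeling of $w$ versus $r$ is forced, so---consistent with the product form of Eq.~\eqref{eqn:comb_n_ai_aj}---each path is counted exactly once. Summing the two disjoint cases then gives the stated count.

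The main obstacle I anticipate is bookkeeping rather than mathematical depth: I must confirm that the two terminal edges of a given $4$-path are distinct (they share no endpoint, since the path has four distinct vertices), so that no path is double-counted across the two symmetric cases, and that the purple edge role in Fig.~\ref{fig:transition_diagram} indeed corresponds to these terminal positions. Checking that the property assignment $A_w, A_r$ admits no alternative induced subgraph---\eg, that no extra chord can appear---is immediate once all six pairwise adjacencies are fixed, so the only real care needed is in the orientation conventions that keep the count at exactly one per path.
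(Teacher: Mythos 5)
Your proof is correct and takes essentially the same approach as the paper: the paper states this corollary without an explicit proof, but its one proved corollary of this kind (Corollary~\ref{thm:ttri3}) uses exactly the direct adjacency-characterization argument you give, reading off which of the six pairwise adjacencies of $\{u,v,w,r\}$ are forced by $A_w$, $A_r$, and $e'_{wr}$ and identifying the induced graphlet. Your extra bookkeeping---splitting $N^{(e)}_{S_u \vee S_v,I,1}$ into the two disjoint summands $N^{(e)}_{S_u,I,1}+N^{(e)}_{S_v,I,1}$ and checking that each terminal path is counted exactly once per edge---is sound and simply makes explicit what the paper leaves implicit.
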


\begin{cor}\label{thm:4node_2star}
For any edge $e=(u,v)$ in the graph, the number of $4$-node-2-stars where $e$ is one of the star edges (denoted by the purple color in Fig.~\ref{fig:transition_diagram}), is $N^{(e)}_{S_u \vee S_v,I,0}$.
\end{cor}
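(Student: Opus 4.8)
The plan is to reproduce the ``if-and-only-if characterization then read-off'' style already used in the proof of Corollary~\ref{thm:ttri3}: I would describe exactly when a $4$-node subgraph $\{u,v,w,r\}$ containing $e$ is a $4$-node-$2$-star with $e$ as a star edge, purely in terms of the properties $A_w, A_r$ and the indicator $e'_{wr}$, and then observe that these are precisely the subgraphs tallied by $N^{(e)}_{S_u \vee S_v, I, 0}$. First I would recall the shape of $g_{4_8}$: it is a $2$-star on three nodes (a degree-$2$ center joined to two leaves) together with one isolated node, hence it has exactly two edges, both incident to the center. Since $e=(u,v)\in E$ is required to be one of the star edges, the center must be one of its endpoints, so I would split into the two symmetric cases ``center $=u$'' and ``center $=v$.'' In the first case $u$ is joined to $v$ and to exactly one of $\{w,r\}$ while the remaining node is isolated in the induced subgraph; relabelling so that $w$ is the non-$\{u,v\}$ node adjacent to $u$ forces $w\sim u$, $w\not\sim v$ (so $A_w=S_u$), forces $r\not\sim u$, $r\not\sim v$ (so $A_r=I$), and forces $w\not\sim r$ (so $e'_{wr}=0$). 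The case ``center $=v$'' is identical with $S_u$ replaced by $S_v$, which is exactly why the first slot carries the disjunction $S_u\vee S_v$.

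For the converse I would note that any $\{u,v,w,r\}$ with $A_w\in\{S_u,S_v\}$, $A_r=I$, and $e'_{wr}=0$ has induced edge set exactly $\{(u,v),(u,w)\}$ (resp.\ $\{(u,v),(v,w)\}$), which is a $2$-star centered at $u$ (resp.\ $v$) plus the isolated node $r$, i.e.\ a $4$-node-$2$-star in which $e$ is a star edge. This gives the biconditional, matching the coloring in Fig.~\ref{fig:transition_diagram} where both star edges carry the purple role.

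The only step requiring genuine care, and the one I would spell out explicitly, is that each qualifying graphlet is counted \emph{once}, not twice, in $N^{(e)}_{S_u\vee S_v, I, 0}$. Because the two properties involved are distinct ($S_u\vee S_v$ versus $I$), the assignment sending the star-leaf node to the $w$-slot and the isolated node to the $r$-slot is forced, so the characterization is a genuine bijection between the qualifying subgraphs and the configurations counted by $N^{(e)}_{S_u\vee S_v, I, 0}$, and no per-edge symmetry factor appears. I would flag that the factor coming from a single $4$-node-$2$-star having two star edges is a separate, global effect that enters only when these per-edge counts are summed over all $e\in E$ in the companion lemma, not in this corollary.
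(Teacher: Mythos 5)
Your proposal is correct and follows essentially the same route as the paper: the paper states this corollary without a spelled-out proof (delegating to Theorem~\ref{thm:inc_exc} and the transition diagram of Fig.~\ref{fig:transition_diagram}), and the one corollary it does prove explicitly (Corollary~\ref{thm:ttri3}) uses exactly your ``if-and-only-if characterization via $A_w$, $A_r$, $e'_{wr}$'' template. Your two additional care points are also consistent with the paper: the slot assignment is indeed forced because $S_u \vee S_v$ and $I$ are mutually exclusive properties (matching $N^{(e)}_{S_u \vee S_v,I} = N^{(e)}_{S_u,I} + N^{(e)}_{S_v,I}$ and Eq.~\eqref{eqn:comb_n_ai_aj}), and the factor of $2$ from each $g_{4_8}$ having two purple star edges enters only in the global sum of Lemma~\ref{thm:rel-4node-2star-path}, exactly as you flag.
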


\begin{lemma}\label{thm:rel-4node-2star-path}
For any graph $G$, the relationship between the counts of $4$-paths (\ie, $f(g_{4_6},G)$) and $4$-node-$2$-stars (\ie, $f(g_{4_8},G)$) is,
\begin{equation}
2.f(g_{4_8},G) = \sum\limits_{e \in E} |\mathrm{Star}_e|.(|V| - |\mathcal{N}(u) \cup \mathcal{N}(v)|) - 2.f(g_{4_6},G) \nonumber
\end{equation}
\end{lemma}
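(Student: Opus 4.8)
The plan is to follow exactly the template established in the preceding lemmas, instantiating Theorem~\ref{thm:inc_exc} with the property pair $a_i = S_u \vee S_v$ and $a_j = I$. First I would sum the inclusion--exclusion identity over all edges and invoke the addition principle to obtain
\[
\sum_{e \in E} N^{(e)}_{S_u \vee S_v, I, 0} = \sum_{e \in E} N^{(e)}_{S_u \vee S_v, I} - \sum_{e \in E} N^{(e)}_{S_u \vee S_v, I, 1}.
\]

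Next I would evaluate the unrestricted count. Splitting on the two disjoint star roles gives $N^{(e)}_{S_u \vee S_v, I} = N^{(e)}_{S_u, I} + N^{(e)}_{S_v, I}$, and since each term pairs distinct properties the product formula in Eq.~\eqref{eqn:comb_n_ai_aj} applies. The number of nodes independent of $e$ is $|V| - |\mathcal{N}(u) \cup \mathcal{N}(v)|$, so $N^{(e)}_{S_u, I} = |\mathrm{Star}_u| \cdot (|V| - |\mathcal{N}(u) \cup \mathcal{N}(v)|)$ and likewise for $S_v$; adding them and using $|\mathrm{Star}_e| = |\mathrm{Star}_u| + |\mathrm{Star}_v|$ yields $N^{(e)}_{S_u \vee S_v, I} = |\mathrm{Star}_e| \cdot (|V| - |\mathcal{N}(u) \cup \mathcal{N}(v)|)$, which is exactly the leading sum in the claimed relation.

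The remaining step converts the two restricted sums into graphlet frequencies via the relevant corollaries, and this is where the only genuine care is needed---pinning down the overcounting multiplicities. By Corollary~\ref{thm:path2}, $N^{(e)}_{S_u \vee S_v, I, 1}$ counts each $4$-path once for each of its two terminal (purple) edges, so $\sum_{e \in E} N^{(e)}_{S_u \vee S_v, I, 1} = 2\,f(g_{4_6},G)$; by Corollary~\ref{thm:4node_2star}, $N^{(e)}_{S_u \vee S_v, I, 0}$ counts each $4$-node-$2$-star once for each of its two star edges, so $\sum_{e \in E} N^{(e)}_{S_u \vee S_v, I, 0} = 2\,f(g_{4_8},G)$. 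Substituting both into the summed identity produces the statement directly.

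I expect the sole subtle point to be justifying these two factors of $2$ from the transition diagram: that the $4$-path contributes through each of its two end edges (its middle edge being already accounted for in Lemma~\ref{thm:rel-cyc-path}), and that the $4$-node-$2$-star---a $2$-star on three nodes together with an isolated node---contributes through each of its two star edges. Once those multiplicities are fixed, the substitution is mechanical and the lemma follows.
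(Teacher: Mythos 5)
Your proposal is correct and follows essentially the same route as the paper's proof: the same instantiation of Theorem~\ref{thm:inc_exc} with the pair $(S_u \vee S_v,\, I)$, the same splitting $N^{(e)}_{S_u \vee S_v,I} = N^{(e)}_{S_u,I} + N^{(e)}_{S_v,I}$ evaluated via Eq.~\eqref{eqn:comb_n_ai_aj}, and the same multiplicity-of-two arguments from Corollaries~\ref{thm:path2} and~\ref{thm:4node_2star} (each $4$-path counted once per terminal purple edge, each $4$-node-$2$-star once per star edge). You also correctly pinpoint the only delicate step --- justifying those two factors of $2$ from the edge roles in the transition diagram --- which is precisely where the paper's proof does its work.
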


\begin{proof}
From Theorem~\ref{thm:inc_exc} and the addition principle~\cite{stanley1986enumerative}, the total count for all edges in $G$ is,  
\begin{equation}\label{eqn:add_n_SuSv_I}
\sum\limits_{e \in E} N^{(e)}_{S_u \vee S_v,I,0} = \sum\limits_{e \in E} N^{(e)}_{S_u \vee S_v,I} -\sum\limits_{e \in E} N^{(e)}_{S_u \vee S_v,I,1}
\end{equation}

Given that $N^{(e)}_{S_u \vee S_v,I} = N^{(e)}_{S_u,I} +  N^{(e)}_{S_v,I}$ is the number of $4$-node subgraphs $\{u,v,w,r\}$ containing $e$, such that $A_w=S_u \vee S_v, A_r=I$. And, the number of nodes independent of $e$ is $|V| - |\mathcal{N}(u) \cup \mathcal{N}(v)|$. Thus, from Eq.~\eqref{eqn:comb_n_ai_aj}, $N^{(e)}_{S_u \vee S_v,I} = |\mathrm{Star}_e|.\left( |V| - |\mathcal{N}(u) \cup \mathcal{N}(v)| \right)$, such that $|\mathrm{Star}_e| = |\mathrm{Star}_u| + |\mathrm{Star}_v|$. Now, from Corollary~\ref{thm:path2}, each $4$-path is counted $2$ times (for both the start and end edges in the path, denoted by the purple in Fig.~\ref{fig:transition_diagram}). Thus, the total count of $4$-paths in $G$ is $f(g_{4_6},G) = \frac{1}{2}.\sum\limits_{e \in E} N^{(e)}_{S_u \vee S_v,I,1}$. Similarly, from Corollary~\ref{thm:4node_2star}, each $4$-node-$2$-star will be counted $2$ times (once for each edge in the star, denoted by the purple in Fig.~\ref{fig:transition_diagram}). Thus, the total count of $4$-node-$2$-star in $G$ is $f(g_{4_8},G) = \frac{1}{2}.\sum\limits_{e \in E} N^{(e)}_{S_u \vee S_v,I,0}$. By direct substitution in Eq.~\eqref{eqn:add_n_SuSv_I}, this lemma is true. 
\end{proof}

\subsubsection{Relationship between $4$-node-$2$-edges \& $4$-node-$1$-edge}
\begin{cor}\label{thm:4node-2edgs}
For any edge $e=(u,v)$ in the graph, the number of $4$-node-$2$-edges where $e$ is any of the two independent edges in the graphlet (denoted by the orange color in Fig.~\ref{fig:transition_diagram}), is $N^{(e)}_{I,I,1}$.
\end{cor}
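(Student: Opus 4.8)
The plan is to follow the same if-and-only-if argument used in the proof of Corollary~\ref{thm:ttri3}, establishing a one-to-one correspondence between the $4$-node subgraphs counted by $N^{(e)}_{I,I,1}$ and the $4$-node-$2$-edge graphlets in which $e$ plays the role of one of the two independent edges.

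First I would fix an edge $e=(u,v)$ and consider an arbitrary $4$-node subgraph $\{u,v,w,r\}$ containing it, with $w,r \in V \setminus \{u,v\}$ and $w \neq r$. I would then unpack what $A_w = I$, $A_r = I$, and $e'_{wr}=1$ mean in terms of the induced subgraph of Definition~\ref{def:ind_graphlet}: the condition $A_w = I$ forces $w \notin \mathcal{N}(u) \cup \mathcal{N}(v)$, so neither $(u,w)$ nor $(v,w)$ is present; likewise $A_r = I$ removes $(u,r)$ and $(v,r)$; and $e'_{wr}=1$ adds the edge $(w,r)$. Since the graphlet is induced, these are the only edges whose presence is constrained, so the induced subgraph on $\{u,v,w,r\}$ contains exactly the two edges $(u,v)$ and $(w,r)$, which share no endpoint. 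This is precisely the $4$-node-$2$-edge pattern $g_{4_9}$ with $e$ as one of its two independent edges.

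For the converse, I would start from a $4$-node-$2$-edge graphlet in which $e=(u,v)$ is one of the two independent edges. Its only other edge must be disjoint from $e$, hence of the form $(w,r)$ with $\{w,r\} \cap \{u,v\} = \emptyset$, giving $e'_{wr}=1$; and because the two edges are independent and are the only edges present, $w$ and $r$ are each disconnected from both $u$ and $v$, i.e. $A_w = A_r = I$. Combining both directions yields the claimed biconditional, and since $N^{(e)}_{I,I,1}$ counts each unordered pair $\{w,r\}$ once, as in Eq.~\eqref{eqn:comb_n_ai_ai}, each such graphlet is tallied exactly once for the edge $e$.

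The main thing to be careful about is the induced-subgraph bookkeeping: I must argue that no additional edges can appear among $\{u,v,w,r\}$, which is exactly where the $I$-properties of \emph{both} $w$ and $r$ are used to rule out all four cross edges between $\{w,r\}$ and $\{u,v\}$. The inclusion $N^{(e)}_{I,I,1} \leq N^{(e)}_{I,I}$ from Theorem~\ref{thm:inc_exc} then confirms that this restricted count is well defined as a subset of the unrestricted count $N^{(e)}_{I,I}$, completing the argument.
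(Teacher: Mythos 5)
Your proof is correct and matches the paper's approach: the paper states Corollary~\ref{thm:4node-2edgs} without an explicit proof, but its proof of the analogous Corollary~\ref{thm:ttri3} is exactly the if-and-only-if correspondence you carry out, here specialized to $A_w = A_r = I$ and $e'_{wr}=1$. If anything, your version is slightly more complete, since you explicitly do the induced-subgraph bookkeeping (using both $I$-properties to rule out all four cross edges between $\{w,r\}$ and $\{u,v\}$, so the graphlet has exactly the two disjoint edges $(u,v)$ and $(w,r)$), which the paper leaves implicit.
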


\begin{cor}\label{thm:4node-1edg}
For any edge $e=(u,v)$ in the graph, the number of $4$-node-$1$-edge where $e$ is an isolated/single edge in the graphlet (denoted by the orange color in Fig.~\ref{fig:transition_diagram}), is $N^{(e)}_{I,I,0}$.
\end{cor}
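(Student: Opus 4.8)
The plan is to establish a single membership characterization and read the count off directly, exactly in the spirit of the proof of Corollary~\ref{thm:ttri3}. Concretely, I would show that a $4$-node subgraph $\{u,v,w,r\}$ containing the edge $e=(u,v)$ is a $4$-node-$1$-edge graphlet (isomorphic to $g_{4_{10}}$) with $e$ as its isolated edge \emph{if and only if} $A_w = I$, $A_r = I$, and $e'_{wr} = 0$. Once this equivalence is in hand, the number of such graphlets per edge is, by Definition~\eqref{eqn:n_ai_aj_1}, precisely $N^{(e)}_{I,I,0}$, which is the claim.

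First I would fix a subgraph $\{u,v,w,r\}$ with $e=(u,v) \in E$ and argue both directions. For the forward direction, if $\{u,v,w,r\}$ is a $4$-node-$1$-edge graphlet whose unique edge is $e$, then neither $w$ nor $r$ can be adjacent to $u$ or to $v$ (otherwise a second edge would appear), so both are independent of $e$, giving $A_w=I$ and $A_r=I$; likewise $w$ and $r$ are non-adjacent, so $e'_{wr}=0$. For the reverse direction, I would suppose $A_w=I$, $A_r=I$, and $e'_{wr}=0$. The conditions $A_w=I$ and $A_r=I$ mean $w,r \notin \mathcal{N}(u)\cup\mathcal{N}(v)$, so among the pairs drawn from $\{u,v,w,r\}$ the only candidate edge besides $e$ is $(w,r)$, which is excluded by $e'_{wr}=0$. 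Hence $e$ is the unique edge of the induced subgraph, so $\{u,v,w,r\}$ is exactly the $4$-node-$1$-edge pattern with $e$ as its isolated edge.

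Because each such graphlet is a distinct $4$-element set counted exactly once in $N^{(e)}_{I,I,0}$, the equivalence yields that the number of $4$-node-$1$-edge graphlets having $e$ as their single edge equals $N^{(e)}_{I,I,0}$. Finally I would note, as in Theorem~\ref{thm:inc_exc}, that this quantity is expressible by inclusion-exclusion as $N^{(e)}_{I,I} - N^{(e)}_{I,I,1}$, so it is computable in constant time from the unrestricted count $N^{(e)}_{I,I}$ (available from the $3$-node counts) together with $N^{(e)}_{I,I,1}$ from Corollary~\ref{thm:4node-2edgs}. I do not expect a genuine obstacle here; the argument is purely a role characterization, and the only point demanding care is completeness of the ``only if'' direction---verifying that the three conditions jointly forbid \emph{every} edge other than $e$, so that the induced pattern is the single-edge graphlet and not, for instance, the $4$-node-$2$-edge graphlet of Corollary~\ref{thm:4node-2edgs}.
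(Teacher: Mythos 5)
Your proof is correct and follows exactly the approach the paper uses: the paper states Corollary~\ref{thm:4node-1edg} without an explicit proof, but the intended argument is the same membership characterization it spells out for Corollary~\ref{thm:ttri3}, namely that $\{u,v,w,r\}$ is the $4$-node-$1$-edge pattern with $e$ as the isolated edge if and only if $A_w=I$, $A_r=I$, and $e'_{wr}=0$, so the count per edge is precisely $N^{(e)}_{I,I,0}$ by the definition in Eq.~\eqref{eqn:n_ai_aj_1}. Your closing remark on computing this via Theorem~\ref{thm:inc_exc} as $N^{(e)}_{I,I}-N^{(e)}_{I,I,1}$ also matches the paper's discussion following Lemma~\ref{thm:rel-4node-2edge-1edge}.
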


\begin{lemma}\label{thm:rel-4node-2edge-1edge}
For any graph $G$, the relationship between the counts of $4$-node-$2$-edge graphlets (\ie, $f(g_{4_9},G)$) and $4$-node-$1$-edge graphlets (\ie, $f(g_{4_{10}},G)$) is,
\begin{equation}
f(g_{4_{10}},G) = \sum\limits_{e \in E} {{|V| - |\mathcal{N}(u) \cup \mathcal{N}(v)|} \choose 2} - 2.f(g_{4_9},G) \nonumber
\end{equation}
\end{lemma}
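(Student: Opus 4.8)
The plan is to mirror exactly the structure of the preceding lemmas (Lemma~\ref{thm:rel-cliq-chcy} through Lemma~\ref{thm:rel-4node-2star-path}), since this final statement concerns the last complementary pair in the transition diagram. First I would invoke the two preceding corollaries, Corollary~\ref{thm:4node-2edgs} and Corollary~\ref{thm:4node-1edg}, which already establish the per-edge identities: the number of $4$-node-$2$-edge graphlets in which $e$ plays the role of one of the two independent edges is $N^{(e)}_{I,I,1}$, and the number of $4$-node-$1$-edge graphlets in which $e$ is the lone isolated edge is $N^{(e)}_{I,I,0}$. These give me the local (per-edge) quantities; the remaining work is purely the global summation and the correct counting multiplicities.

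The backbone of the argument is the sum over all edges of the inclusion-exclusion identity from Theorem~\ref{thm:inc_exc}, namely
\begin{equation}\label{eqn:add_n_I_I}
\sum\limits_{e \in E} N^{(e)}_{I,I,0} = \sum\limits_{e \in E} N^{(e)}_{I,I} - \sum\limits_{e \in E} N^{(e)}_{I,I,1} \nonumber
\end{equation}
justified by the addition principle. I then need to evaluate each of the three sums. For the unrestricted term, since both $w$ and $r$ carry the same property $I$ (independent of $e$), I would apply Eq.~\eqref{eqn:comb_n_ai_ai} with $\mathcal{W}_{I}$ the set of nodes disconnected from $e$, whose cardinality is $|V| - |\mathcal{N}(u) \cup \mathcal{N}(v)|$; this yields $N^{(e)}_{I,I} = \binom{|V| - |\mathcal{N}(u) \cup \mathcal{N}(v)|}{2}$, matching the leading term in the claimed identity.

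The crux of the proof is getting the two multiplicities right, and this is the step I expect to require the most care. From Corollary~\ref{thm:4node-1edg}, each $4$-node-$1$-edge graphlet has exactly one edge (the single isolated edge), so it is counted exactly once across the edge-sum, giving $f(g_{4_{10}},G) = \sum_{e \in E} N^{(e)}_{I,I,0}$. From Corollary~\ref{thm:4node-2edgs}, each $4$-node-$2$-edge graphlet contains two disjoint edges, and either of them can play the role of $e$ while the other supplies the pair $(w,r)$; hence each such graphlet is counted exactly $2$ times, giving $f(g_{4_9},G) = \frac{1}{2}\sum_{e \in E} N^{(e)}_{I,I,1}$, equivalently $\sum_{e \in E} N^{(e)}_{I,I,1} = 2\,f(g_{4_9},G)$. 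Substituting these three evaluations back into the summed inclusion-exclusion identity yields
\begin{equation}
f(g_{4_{10}},G) = \sum\limits_{e \in E} \binom{|V| - |\mathcal{N}(u) \cup \mathcal{N}(v)|}{2} - 2\,f(g_{4_9},G) \nonumber
\end{equation}
which is exactly the claim. The only genuinely nontrivial point is justifying the factor of $2$ for the $4$-node-$2$-edge graphlet, which follows from the fact that its two edges are interchangeable under the symmetry of the pattern and each independently satisfies the orange edge role of the transition diagram.
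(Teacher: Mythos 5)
Your proposal is correct and follows essentially the same route as the paper's proof: summing the inclusion-exclusion identity of Theorem~\ref{thm:inc_exc} over all edges, evaluating $N^{(e)}_{I,I} = \binom{|V| - |\mathcal{N}(u) \cup \mathcal{N}(v)|}{2}$ via Eq.~\eqref{eqn:comb_n_ai_ai}, and applying Corollaries~\ref{thm:4node-2edgs} and~\ref{thm:4node-1edg} with the multiplicities $2$ and $1$ respectively. Your added justification of the factor of $2$ (the two interchangeable independent edges of $g_{4_9}$, each in the orange role) matches the paper's parenthetical remark exactly.
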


\begin{proof}
From Theorem~\ref{thm:inc_exc} and the addition principle~\cite{stanley1986enumerative}, the total count for all edges in $G$ is,  
\begin{equation}\label{eqn:add_n_I_I}
\sum\limits_{e \in E} N^{(e)}_{I,I,0} = \sum\limits_{e \in E} N^{(e)}_{I,I} -\sum\limits_{e \in E} N^{(e)}_{I,I,1}
\end{equation}

Given that $N^{(e)}_{I,I}$ is the number of $4$-node subgraphs $\{u,v,w,r\}$ containing $e$, such that $A_w=I, A_r=I$. And, the number of nodes independent of $e$ is $|V| - |\mathcal{N}(u) \cup \mathcal{N}(v)|$. Thus, from Eq.~\eqref{eqn:comb_n_ai_ai}, $N^{(e)}_{I,I} = {{|V| - |\mathcal{N}(u) \cup \mathcal{N}(v)|} \choose 2}$. Now, from Corollary~\ref{thm:4node-2edgs}, each $4$-node-$2$-edge is counted $2$ times (for the two edges in the graphlet, denoted by the orange in Fig.~\ref{fig:transition_diagram}). Thus, the total count of $4$-node-$2$-edges in $G$ is $f(g_{4_{9}},G) = \frac{1}{2}.\sum\limits_{e \in E} N^{(e)}_{I,I,1}$. Similarly, from Corollary~\ref{thm:4node-1edg}, each $4$-node-$1$-edge will be counted once (for the isolated/single edge in the graphlet, denoted by the orange in Fig.~\ref{fig:transition_diagram}). Thus, the total count of $4$-node-$1$-edge in $G$ is $f(g_{4_{10}},G) = \sum\limits_{e \in E} N^{(e)}_{I,I,0}$. By direct substitution in Eq.~\eqref{eqn:add_n_I_I}, this lemma is true. 
\end{proof}

While it is straightforward to compute $N^{(e)}_{I,I}$ for each edge $e$, this is not the case for $N^{(e)}_{I,I,1}$ or $N^{(e)}_{I,I,0}$, as they require searching outside the local edge neighborhood. However, since $N^{(e)}_{I,I,1}$ is the number of edges outside the egonet of $e$, it can be computed as,
\begin{align*}
N^{(e)}_{I,I,1} &= |E| - |\mathcal{N}(u) \setminus \{v\}| - |\mathcal{N}(v) \setminus \{u\}| - |\{e\}|  \\
&\quad - [N^{(e)}_{T,T,1} + N^{(e)}_{T,S_u \vee S_v,1} + N^{(e)}_{T,I,1}]  \\ 
&\quad - [N^{(e)}_{S_{\textbf{.}},S_{\textbf{.}},1} + N^{(e)}_{S_u,S_v,1} + N^{(e)}_{S_{\textbf{.}},I,1}]
\end{align*}

\noindent
Thus, the total number of $4$-node-$2$-edges is,
\begin{align}\label{eqn:g_4_9}
2.f(g_{4_9},G) &= \sum\limits_{e \in E} N^{(e)}_{I,I,1} \\
&= \sum\limits_{e \in E} |E| - |\mathcal{N}(u) \setminus \{v\}| - |\mathcal{N}(v) \setminus \{u\}| - |\{e\}| \nonumber \\
&\quad - [6.f(g_{4_1},G) + 4.f(g_{4_2},G) + 2.f(g_{4_3},G)] \nonumber \\
&\quad - [4.f(g_{4_4},G) + 2.f(g_{4_6},G)] \nonumber
\end{align}

\noindent
Finally, the number of $4$-node-independent graphlets ($g_{4_{11}}$) is,
\begin{equation}\label{eqn:g_4_11}
f(g_{4_{11}},G) = {|V| \choose 4} - \sum\limits_{i=1}^{10} f(g_{4_i},G)
\end{equation}

\subsection{Algorithm}
Algorithm~\ref{alg:parallel-graphlet_all} (\textsc{GraphletCounting}) shows how to count all graphlets of size $k=\{3,4\}$ nodes efficiently (using Lemma~\ref{thm:rel-cliq-chcy}---~\ref{thm:rel-4node-2edge-1edge}). As discussed previously, we start by finding all triangle and $2$-star patterns in Lines~\ref{tri_st}--\ref{tri_en} (\ie, \textsc{Step~$1$}). Then, in Lines~\ref{cliqcycy_st}---\ref{cliqcycy_en} we only count $4$-cliques and $4$-cycles (\ie, \textsc{Step~$2$}). Then, Lines~\ref{ubounds_st}---\ref{ubounds_en} compute unrestricted counts for all $4$-node graphlets in constant time (using knowledge from \textsc{Step~$1$} and $2$, \ie, \textsc{Step~$3$}), and finally Lines~\ref{getcounts_st}---\ref{getcounts_en} compute the final counts (using the lemma proved in Section~\ref{sec:lemmas}) (\ie, \textsc{Step~$4$}). Our approach counts all $4$-cliques and $4$-cycles in $\mathcal{O}(m.\Delta.T_{max})$ and $\mathcal{O}(m.\Delta.S_{max})$ respectively, where $T_{max}$ is the maximum number of triangles incident to an edge and $T_{max} \ll \Delta$ for sparse graphs, and $S_{max}$ is the maximum number of stars incident to an edge and $S_{max} \leq \Delta$,  as we show in Lemma~\ref{thm:clique_4} and \ref{thm:cycle_4}. This is more efficient than $\mathcal{O}(|V|.\Delta^3)$ given by~\cite{shervashidze2009efficient}, and $\mathcal{O}(\Delta.|E|+|E|^2)$ given by~\cite{marcus2012rage}.

\noindent
\begin{lemma}\label{thm:clique_4}
Alg.~\ref{alg:parallel-graphlet_all} counts all $4$-cliques in $\mathcal{O}(|E|.\Delta.T_{max})$, where $T_{max}$ is the maximum number of triangles incident to an edge. 
\end{lemma}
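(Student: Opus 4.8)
The plan is to bound the work Algorithm~\ref{alg:parallel-graphlet_all} spends on $4$-cliques at each edge and then sum over all edges. By Corollary~\ref{thm:cliq}, the number of $4$-cliques containing an edge $e=(u,v)$ is $N^{(e)}_{T,T,1}$, i.e.\ the number of pairs $w,r\in\mathrm{Tri}_e$ with $(w,r)\in E$; equivalently, the number of edges induced on the triangle set $\mathrm{Tri}_e=\mathcal{N}(u)\cap\mathcal{N}(v)$. So the whole task reduces, on each edge, to counting the edges inside $\mathrm{Tri}_e$.

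First I would recall that Step~$1$ (Lines~\ref{alg:tri_start}--\ref{alg:tri_end} of Algorithm~\ref{alg:parallel-graphlet}) builds $\mathrm{Tri}_e$ in $O(|\mathcal{N}(u)|+|\mathcal{N}(v)|)=O(\Delta)$ time by Lemma~\ref{thm:graphlet_3}, and that we may additionally mark the nodes of $\mathrm{Tri}_e$ in the array $X$ at cost $O(|\mathrm{Tri}_e|)=O(\Delta)$, giving $O(1)$ membership tests. To evaluate $N^{(e)}_{T,T,1}$, the algorithm iterates over each $w\in\mathrm{Tri}_e$, scans its adjacency list $\mathcal{N}(w)$, and uses $X$ to test in $O(1)$ whether a neighbor $r$ also lies in $\mathrm{Tri}_e$; each such marked neighbor contributes one endpoint incidence to an edge of the induced subgraph, so dividing by $2$ recovers $N^{(e)}_{T,T,1}$.

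Next I would bound this scan. Its cost is $\sum_{w\in\mathrm{Tri}_e}|\mathcal{N}(w)|\le|\mathrm{Tri}_e|\cdot\Delta\le T_{max}\cdot\Delta$, since every adjacency list has length at most $\Delta$ and $|\mathrm{Tri}_e|\le T_{max}$ by the definition of $T_{max}$. Combining with the $O(\Delta)$ cost of constructing and marking $\mathrm{Tri}_e$ (and the $O(\Delta)$ reset of $X$), the work at edge $e$ is $O(\Delta+T_{max}\Delta)=O(T_{max}\Delta)$. Summing over all edges yields $\sum_{e\in E}O(T_{max}\Delta)=O(|E|\cdot\Delta\cdot T_{max})$, and the final division by $6$ (Lemma~\ref{thm:rel-cliq-chcy}) is a single $O(1)$ operation that does not affect the asymptotics.

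The main obstacle is tightening the per-edge scan bound: a naive argument that loops over all of $\mathcal{N}(e)$ only gives $O(\Delta^2)$ per edge. The improvement to $O(T_{max}\Delta)$ hinges on restricting the outer loop to the triangle nodes $\mathrm{Tri}_e$ (of which there are at most $T_{max}$) rather than the full neighborhood, together with the $O(1)$ membership test supplied by $X$. I would therefore make explicit that only the $|\mathrm{Tri}_e|\le T_{max}$ triangle nodes initiate adjacency-list scans, which is precisely what replaces one factor of $\Delta$ by $T_{max}$ and delivers the stated complexity.
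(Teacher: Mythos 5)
Your proof is correct and takes essentially the same route as the paper's: construct $\mathrm{Tri}_e$ in $O(\Delta)$ per edge (Lemma~\ref{thm:graphlet_3}), then restrict the adjacency-list scans to the at most $T_{max}$ triangle nodes with $O(1)$ membership tests, giving $O(\Delta + T_{max}\cdot\Delta)$ per edge and $\mathcal{O}(|E|\cdot\Delta\cdot T_{max})$ in total. (One cosmetic difference: the paper's \textsc{CliqueCount} resets $X(w)=0$ after scanning each $w\in\mathrm{Tri}_e$, so each induced edge is found exactly once and no division by $2$ is needed; this does not affect the asymptotics.)
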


\begin{proof}
For each edge $e=(u,v) \in E$, the runtime complexity of counting all $4$-cliques incident to $e$ is equivalent to finding the set of all edges $e'=(w,w')$ such that $\{e'=(w,w') \in E | w,w' \in \mathrm{Tri}_e \wedge w \neq w'\}$, where $\mathrm{Tri}_e$ is the set of triangles incident to $e$. First, we show in Lem.~\ref{thm:graphlet_3} that the runtime complexity of finding all triangles incident to $e$ is $\mathcal{O}(\Delta)$. Second, as described in Alg.~\ref{alg:parallel-graphlet_all} the runtime complexity of checking whether any two distinct nodes $w, w' \in \mathrm{Tri}_e$ are connected by an edge $e'=(w,w')$ is $\mathcal{O}(\sum\limits_{w \in \mathrm{Tri}_e} \Delta) = \mathcal{O}(|\mathrm{Tri}_e|.\Delta)$, and can be computed asymptotically $\mathcal{O}(T_{max}.\Delta)$, where $T_{max}$ is the maximum triangle degree (i.e., the maximum number of triangles incident to an edge and $T_{max} \ll \Delta$). Therefore, the total runtime complexity is $\mathcal{O} \Big(\sum\limits_{e \in E} (\Delta + T_{max}.\Delta) \Big) = \mathcal{O}(|E|.\Delta.T_{max})$.
\end{proof}

\noindent
\begin{lemma}\label{thm:cycle_4}
Alg.~\ref{alg:parallel-graphlet_all} counts all $4$-cycles of size $k=4$ in $\mathcal{O}(|E|.\Delta.S_{max})$, where $S_{max}$ is the maximum number of $2$-stars incident to an edge (proof is similar to Lem.~\ref{thm:clique_4}).  
\end{lemma}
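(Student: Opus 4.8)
The plan is to follow the template of the proof of Lemma~\ref{thm:clique_4} almost verbatim, substituting the star sets for the triangle set $\mathrm{Tri}_e$ and the $4$-cycle characterization of Corollary~\ref{thm:cyc} for the $4$-clique one. By Corollary~\ref{thm:cyc}, for each edge $e=(u,v)$ the number of $4$-cycles through $e$ equals $N^{(e)}_{S_u,S_v,1}$, i.e. the number of pairs $(w,r)$ with $w \in \mathrm{Star}_u$, $r \in \mathrm{Star}_v$, and $(w,r) \in E$. Hence counting the $4$-cycles incident to $e$ reduces to enumerating, among the cross pairs between $\mathrm{Star}_u$ and $\mathrm{Star}_v$, exactly those joined by an edge.

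First I would invoke Lemma~\ref{thm:graphlet_3} to record that the star sets $\mathrm{Star}_u$ and $\mathrm{Star}_v$ are produced while scanning $\mathcal{N}(u)$ and $\mathcal{N}(v)$, at cost $\mathcal{O}(|\mathcal{N}(u)|+|\mathcal{N}(v)|) = \mathcal{O}(\Delta)$ per edge. Next I would bound the cost of detecting the connected cross pairs: iterate over each $w \in \mathrm{Star}_u$ and scan its adjacency list $\mathcal{N}(w)$ (of size at most $\Delta$), testing membership ``$r \in \mathrm{Star}_v$'' in $\mathcal{O}(1)$ time via the marking array $X$ used in Alg.~\ref{alg:parallel-graphlet_all}. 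This yields $\mathcal{O}\big(\sum_{w \in \mathrm{Star}_u} \Delta\big) = \mathcal{O}(|\mathrm{Star}_u|.\Delta)$, which is asymptotically $\mathcal{O}(S_{max}.\Delta)$, where $S_{max}$ is the maximum number of $2$-stars incident to an edge. Summing the per-edge cost $\mathcal{O}(\Delta + S_{max}.\Delta)$ over all edges gives the total $\mathcal{O}\big(\sum_{e \in E}(\Delta + S_{max}.\Delta)\big) = \mathcal{O}(|E|.\Delta.S_{max})$, establishing the claim.

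The main obstacle is ensuring that the membership test is genuinely $\mathcal{O}(1)$ rather than logarithmic or linear in the star size; this relies on the same flag/hashing bookkeeping as in Alg.~\ref{alg:parallel-graphlet} (setting $X(\cdot)$ on the nodes of $\mathrm{Star}_v$ and clearing them after each edge). I would therefore verify that this marking-and-reset step is itself bounded by $\mathcal{O}(\Delta)$ per edge so that it is absorbed into the star-construction cost and does not inflate the stated bound. A secondary, purely cosmetic point distinguishing this from the clique case is that $S_{max}$ can be as large as $\Delta$ (whereas $T_{max} \ll \Delta$ for sparse graphs), so the bound degrades to $\mathcal{O}(|E|.\Delta^2)$ in the worst case; this affects only the interpretation of the constant and not the validity of the asymptotic statement.
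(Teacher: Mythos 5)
Your proof is correct and follows the paper's own argument essentially verbatim: the same reduction of per-edge $4$-cycle counting to detecting edges between $\mathrm{Star}_u$ and $\mathrm{Star}_v$, the same $\mathcal{O}(\Delta)$ star-construction cost via Lemma~\ref{thm:graphlet_3}, the same $\mathcal{O}(|\mathrm{Star}_u|.\Delta) = \mathcal{O}(S_{max}.\Delta)$ adjacency-scan bound, and the same summation over all edges yielding $\mathcal{O}(|E|.\Delta.S_{max})$. Your extra care about the $\mathcal{O}(1)$ membership test and the $\mathcal{O}(\Delta)$ mark-and-reset of the array $X$ merely makes explicit what the paper leaves implicit in its \textsc{CycleCount} procedure.
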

\begin{proof}
For each edge $e=(u,v) \in E$, the runtime complexity of counting all $4$-cycles incident to $e$ is equivalent to finding the set of all edges $e'=(w,w')$ such that $\{e'=(w,w') \in E | w \in \mathrm{Star}_u \wedge w' \in \mathrm{Star}_v, w \neq w'\}$. First, we show in Lem.~\ref{thm:graphlet_3} that the runtime complexity of finding all $2$-star patterns incident to $e$ is $\mathcal{O}(\Delta)$. Second, Alg.~\ref{alg:parallel-graphlet_all} shows the runtime complexity of checking whether any two distinct nodes $w \in \mathrm{Star}_u$, and $w' \in \mathrm{Star}_v$ are connected by an edge $e'=(w,w')$ is $\mathcal{O}(\sum\limits_{w \in \mathrm{Star}_u} \Delta) = \mathcal{O}(|\mathrm{Star}_u|.\Delta)$, and is asymptotically $\mathcal{O}(S_{max}.\Delta)$ (where $S_{max}$ is the maximum number of $2$-stars incident to an edge, and $S_{max} \leq \Delta$). Therefore, the total runtime complexity is $\mathcal{O} \Big(\sum\limits_{e \in E} (\Delta + S_{max}.\Delta) \Big) = \mathcal{O}(|E|.\Delta.S_{max})$.
\end{proof}

\algrenewcommand{\alglinenumber}[1]{\scriptsize#1:}
\begin{figure}
\vspace{-4.mm}
\begin{center}
\begin{minipage}{1.0\linewidth}
\begin{algorithm}[H]
\caption{\,\small{Our exact graphlet census algorithm for counting all $3,4$-node graphlets. The algorithm takes an undirected graph as input and returns the frequencies of all $3,4$-node graphlets}}
\label{alg:parallel-graphlet_all}
\begin{spacing}{1.2}
\fontsize{8}{9}\selectfont
\begin{algorithmic}[1]
\Procedure {GraphletCounting}{$G=\left (V,E\right )$}
\State Initialize Array $X$
\State $N_{T,T} = 0$, $N_{S_u,S_v} = 0$, $N_{T,S_u \vee S_v} = 0$, $N_{S_\textbf{.},S_\textbf{.}} = 0$
\State $N_{T,I} = 0$, $N_{S_u \vee S_v,I} = 0$, $N_{I,I} = 0$, $N_{I,I,1} = 0$
\parfor[$e=(u,v) \in E$]
\State $\mathrm{Star}_u = \emptyset,  \mathrm{Star}_v = \emptyset, \mathrm{Tri}_e = \emptyset$ 
\For {$w \in \mathcal{N}(u)$} \label{tri_st}
\If{$w = v$} \textbf{continue}
\EndIf
\State Add $w$ to $\mathrm{Star}_u$ and set $X(w) = 1$
\EndFor
\For {$w \in \mathcal{N}(v)$}
\If{$w = u$} \textbf{continue}
\EndIf
\If{$X(w) = 1$} \Comment{found triangle}
\State Add $w$ to $\mathrm{Tri}_e$ and set $X(w) = 2$
\State Remove $w$ from $\mathrm{Star}_u$
\Else \,Add $w$ to $\mathrm{Star}_v$ and set $X(w)=3$ \label{tri_en}
\EndIf
\EndFor
\State Compute $f(\mathcal{G}_3,G)$ as in Lines~\ref{alg:motifs3_st}---\ref{alg:motifs3_en} of Alg.~\ref{alg:parallel-graphlet}
\State //\,Get Counts of $4$-Cliques \& $4$-Cycles
\State $f(g_{4_1},G) \pluseq$ \textsc{CliqueCount}$(X,\mathrm{Tri}_e)$ \label{cliqcycy_st}
\State $f(g_{4_4},G) \pluseq$ \textsc{CycleCount}$(X,\mathrm{Star}_u)$ \label{cliqcycy_en}
\State //\,Get Unrestricted Counts for $4$-Node Connected Graphlets 
\State $N_{T,T} \pluseq {|\mathrm{Tri}_e| \choose 2}$ \label{ubounds_st}
\State $N_{S_u,S_v} \pluseq |\mathrm{Star}_u|.|\mathrm{Star}_v|$
\State $N_{T,S_u \vee S_v} \pluseq |\mathrm{Tri}_e|.(|\mathrm{Star}_u|+|\mathrm{Star}_v|)$
\State $N_{S_u,S_u} = {|\mathrm{Star}_u| \choose 2}$ and $N_{S_v,S_v} = {|\mathrm{Star}_v| \choose 2}$
\State $N_{S_\textbf{.},S_\textbf{.}} \pluseq N_{S_u,S_u} + N_{S_v,S_v}$
\State //\,Get Unrestricted Counts for $4$-Node Disconnected Graphlets
\State $N_{T,I} \pluseq \mathrm{Tri}_e . (|V| - |\mathcal{N}(u) \cup \mathcal{N}(v)|)$
\State $N_{S_u,I} = |\mathrm{Star}_u|.(|V| - |\mathcal{N}(u) \cup \mathcal{N}(v)|)$
\State $N_{S_v,I} = |\mathrm{Star}_v|.(|V| - |\mathcal{N}(u) \cup \mathcal{N}(v)|)$
\State $N_{S_u \vee S_v,I} \pluseq N_{S_u,I} +  N_{S_v,I}$
\State $N_{I,I} \pluseq {{|V| - |\mathcal{N}(u) \cup \mathcal{N}(v)|} \choose 2}$
\State $N_{I,I,1} \pluseq |E| - |\mathcal{N}(u) \setminus \{v\}| - |\mathcal{N}(v) \setminus \{u\}| - 1$ \label{ubounds_en}
\For {$w \in \mathcal{N}(v)$} $X(w) = 0$ \EndFor 
\endpar
\State Use Lemma~\ref{thm:rel-cliq-chcy}---\ref{thm:rel-4node-tri-ttri} to compute $f(g_{4_i},G)$ for $i=1:8$ \label{getcounts_st}
\State Use Eq.~\eqref{eqn:g_4_9} to compute $f(g_{4_9},G)$ and Lemma~\ref{thm:rel-4node-2edge-1edge} for $f(g_{4_{10}},G)$
\State Use Eq.~\eqref{eqn:g_4_11} to compute $f(g_{4_{11}},G)$ \label{getcounts_en}
\State \textbf{return} $f(\mathcal{G}_3,G), f(\mathcal{G}_4,G)$
\EndProcedure
\hrulefill
\Procedure {CliqueCount}{$X,\mathrm{Tri}_e$}
\State $\textrm{cliq}_e = 0$
\ForAll {\textrm{node} $w \in \mathrm{Tri}_e$}
	\For{$r \in \mathcal{N}(w)$}
		\If{$X(r) = 2$} $\textrm{cliq}_e \pluseq 1$ \Comment{found $4$-Clique}
		\EndIf
	\EndFor
	\State $X(w) = 0$ 
\EndFor
\State \textbf{return} $\textrm{cliq}_e$
\EndProcedure
\hrulefill
\Procedure {CycleCount}{$X,\mathrm{Star}_u$}
\State $\textrm{cyc}_e = 0$
\ForAll {\textrm{node} $w \in \mathrm{Star}_u$}
	\For{$r \in \mathcal{N}(w)$}
		\If{$X(r) = 3$} $\textrm{cyc}_e \pluseq 1$ \Comment{found $4$-Cycle}
		\EndIf
	\EndFor
	\State $X(w) = 0$
\EndFor
\State \textbf{return} $\textrm{cyc}_e$ 
\EndProcedure
\end{algorithmic}
\end{spacing}
\vspace{-1.mm}
\end{algorithm}
\end{minipage}
\end{center}
\vspace{-3.mm}
\end{figure}

\section{Experiments}
\label{sec:experiments}
We proceed by first demonstrating how fast our algorithm (Algorithm~\ref{alg:parallel-graphlet_all}) counts all graphlets of size $k=\{3,4\}$ (both connected and disconnected graphlets) on various networks. We make all our implementations, further experiments, and proofs available in an online appendix\footnote{{\scriptsize \url{http://nesreenahmed.com/graphlets}}}. In this paper, we show detailed results for $60$ networks categorized in $8$ broad classes from social, facebook~\cite{traud2012social}, biological, web, technological, co-authorship, infrastructure, among other domains~\cite{nr-aaai15} (see the links\footnote{{\scriptsize \url{http://networkrepository.com/}}} for data download). And, in the online appendix, we present a more extensive collection of $300$+ networks, including both large sparse networks as well as dense networks from the DIMACs challenge\footnote{{\scriptsize \url{http://dimacs.rutgers.edu/Challenges/}}}. Note that for all of the networks, we discard edge weights, self-loops, and edge direction. To the best of our knowledge, this is the largest study for graphlet counting, and these are the largest graphlet computations published to date. Our own implementation of Algorithm.~\ref{alg:parallel-graphlet_all} uses shared memory, but the algorithm is well-suited for other architectures.

{
\setlength{\tabcolsep}{2.5pt}
\begin{table}
\caption{Runtime \& Statistics for a Subset of $60$ Networks. The numbers are appended by $K$ for thousands, $M$ for millions, $B$ for billions, $T$ for trillions, and $P$ for quadrillions.}
\medskip
\label{table:graphlets-fb100}
\small\scriptsize
\scalebox{0.9}{
\begin{tabularx}{1.15\textwidth}{rrHHHrHrrHHrrrrrrHHHHHHHHrHrHH@{}}
\toprule
\multicolumn{25}{c}{} &
\multicolumn{3}{c}{\textbf{Seconds}}
\\
$\mathbf{graph}$ & 
$|V|$ & $\Delta$ & $\bar{d}$ & $\kappa$ & $|E|$ & $|E|_{\rm I}$ & $|g_{3_1}|$ & $|g_{3_2}|$ & $3_{\rm 1e}$ & $3_{\rm I}$ & $|g_{4_1}|$ & $|g_{4_2}|$ & $|g_{4_4}|$ & $|g_{4_6}|$ & $|g_{4_5}|$ & $|g_{4_3}|$ & $4_{\rm 1e}$ & $4_{\rm 2e}$ & $4_{\rm S3}$ & $4_T$ & $4_{\rm I}$ & $k$ & $w$ & $r$ &
Alg.\ref{alg:parallel-graphlet_all} & $chk$ & RAGE & & \\
\bottomrule
\\
\textsf{\tiny soc-brightkite}  & 57k & 1.1k & 7 & 0.11 & 213k & 1.6B & 494k & 12M & 12B & 30T & 2.9M & 12M & 2.7M & 533M & 1.3B & 114M & 341T & 22B & 672B & 28B & 0 & 45 & 46 & 0.01 & 0.2 & 0 & 273.03 &  & \\ 
\textsf{\tiny socfb-Berkeley13}  & 23k & 3.4k & 74 & 0.11 & 852k & 262M & 5.4M & 125M & 19B & 2.0T & 27M & 153M & 87M & 17B & 25B & 2.7B & 217T & 343B & 2.8T & 120B & 0 & 40 & 41 & 0.01 & 4.94 & 0 & 2514.59 &  & \\ 
\textsf{\tiny socfb-Wisconsin87}  & 24k & 3.5k & 70 & 0.12 & 836k & 283M & 4.9M & 107M & 20B & 2.2T & 23M & 121M & 59M & 12B & 21B & 1.9B & 232T & 335B & 2.5T & 114B & 0 & 51 & 52 & -0 & 3.93 & 0 & 1450.31 &  & \\ 
\textsf{\tiny socfb-FSU53}  & 28k & 2.6k & 74 & 0.15 & 1.0M & 384M & 7.9M & 130M & 28B & 3.5T & 63M & 242M & 95M & 16B & 10B & 2.9B & 389T & 515B & 3.5T & 214B & 0 & 50 & 51 & 0.1 & 5.55 & 0 & 2192.94 &  & \\ 
\textsf{\tiny socfb-MSU24}  & 32k & 5.3k & 69 & 0.12 & 1.1M & 523M & 6.5M & 139M & 36B & 5.6T & 33M & 183M & 106M & 16B & 32B & 2.6B & 575T & 606B & 4.4T & 206B & 0 & 48 & 49 & 0.01 & 5.67 & 0 & 1904.09 &  & \\ 
\textsf{\tiny socfb-Texas80}  & 32k & 1.8k & 77 & 0.15 & 1.2M & 497M & 9.6M & 160M & 38B & 5.2T & 68M & 316M & 122M & 21B & 11B & 3.9B & 595T & 717B & 5.0T & 299B & 0 & 73 & 74 & 0.16 & 7.53 & 0 & 2967.01 &  & \\ 
\textsf{\tiny socfb-Michigan23}  & 30k & 2.0k & 78 & 0.13 & 1.2M & 453M & 8.3M & 162M & 35B & 4.5T & 49M & 277M & 146M & 23B & 13B & 3.5B & 523T & 665B & 4.8T & 244B & 0 & 21 & 22 & 0.12 & 7.57 & 0 & 2995.83 &  & \\ 
\textsf{\tiny socfb-Indiana69}  & 30k & 1.4k & 87 & 0.14 & 1.3M & 441M & 9.4M & 181M & 38B & 4.3T & 60M & 269M & 141M & 25B & 13B & 3.8B & 564T & 822B & 5.3T & 275B & 0 & 64 & 65 & 0.13 & 8.44 & 0 & 3212.10 &  & \\ 
\textsf{\tiny socfb-UIllinois20}  & 31k & 4.6k & 82 & 0.14 & 1.3M & 473M & 9.4M & 172M & 39B & 4.8T & 64M & 273M & 130M & 23B & 27B & 3.8B & 587T & 772B & 5.2T & 283B & 0 & 26 & 27 & 0.03 & 7.88 & 0 & 3088.77 &  & \\ 
\textsf{\tiny socfb-UF21}  & 35k & 8.2k & 83 & 0.12 & 1.5M & 615M & 12M & 266M & 51B & 7.2T & 98M & 433M & 186M & 40B & 150B & 7.2B & 882T & 1.0T & 8.8T & 418B & 0 & 67 & 68 & -0.01 & 14.49 & 0 & N/A &  & \\ 
\textsf{\tiny soc-flickr}  & 514k & 4.4k & 12 & 0.15 & 3.2M & 132B & 59M & 963M & 1.6T & 23P & 1.7B & 14B & 6.7B & 244B & 326B & 90B & 420P & 4.7T & 493T & 30T & 0 & 0 & 1 & 0.16 & 182.57 & 0 & N/A &  & \\ 
\textsf{\tiny soc-orkut}  & 3.1M & 33k & 76 & 0.04 & 117M & 4.7T & 628M & 44B & 360T & 17E & 3.2B & 48B & 70B & 19T & 98T & 1.5T & 18E & 6.8P & 134P & 1.9P & 0 & 252 & 253 & 0.02 & 2694.55 & 0 & N/A &  & \\ 
\textsf{\tiny soc-sinaweibo}  & 58M &  &  &  & 261M &  & 212M & 804B &  &  & 662M & 27B & 259B & 157T & 8.48P & 3.80T &  &  &  &  &  &  &  &  & 33359.7 & 0 & N/A &  & \\ 
\textsf{\tiny soc-friendster}  & 65.6M &  &  &  & 1.8B &  & 4.17B & 708.1B &  &  & 8.96B & 131.4B & 307.5B & 364.7T & 247.3T & 5.79T &  &  &  &  &  &  &  &  & N/A & 0 & N/A &  & \\ 
\midrule
\textsf{\tiny bio-celegans}  & 453 & 237 & 8 & 0.12 & 2.0k & 100k & 3.3k & 69k & 765k & 15M & 3.0k & 37k & 4.5k & 495k & 2.9M & 363k & 147M & 1.0M & 21M & 1.0M & 0 & 4 & 5 & -0.23 & \textless$0.001$ & 0 & 1.7 & 4.05 & \\ 
\textsf{\tiny bio-diseasome}  & 516 & 50 & 4 & 0.43 & 1.2k & 132k & 1.4k & 5.4k & 596k & 22M & 1.4k & 923 & 42 & 18k & 27k & 19k & 148M & 653k & 2.6M & 672k & 0 & 2 & 3 & 0.07 & \textless$0.001$ & 0 & 0.44 & 0.07 & \\ 
\textsf{\tiny bio-dmela}  & 7.4k & 190 & 6 & 0.01 & 26k & 27M & 2.9k & 572k & 188M & 67B & 393 & 13k & 107k & 11M & 9.2M & 312k & 689B & 314M & 4.2B & 21M & 0 & 9 & 10 & -0.05 & 0.01 & 0 & 2.47 & 19.02 & \\ 
\textsf{\tiny bio-yeast-protein-inter}  & 1.8k & 56 & 2 & 0.06 & 2.2k & 1.7M & 222 & 11k & 4.0M & 1.0B & 41 & 198 & 140 & 31k & 72k & 2.6k & 3.7B & 2.4M & 21M & 406k & 0 & 4 & 5 & -0.16 & \textless$0.001$ & 0 & 0.53 & 0.13 & \\ 
\textsf{\tiny bio-yeast}  & 1.5k & 56 & 2 & 0.05 & 1.9k & 1.1M & 206 & 11k & 2.8M & 513M & 39 & 195 & 139 & 31k & 72k & 2.5k & 2.0B & 1.9M & 16M & 297k & 0 & 1 & 2 & -0.21 & \textless$0.001$ & 0 & 0.43 & 0.08 & \\ 
\textsf{\tiny bio-human-gene2}  & 14k & 7.2k & 1.3k & 0.59 & 9.0M & 89M & 4.9B & 10B & 91B & 353B & 2.3T & 3.7T & 90B & 4.4T & 5.3T & 8.4T & 445T & 14T & 94T & 44T & 0 & 1.9k & 1.9k & 0.08 & 8023.84 & 0 & N/A &  & \\ 
\textsf{\tiny bio-mouse-gene}  & 43k & 8.0k & 670 & 0.42 & 14M & 915M & 3.6B & 15B & 583B & 13T & 670B & 2.1T & 223B & 9.0T & 6.7T & 7.7T & 12P & 81T & 586T & 141T & 0 & 1.0k & 1.0k & 0 & 5515.6 & 0 & N/A &  & \\ 
\midrule
\textsf{\tiny ca-CSphd}  & 1.9k & 46 & 1 & 0 & 1.7k & 1.8M & 8 & 6.6k & 3.3M & 1.1B & 0 & 5 & 8 & 9.4k & 32k & 93 & 3.0B & 1.5M & 12M & 15k & 0 & 1 & 2 & -0.2 & \textless$0.001$ & 0 & 1.25 &  & \\ 
\textsf{\tiny ca-GrQc}  & 4.2k & 81 & 6 & 0.63 & 13k & 8.6M & 48k & 85k & 55M & 12B & 329k & 66k & 1.1k & 553k & 406k & 628k & 114B & 88M & 348M & 196M & 0 & 2 & 3 & 0.64 & \textless$0.001$ & 0 & 5.99 &  & \\ 
\textsf{\tiny ca-dblp-2012}  & 317k & 343 & 6 & 0.31 & 1.0M & 50B & 2.2M & 15M & 333B & 5.3P & 17M & 4.8M & 203k & 252M & 259M & 97M & 53P & 551B & 4.8T & 705B & 0 & 112 & 113 & 0.27 & 0.48 & 0 & 227.79 &  & \\ 
\textsf{\tiny ca-cit-HepTh}  & 23k & 8.7k & 213 & 0.27 & 2.4M & 260M & 191M & 1.6B & 52B & 1.9T & 13B & 47B & 7.3B & 538B & 976B & 385B & 558T & 1.9T & 31T & 3.9T & 0 & 560 & 561 & -0.03 & 132.66 & 0 & N/A &  & \\ 
\textsf{\tiny ca-cit-HepPh}  & 28k & 4.9k & 224 & 0.28 & 3.1M & 391M & 196M & 1.5B & 85B & 3.6T & 9.8B & 34B & 6.1B & 536B & 479B & 276B & 1.1P & 4.0T & 39T & 5.1T & 0 & 409 & 410 & 0.03 & 125.49 & 0 & N/A &  & \\ 
\textsf{\tiny ca-coauthors-dblp}  & 540k & 3.3k & 56 & 0.66 & 15M & 146B & 444M & 698M & 8.2T & 26P & 15B & 3.4B & 31M & 42B & 27B & 67B & 2.2E & 116T & 377T & 240T & 0 & 335 & 336 & 0.51 & 40.26 & 0 & N/A &  & \\ 
\textsf{\tiny ca-hollywood-2009}  & 1.1M & 11k & 105 & 0.31 & 56M & 571B & 4.9B & 33B & 60T & 204P & 1.4T & 635B & 168B & 21T & 17T & 8.9T & 14E & 1.5P & 35P & 5.2P & 0 & 2.2k & 2.2k & 0.35 & 13799.6 & 1 & N/A &  & \\ 
\midrule
\textsf{\tiny tech-as-caida2007}  & 26k & 2.6k & 4 & 0.01 & 53k & 350M & 36k & 15M & 1.4B & 3.1T & 54k & 1.7M & 407k & 285M & 7.8B & 47M & 18T & 1.1B & 368B & 912M & 0 & 0 & 1 & -0.19 & 0.19 & 0 & 36.83 &  & \\ 
\textsf{\tiny tech-p2p-gnutella}  & 63k & 95 & 4 & 0 & 148k & 2.0B & 2.0k & 1.6M & 9.2B & 41T & 16 & 826 & 42k & 15M & 8.1M & 71k & 289T & 11B & 98B & 127M & 0 & 0 & 1 & -0.09 & 0.02 & 0 & 7.44 &  & \\ 
\textsf{\tiny tech-RL-caida}  & 191k & 1.1k & 6 & 0.06 & 608k & 18B & 455k & 21M & 116B & 1.2P & 423k & 7.4M & 40M & 583M & 1.7B & 77M & 11P & 184B & 4.0T & 87B & 0 & 1 & 2 & 0.02 & 0.39 & 0 & 71.74 &  & \\ 
\textsf{\tiny tech-WHOIS}  & 7.5k & 1.1k & 15 & 0.31 & 57k & 28M & 782k & 5.3M & 413M & 69B & 12M & 31M & 2.9M & 229M & 566M & 194M & 1.5T & 1.1B & 37B & 5.5B & 0 & 2 & 3 & -0.04 & 0.14 & 0 & 44.52 &  & \\ 
\textsf{\tiny tech-as-skitter}  & 1.7M & 35k & 13 & 0.01 & 11M & 1.4T & 29M & 16B & 19T & 811P & 149M & 20B & 43B & 819B & 96T & 162B & 16E & 60T & 27P & 49T & 0 & 4 & 5 & -0.08 & 476.06 & 0 & N/A &  & \\ 
\midrule
\textsf{\tiny web-BerkStan-dir}  & 685k & 84k & 19 & 0.01 & 6.6M & 235B & 65M & 28B & 4.5T & 54P & 1.1B & 99B & 25B & 49B & 382T & 476B & 1.5E & 21T & 18P & 44T & 0 & 17 & 18 & -0.21 & 149.17 & 1 & N/A &  & \\ 
\textsf{\tiny web-edu}  & 3.0k & 104 & 4 & 0.27 & 6.5k & 4.6M & 10k & 81k & 19M & 4.6B & 40k & 4.6k & 18 & 435k & 1.3M & 186k & 29B & 20M & 239M & 30M & 0 & 2 & 3 & -0.17 & \textless$0.001$ & 0 & 0.52 &  & \\ 
\textsf{\tiny web-google-dir}  & 876k & 6.3k & 9 & 0.06 & 4.3M & 383B & 13M & 687M & 3.8T & 112P & 40M & 382M & 38M & 4.1B & 650B & 6.7B & 1.7E & 9.3T & 600T & 12T & 0 & 43 & 44 & -0.06 & 4.45 & 0 & N/A &  & \\ 
\textsf{\tiny web-indochina-2004}  & 11k & 199 & 8 & 0.57 & 48k & 64M & 210k & 481k & 539M & 244B & 1.2M & 88k & 9.2k & 5.5M & 12M & 4.9M & 3.0T & 1.1B & 5.4B & 2.4B & 0 & 48 & 49 & 0.12 & 0.01 & 0 & 24.36 &  & \\ 
\textsf{\tiny web-it-2004}  & 509k & 469 & 28 & 0.95 & 7.2M & 130B & 339M & 56M & 3.7T & 22P & 29B & 815M & 175M & 1.1B & 1.4B & 527M & 930P & 26T & 28T & 172T & 0 & 430 & 431 & 0.99 & 25.26 & 0 & N/A &  & \\ 
\textsf{\tiny web-baidu-baike}  & 2.1M & 98k & 15 & 0 & 17M & 2.3T & 25M & 31B & 36T & 17E & 28M & 4.5B & 9.2B & 3.3T & 571T & 327B & 1.9E & 141T & 64P & 54T & 0 & 15 & 16 & -0.3 & 3975.81 & 1 & N/A &  & \\ 
\textsf{\tiny web-wikipedia-growth}  & 1.9M & 226k & 39 & 0 & 37M & 1.7T & 127M & 123B & 68T & 1.1E & 288M & 38B & 68B & 29T & 3.1P & 3.2T & 8.1E & 635T & 220P & 234T & 0 & 180 & 181 & -0.53 & 22389.2 & 1 & N/A &  & \\ 
\textsf{\tiny web-ClueWeb09-50m}  & 148M & 308k & 6 & 0.01 & 447M & 11P & 1.2B & 494B & 66P & 18E & 5.6B & 243B & 774B & 34T & 24P & 3.4T & 39P & 100P & 8.4E & 184P & 0 & 62 & 63 & -0.35 & 15665.9 & 6 & N/A &  & \\ 
\midrule
\textsf{\tiny inf-italy-osm}  & 6.7M & 9 & 2 & 0 & 7.0M & 22T & 7.4k & 8.2M & 47T & 633P & 0 & 244 & 47k & 9.9M & 992k & 27k & 9.2E & 25T & 55T & 50B & 0 & 2 & 3 & 0.25 & 0.85 & 0 & N/A &  & \\ 
\textsf{\tiny inf-openflights}  & 2.9k & 242 & 10 & 0.25 & 16k & 4.3M & 73k & 639k & 45M & 4.2B & 286k & 1.5M & 319k & 17M & 17M & 9.0M & 63B & 91M & 1.8B & 201M & 0 & 26 & 27 & 0.05 & 0.01 & 0 & 2.46 &  & \\ 
\textsf{\tiny inf-power}  & 4.9k & 19 & 2 & 0.1 & 6.6k & 12M & 651 & 17k & 33M & 20B & 90 & 385 & 324 & 38k & 20k & 5.1k & 80B & 22M & 84M & 3.2M & 0 & 1 & 2 & 0 & \textless$0.001$ & 0 & 0.58 & 0.17 & \\ 
\textsf{\tiny inf-roadNet-CA}  & 2.0M & 12 & 2 & 0.06 & 2.8M & 1.9T & 120k & 5.6M & 5.4T & 1.2E & 40 & 13k & 249k & 11M & 2.4M & 521k & 5.3E & 3.8T & 11T & 236B & 0 & 2 & 3 & 0.12 & 0.35 & 0 & N/A &  & \\ 
\textsf{\tiny inf-roadNet-PA}  & 1.1M & 9 & 2 & 0.06 & 1.5M & 591B & 67k & 3.2M & 1.7T & 214P & 16 & 5.7k & 152k & 6.2M & 1.4M & 295k & 912P & 1.2T & 3.5T & 73B & 0 & 2 & 3 & 0.12 & 0.19 & 0 & N/A &  & \\ 
\textsf{\tiny inf-road-usa}  & 24M & 9 & 2 & 0.03 & 29M & 287T & 439k & 50M & 691T & 1.5E & 90 & 21k & 1.6M & 81M & 18M & 1.5M & 9.5E & 416T & 1.2P & 11T & 0 & 1 & 2 & 0.07 & 4.05 & 0 & N/A &  & \\ 
\midrule
\textsf{\tiny ia-email-EU-dir}  & 265k & 7.6k & 2 & 0 & 364k & 35B & 267k & 194M & 96B & 3.1P & 581k & 10M & 6.7M & 4.4B & 221B & 341M & 13P & 61B & 51T & 70B & 0 & 36 & 37 & -0.18 & 1.52 & 0 & 887.18 & N/A & \\ 
\textsf{\tiny ia-enron-only}  & 143 & 42 & 8 & 0.36 & 623 & 9.5k & 889 & 4.8k & 76k & 396k & 779 & 2.7k & 648 & 29k & 17k & 14k & 4.3M & 135k & 522k & 102k & 0 & 1 & 2 & -0.02 & \textless$0.001$ & 0 & 0.12 & 0.03 & \\ 
\textsf{\tiny ia-reality}  & 6.8k & 261 & 2 & 0 & 7.7k & 23M & 400 & 497k & 51M & 53B & 63 & 1.7k & 2.8k & 1.6M & 26M & 93k & 171B & 27M & 3.3B & 2.6M & 0 & 0 & 1 & -0.68 & \textless$0.001$ & 0 & 1.39 & 7.86 & \\ 
\textsf{\tiny ia-wiki-Talk-dir}  & 2.4M & 100k & 3 & 0 & 4.7M & 2.9T & 9.2M & 13B & 11T & 18E & 65M & 1.0B & 924M & 1.2T & 192T & 64B & 13E & 9.6T & 30P & 22T & 0 & 57 & 58 & -0.29 & 281.33 & 0 & N/A & N/A & \\ 
\textsf{\tiny ia-wikiquote-user-edits}  & 93k & 17k & 5 & 0 & 238k & 4.4B & 279k & 636M & 21B & 136T & 411k & 70M & 44M & 8.9B & 2.4T & 2.5B & 929T & 16B & 52T & 23B & 0 & 28 & 29 & -0.3 & 2.41 & 0 & 691.28 &  & \\ 
\textsf{\tiny ia-wiki-user-edits-page}  & 2.1M & 699k & 5 & 0 & 5.6M & 2.2T & 6.7M & 550B & 11T & 1.5E & 10M & 70B & 44B & 4.8T & 88P & 2.0T & 10E & 8.0T & 888P & 12T & 0 & 21 & 22 & 1 & 5691.92 & 4 & N/A &  & \\ 
\midrule
\textsf{\tiny brock200-3}  & 200 & 134 & 120 & 0.61 & 12k & 7.9k & 291k & 570k & 372k & 80k & 3.2M & 12M & 4.1M & 11M & 3.5M & 16M & 2.2M & 1.7M & 6.9M & 3.5M & 0 & 104 & 105 & -0.01 & 0.02 & 0 & 22.96 &  & \\ 
\textsf{\tiny brock200-4}  & 200 & 147 & 130 & 0.66 & 13k & 6.8k & 373k & 584k & 303k & 53k & 5.2M & 16M & 4.3M & 8.9M & 3.0M & 17M & 1.2M & 1.1M & 4.6M & 2.9M & 0 & 116 & 117 & -0.02 & 0.02 & 0 & 21.85 &  & \\ 
\textsf{\tiny brock400-3}  & 400 & 322 & 298 & 0.75 & 60k & 20k & 4.4M & 4.5M & 1.5M & 170k & 184M & 372M & 63M & 84M & 28M & 251M & 4.8M & 7.1M & 28M & 28M & 0 & 277 & 278 & -0.01 & 0.4 & 0 & 997.15 &  & \\ 
\textsf{\tiny brock400-4}  & 400 & 326 & 298 & 0.75 & 60k & 20k & 4.4M & 4.5M & 1.5M & 168k & 185M & 373M & 63M & 84M & 28M & 250M & 4.7M & 7.0M & 28M & 28M & 0 & 276 & 277 & -0.01 & 0.4 & 0 & 1010.26 &  & \\ 
\textsf{\tiny brock800-1}  & 800 & 560 & 518 & 0.65 & 208k & 112k & 23M & 38M & 20M & 3.7M & 1.3B & 4.1B & 1.1B & 2.4B & 801M & 4.4B & 350M & 324M & 1.3B & 800M & 0 & 486 & 487 & -0 & 4.11 & 0 & N/A &  & \\ 
\textsf{\tiny brock800-2}  & 800 & 566 & 520 & 0.65 & 208k & 111k & 23M & 38M & 20M & 3.6M & 1.3B & 4.2B & 1.1B & 2.4B & 794M & 4.4B & 341M & 318M & 1.3B & 793M & 0 & 485 & 486 & -0 & 4.15 & 0 & N/A &  & \\ 
\textsf{\tiny brock800-3}  & 800 & 558 & 518 & 0.65 & 207k & 112k & 23M & 38M & 20M & 3.7M & 1.3B & 4.1B & 1.1B & 2.4B & 802M & 4.4B & 353M & 325M & 1.3B & 801M & 0 & 482 & 483 & -0 & 4.1 & 0 & N/A &  & \\ 
\bottomrule
\multicolumn{10}{c}{\tiny N/A: timed out after $30$ hours of runtime \;\;\;}
\end{tabularx}
}
\end{table}
}

\subsection{Efficiency \& Runtime}
Table~\ref{table:graphlets-fb100} describes the properties of the $60$ networks considered here. It also shows the counts of graphlets of size $k=\{3,4\}$ and states the time (seconds) taken to count all graphlets. We only show counts of connected graphlets due to space limitations, however all counts are available in the online appendix. Notably, Algorithm~\ref{alg:parallel-graphlet_all} takes only few seconds to count all graphlets for large social, web, and technological graphs (among others). For example, for a large road network (\ie, inf-road-usa) with $24$M nodes and $29$M edges, Algorithm~\ref{alg:parallel-graphlet_all} takes only $4$ seconds  to count all graphlets. Also as shown in Table~\ref{table:graphlets-fb100}, for large facebook networks with nearly $2$M edges, Algorithm~\ref{alg:parallel-graphlet_all} takes only $15$ seconds, and for large web graphs with nearly $8$M edges, Algorithm~\ref{alg:parallel-graphlet_all} takes only $25$ seconds. 

We compare the empirical runtime of Algorithm~\ref{alg:parallel-graphlet_all} to the state-of-the-art baseline method RAGE~\cite{marcus2012rage}. For social and facebook networks, we observed that Algorithm~\ref{alg:parallel-graphlet_all} is on average $460$x faster than RAGE. For all other networks, we observed that Algorithm~\ref{alg:parallel-graphlet_all} is on average $600$x faster than RAGE. Notably, Algorithm~\ref{alg:parallel-graphlet_all} takes only $7$ seconds to count graphlets of facebook networks with $1.3$M edges, while RAGE takes almost an hour for the same networks. For larger networks with millions of nodes/edges, RAGE was timed out (as it did not finish within $30$ hours of runtime). Moreover, for dense graphs from the DIMACS challenge, RAGE takes almost $17$ minutes, while Algorithm~\ref{alg:parallel-graphlet_all} takes less than a second. We also compared to the baseline method FANMOD~\cite{wernicke2006fanmod} and Orca~\cite{hovcevar2014combinatorial}, we found that for a facebook network with $250$k edges, FANMOD takes roughly $2.5$ hours for counting all graphlets, RAGE takes almost $7$ minutes for the same network, and Orca takes almost $10$ seconds, while Algorithm~\ref{alg:parallel-graphlet_all} takes less than a second. 
Note that both RAGE and Orca count only connected graphlets, while our algorithm and FANMOD count both connected and disconnected graphlets. 

In Figure~\ref{fig:runtime-soc}, we plot the runtime of Algorithm~\ref{alg:parallel-graphlet_all} for a representative subset of $150$ social and information networks. The figure shows that our algorithm exhibits nearly linear-time scaling over networks ranging from $1$K to $100$M nodes.

\begin{figure}
        \centering
        \includegraphics[width=3in]{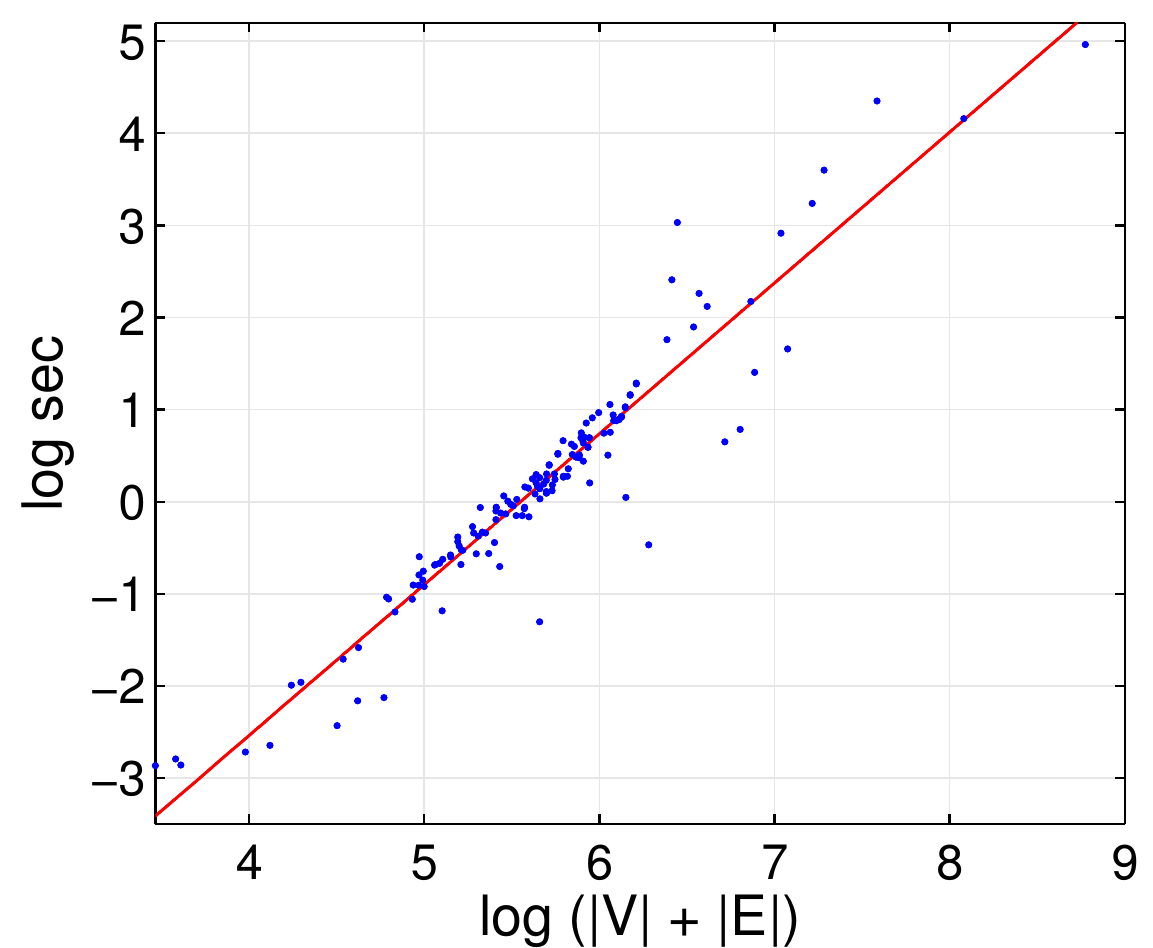}
        \caption{The empirical runtime of our exact graphlet counting (Alg.\ref{alg:parallel-graphlet_all}) in social and information networks scales almost linearly with the network dimension.}
        \label{fig:runtime-soc}
\end{figure}

\subsection{Scaling} \label{sec:parallel-speedup}
We used a $2$-processor Intel Xeon $3.10$ Ghz E5-2687W server, each processor has $8$ cores, and each core can run two threads. The two processors share $20$MB of L3 cache and $256$GB of memory. We evaluate the speedup of our parallel algorithm (\ie, how much faster our proposed algorithm is when we increase the number of cores), and we used the OpenMP library for multi-core parallelization. In the following plots, we show the speedups versus the number of processing units (cores). All speedups are computed relative to the runtime of Algorithm~\ref{alg:parallel-graphlet_all} with one processor. To avoid possible variance, all experiments are repeated 5 times and averaged. Figures~\ref{fig:parallel-speedup-socfb}--\ref{fig:parallel-speedup-tech-web} show the speedup plots for a variety of graphs. We discuss a few observations from the plots presented here.

The first and most important observation that we make is that we obtain significant speedups from the parallel implementation of Algorithm~\ref{alg:parallel-graphlet_all}. Figures~\ref{fig:parallel-speedup-socfb}--\ref{fig:parallel-speedup-tech-web} show strong scaling results for a variety of graphs from social, web, and technological domains. Algorithm~\ref{alg:parallel-graphlet_all}
scales to $16$ cores and yields a speedup of $10$--$15$ folds. For example, as shown in Figure~\ref{fig:parallel-speedup-socfb}, we achieve almost linear scaling for the socfb-Penn94 graph (15-fold speedup for 16 cores). 

The second observation links the performance of Algorithm~\ref{alg:parallel-graphlet_all} to the characteristics of the graphs.  We observe the most significant speedups for social and Facebook networks (see Figure~\ref{fig:parallel-speedup-socfb}). We obtain near linear speedup as we increase the number of cores. Social networks are computationally intensive relative to the other graphs. This is due to their clustering characteristics and the existence of a large number of small communities (\ie, triangles, cliques, and cycles) in social networks.

\begin{figure}
\centering
\begin{minipage}[c]{0.49\textwidth}
\centering
\includegraphics[width=1.0\linewidth]{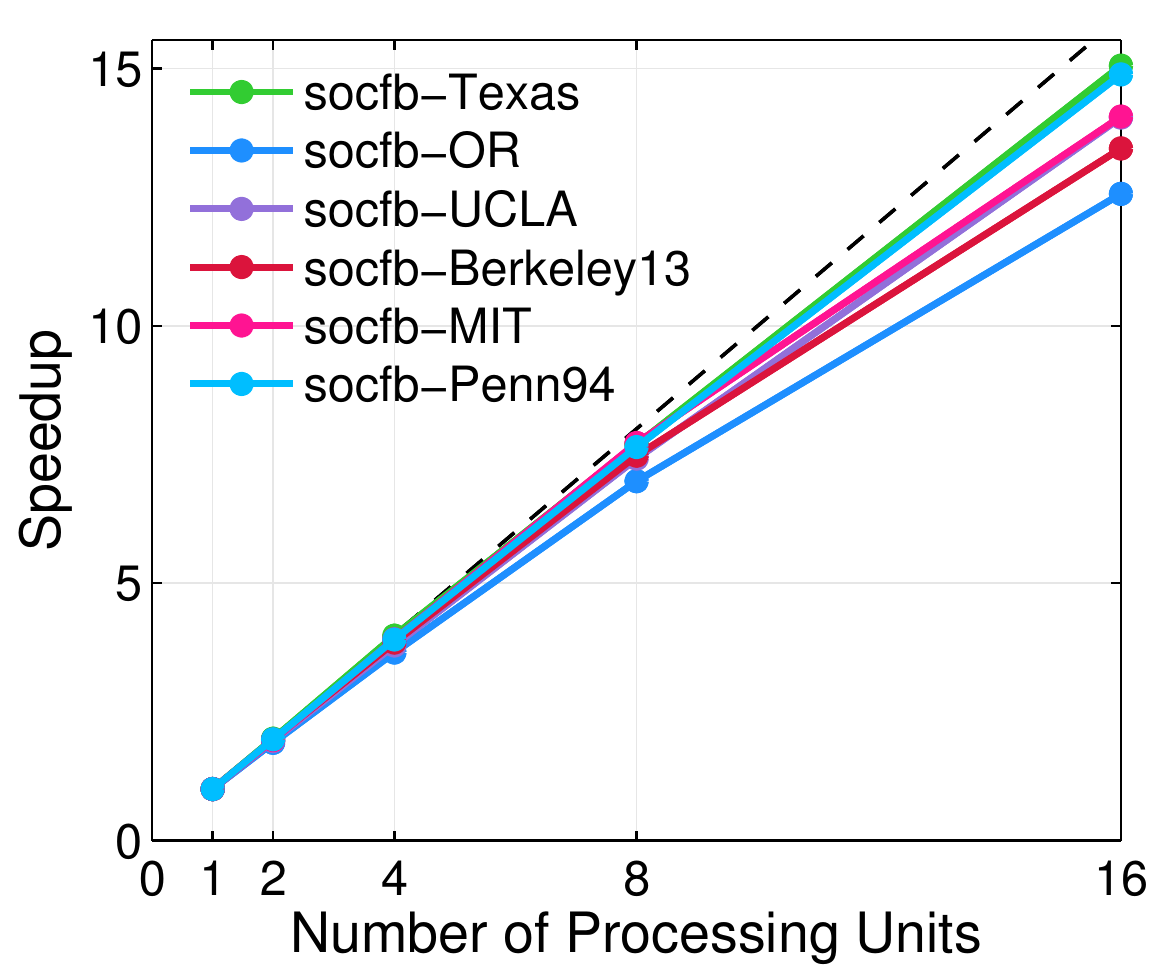}
\end{minipage}
\begin{minipage}[c]{0.49\textwidth}
\centering
\includegraphics[width=1.0\linewidth]{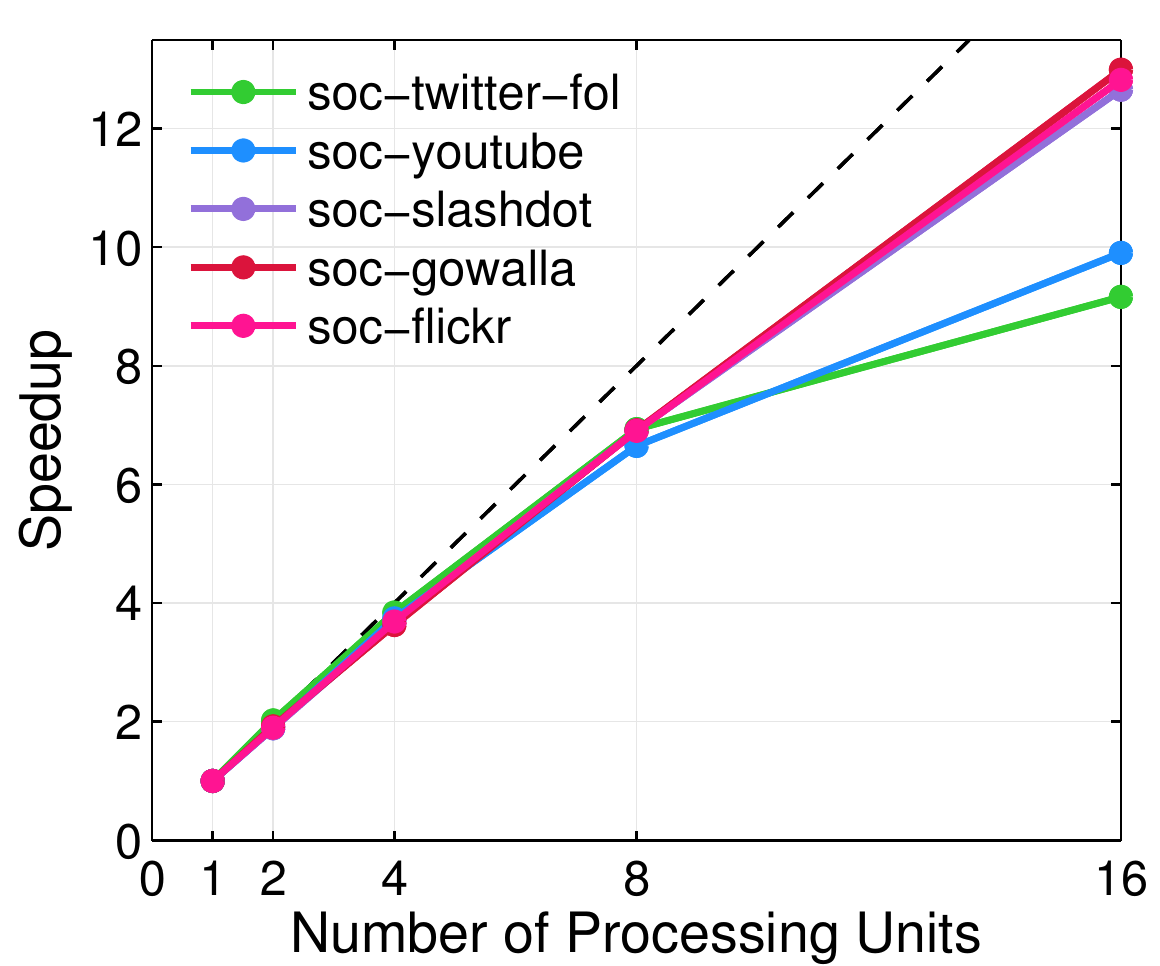}
\end{minipage}
\vspace{2.mm}
\caption{Strong scaling results for Facebook and social networks.
}
\label{fig:parallel-speedup-socfb}
\end{figure}

\begin{figure}
\centering
\begin{minipage}[c]{0.49\textwidth}
\centering
\includegraphics[width=1.0\linewidth]{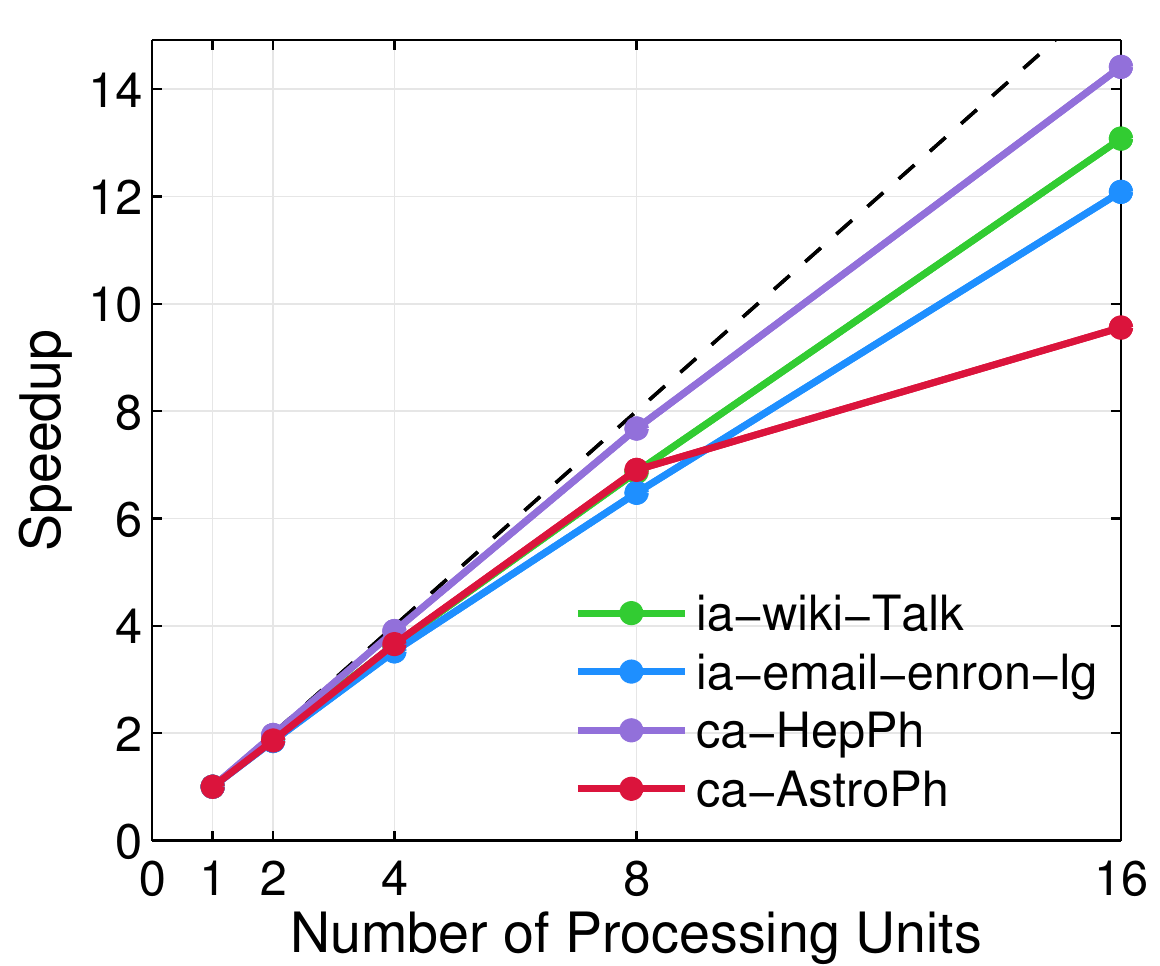}
\end{minipage}
\begin{minipage}[c]{0.49\textwidth}
\centering
\includegraphics[width=1.0\linewidth]{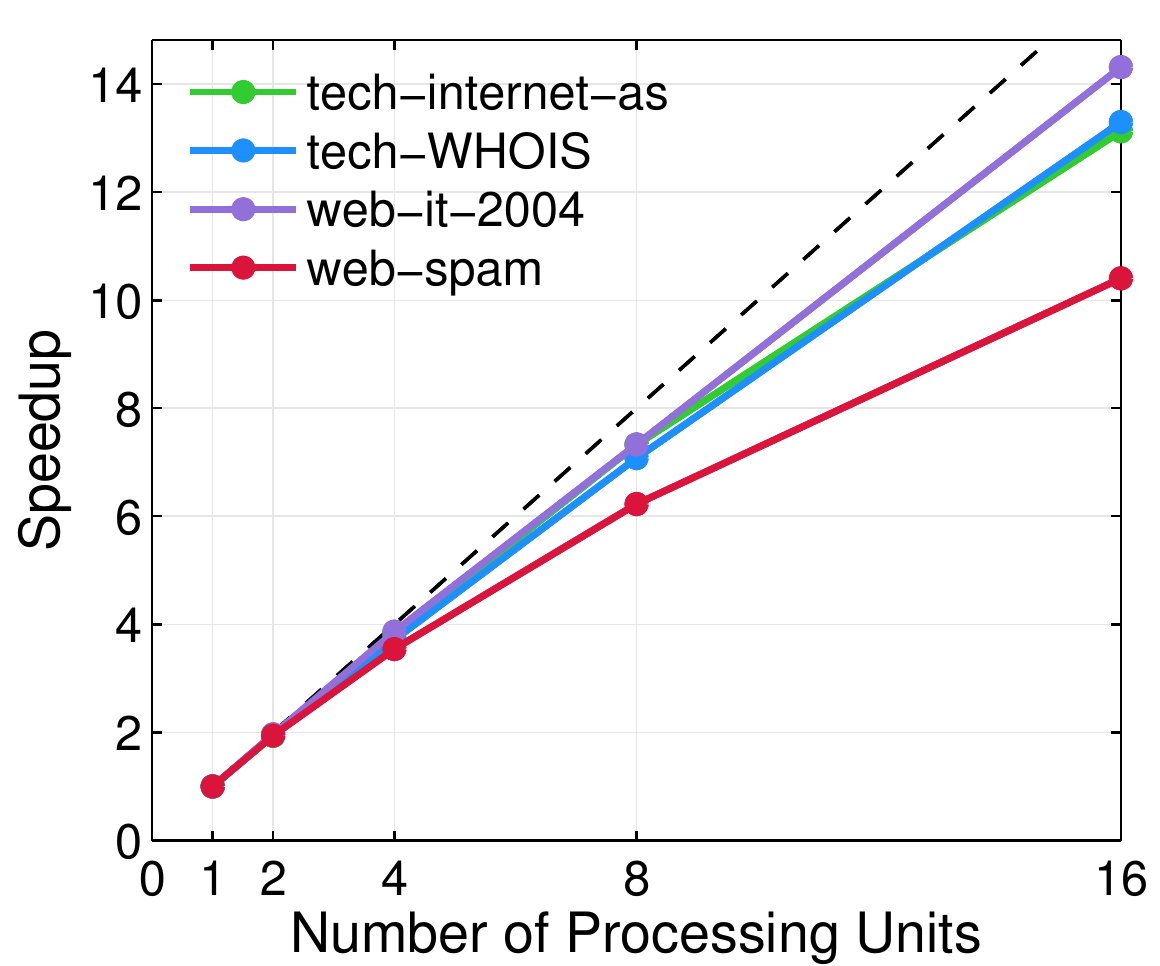}
\end{minipage}
\vspace{2.mm}
\caption{Strong scaling results for interaction, collaboration, technological, and web networks.
}
\label{fig:parallel-speedup-tech-web}
\end{figure}

The third observation we make is related to the optimal number of problems to dynamically assign to each processing unit when more work is requested (\ie, batch size $b$). That is the optimal performance that would be achieved when $b$ jobs are assigned in batch. Overall, we observed small performance fluctuations and found the optimal value of $b$ when we changed between 1 and 256 edges respectively. Interestingly, this observation is largely true only for sparse graphs, whereas graphs that are relatively dense (\eg, DIMACs graphs) work better when $b$ is small (\eg, even as small as $b=1$). This is likely due to the properties of these graphs and the auto-optimizer that we built into the library which automatically adapts the implementation of the algorithms to use additional data structures and achieve better performance for those relatively dense graphs at the cost of using additional space. Thus, our auto-optimizer appropriately balances the time and space trade-offs.

Note that the results for the job size experiments use degree for ordering the neighbors of each node in the succinct graph representation as well as for ordering the edge jobs to solve. In both cases, the ordering is from largest to smallest.

\section{Applications}
\label{sec:applications}
We also show some applications that could benefit from our fast graphlet counting algorithm (Algorithm~\ref{alg:parallel-graphlet_all}), which facilitates exploring and understanding networks and their structure. Graphlets provide an intuitive and meaningful characterization of a network at the global macro-level as well as the local micro-level, thus, they are useful for numerous applications. At the macro-level, graphlets are useful for finding similar networks (graph similarity queries), or finding networks that disagree most with that set (graph anomalies), or exploring a time-series of networks, among numerous other possibilities. Alternatively, graphlets are also extremely useful for characterizing networks and their behavior at the local node/edge-level as known as the micro-level. For instance, given an edge $(u,v) \in E$, find the top-k most similar edges (with applications in security, role discovery, entity-resolution, link prediction, and other related matching/similarity applications). Also, graphlets could be used for ranking nodes/edges to find unique patterns and anomalies such as large stars, cliques, etc.

\begin{figure}
\centering
\includegraphics[width=3.5in]{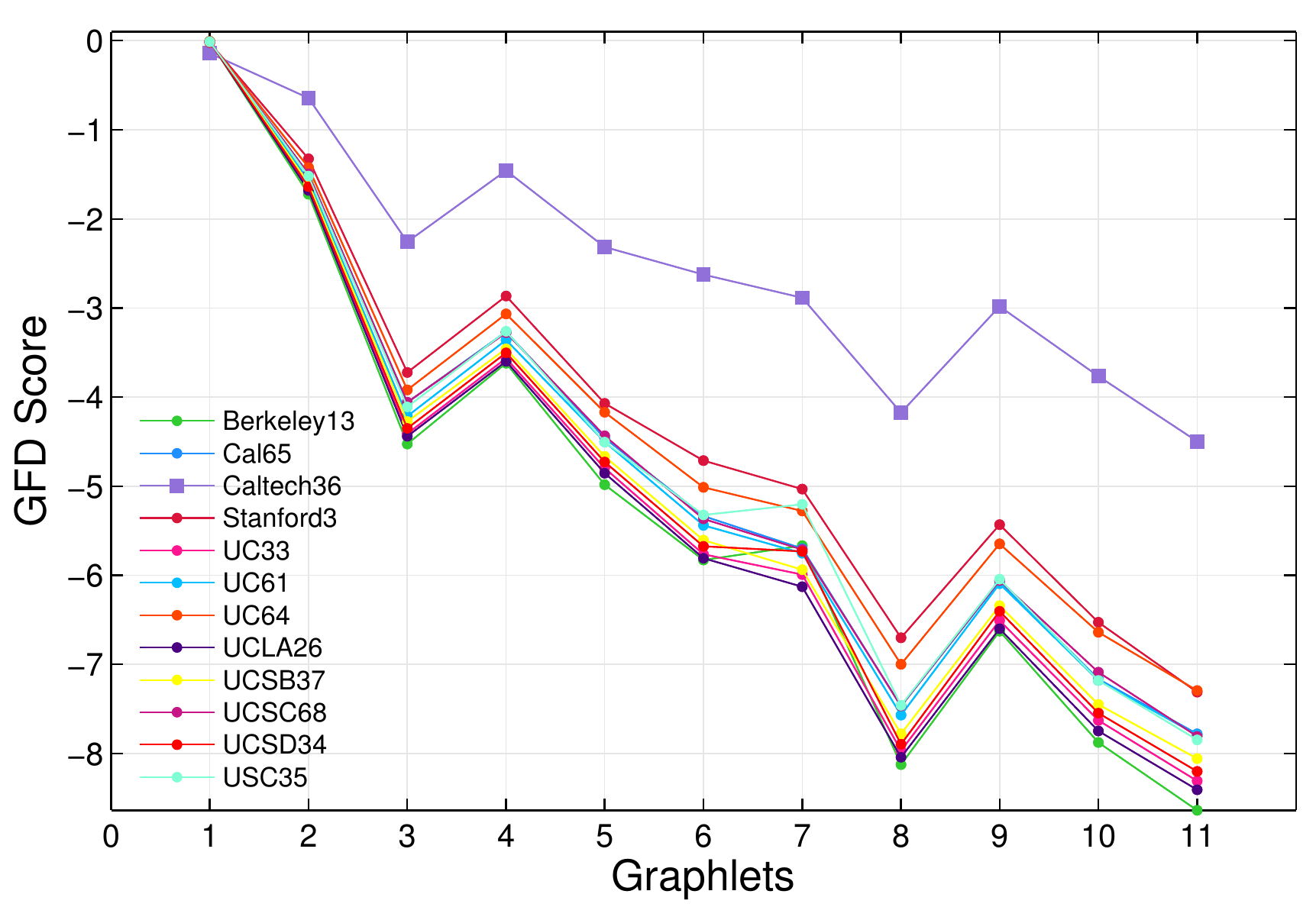}
\caption{Facebook social networks of California Universities.
Using the space of graphlets of size $k=4$, Caltech is noticeably different than others, which is consistent with the findings in~\protect\cite{traud2012social}.
}
\label{fig:fb-outlier}
\end{figure}

\subsection{Large-Scale Graph Comparison \& Classification}
Graphlets are also useful for large-scale comparison and classification of graphs. In this case, we relax the notion of equivalence and isomorphism to some form of \emph{structural similarity} between graphs, such that the graph similarity is measured using feature-based graphlet counts. In this section, we show how graphlets could be useful for network analysis, anomaly detection, and graph classification.

First, we study the full data set of Facebook100, which contains $100$ Facebook networks that represent a variety of US schools~\cite{traud2012social}. We plot the GFD (\ie, graphlet frequency distribution) score pictorially in Figure~\ref{fig:fb-outlier} for all California schools. The GFD score is simply the normalized frequencies of graphlets of size $k$~\cite{prvzulj2004modeling}. In our case, we use $k=4$. The figure shows Caltech noticeably different than others, consistent with the results in~\cite{traud2012social} which shows how Caltech is well-known to be organized almost exclusively according to its undergraduate "Housing" residence system, in contrast to other schools that follow the predominant "dormitory" residence system. The residence system seems to impact the organization of the social community structures at Caltech.

Second, we use counts of graphlets of size $k=\{2,3,4\}$-nodes as features to represent each graph in a large collection of graphs. Using the graphlet feature representation, we learn a model to predict the unknown label of each unlabeled graph (\eg, the label could be the function of protein graphs). We test our approach on protein graphs (D\&\,D collection of $1178$ protein graphs) and chemical compound graphs (MUTAG collection of $188$ chemical compound graphs)~\cite{vishwanathan2010graph}. We extract the graphlet features using Algorithm~\ref{alg:parallel-graphlet_all}. Then, we learn a model using SVM (RBF kernel), and we use $10$-fold validation for evaluation. Table~\ref{tab:classif-res} shows the accuracy of this approach is $76\%$ for protein function prediction, and $86\%$ for mutagenic effect prediction. Note that by using all graphlet-based features up to size $4$ nodes, we were able to obtain better accuracy than previous work (which achieved maximum $75\%$ and $83\%$ accuracy for D\&\,D and MUTAG respectively~\cite{shervashidze2009efficient}). Moreover, Algorithm~\ref{alg:parallel-graphlet_all} extracts all the features (graphlet counts) in almost one second. This yields a significant improvement over the graphlet feature extraction approach that was proposed in~\cite{shervashidze2009efficient}, which takes $2.45$ hours to extract graphlet features from the D\&\,D collection.

{
\begin{table}
\begin{center}
\caption{Accuracy \& Standard Error for Classification of Large Collection of Biological \& Chemical Graphs. We used counts of all graphlets of size $k=\{2,3,4\}$ as features.}
\medskip
\label{tab:classif-res}
\scalebox{0.8}{
\begin{tabular}{rrrrrr}
\toprule
graph & Type & No. Graphs & Accuracy$(\%)$ & Total Time(sec) & Avg Time per G (sec) \\
\midrule
D\&D & Protein & 1178 & 76.13 $\pm$ 0.03 & 1.05 & 8.95x$10^{-4}$ \\
MUTAG & Chemicals & 188 & 86.4 $\pm$ 0.21 & 0.14 & 7.47x$10^{-4}$ \\ 
\bottomrule
\end{tabular}}
\end{center}
\vspace{-3.mm}
\end{table}
}

Third, we compute graphlet counts on a $2$ billion edge social network called Friendster. Friendster is an on-line gaming network. Before re-launching as a game website in $2011$, Friendster was an online social network where users can form friendship links with each others. This data is provided by The Web Archive Project before the death of the social network. In these experiments, we use the induced subgraph of the nodes that either belong to at least one community or are connected to other nodes that belong to at least one community. Table~\ref{table:graphlets-fb100} shows a significantly large number of $4$-path (chains of $4$ connected nodes) and $3$-stars compared to the number of $4$-cliques and triangles. Although the induced subgraph that we used from Friendster is clearly biased toward communities,  the patterns that represent communities, such as cliques and triangles, are less likely in the induced graph. For example, the frequency of $4$-path patterns is $0.58$, while the frequency of $4$-cliques is $0.000014$. These results indicate that something wrong happened to the social network. Previous work on the autopsy of Friendster showed that there was a collapse in the community structure of Friendster, a cascade in user departure due to bad decisions in the design and interface changes. In a similar fashion, the low frequency of community-related graphlets (\eg, cliques) in Friendster also indicates the collapse of the social network.

\begin{figure}
\centering
\includegraphics[width=3.5in]{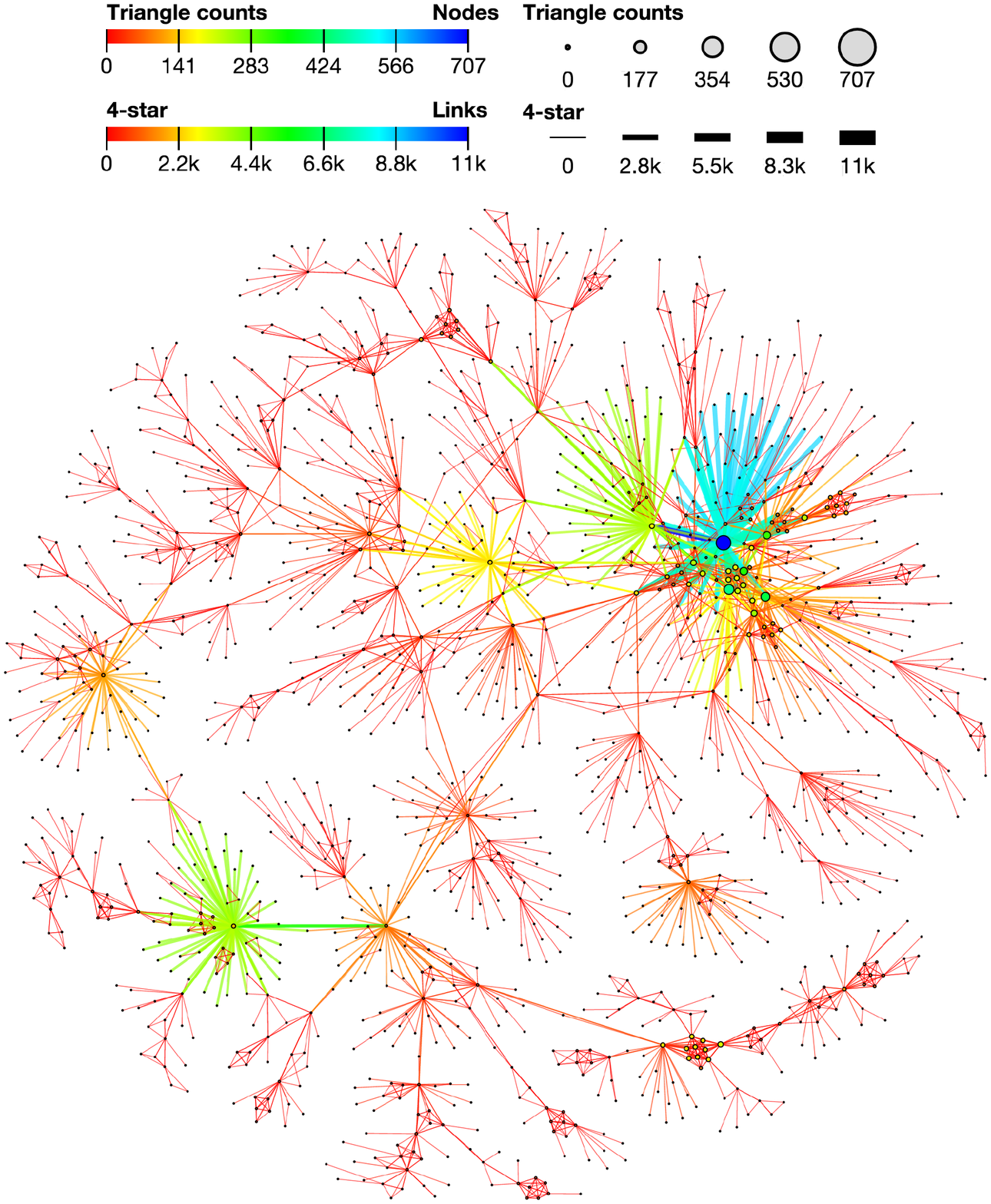}
\caption{\textbf{Visualization of the human diseasome network}: A network of disorders and disease genes linked by known disorder-gene associations~\protect\cite{goh2007human}. Edges are weighted/colored by their number of incident star graphlets of size $4$ nodes, nodes are weighted/colored by their triangle counts. 
The large star on the right denoted by light blue color corresponds to colon cancer; the large star on the lower left denoted by lime green color corresponds to deafness; and the large star on the right denoted by lime green color corresponds to leukemia. Notably this figure highlights the few phenotypes (such as colon cancer, leukemia, and deafness) correspond to hubs (large stars) that are connected to a large number of distinct disorders, which is consistent with~\protect\cite{goh2007human}.}
\label{fig:finding-stars}
\vspace{-3.mm}
\end{figure}

\subsection{Finding Large Stars, Cliques, and other Patterns Fast}
How can we quickly and efficiently find large cliques, stars, and other unique patterns?
Further, how can we identify the top-k largest cliques, stars, etc? Note that many of these problems are NP-hard, \eg, finding the clique of maximum size is a well-known NP-hard problem~\cite{HGTbook}. To answer these and other related queries, we leverage the proposed parallel graphlet counting method in Algorithm~\ref{alg:parallel-graphlet_all}. 
The idea is clearly shown in Figure~\ref{fig:finding-stars}. Figure~\ref{fig:finding-stars} provides a visualization of the human diseasome network~\cite{goh2007human}, where we used Alg.~\ref{alg:parallel-graphlet_all} to rank (weight) all the edges in the network by the number of star patterns of size $4$ nodes. The intuition behind the method is that if an edge (or node) has a (relatively) large number of stars of $4$ nodes (cliques, or another graphlet of interest), then it is also likely to be part of a star of a large size. Recall that removing a node from a $k$-star or $k$-clique forms a star or clique of size $k-1$~\cite{HGTbook}.
Accordingly, edges with large weights are likely to be members of large stars. Thus, as shown in Figure~\ref{fig:finding-stars}, a visualization based on our fast graphlet counting method can help to quickly highlight such large stars by using the counts (of stars of size $4$ nodes) as edge weights or colors. Notably, Figure~\ref{fig:finding-stars} highlights the few phenotypes (such as colon cancer, leukemia, and deafness) correspond to hubs (large stars) that are connected to a large number of distinct disorders, which is consistent with the findings in ~\cite{goh2007human}.

Note that the same approach is also applicable for finding cliques and other interesting patterns, since edges with a high number of $4$-cliques are likely to be members of the largest clique in the network. Figure~\ref{fig:interactive-graphlet-terrorist-social-network} shows how we can find large cliques in the terroristRel data~\cite{zhao:sna06}. 

\begin{figure}
\centering
\includegraphics[width=1.0\linewidth]{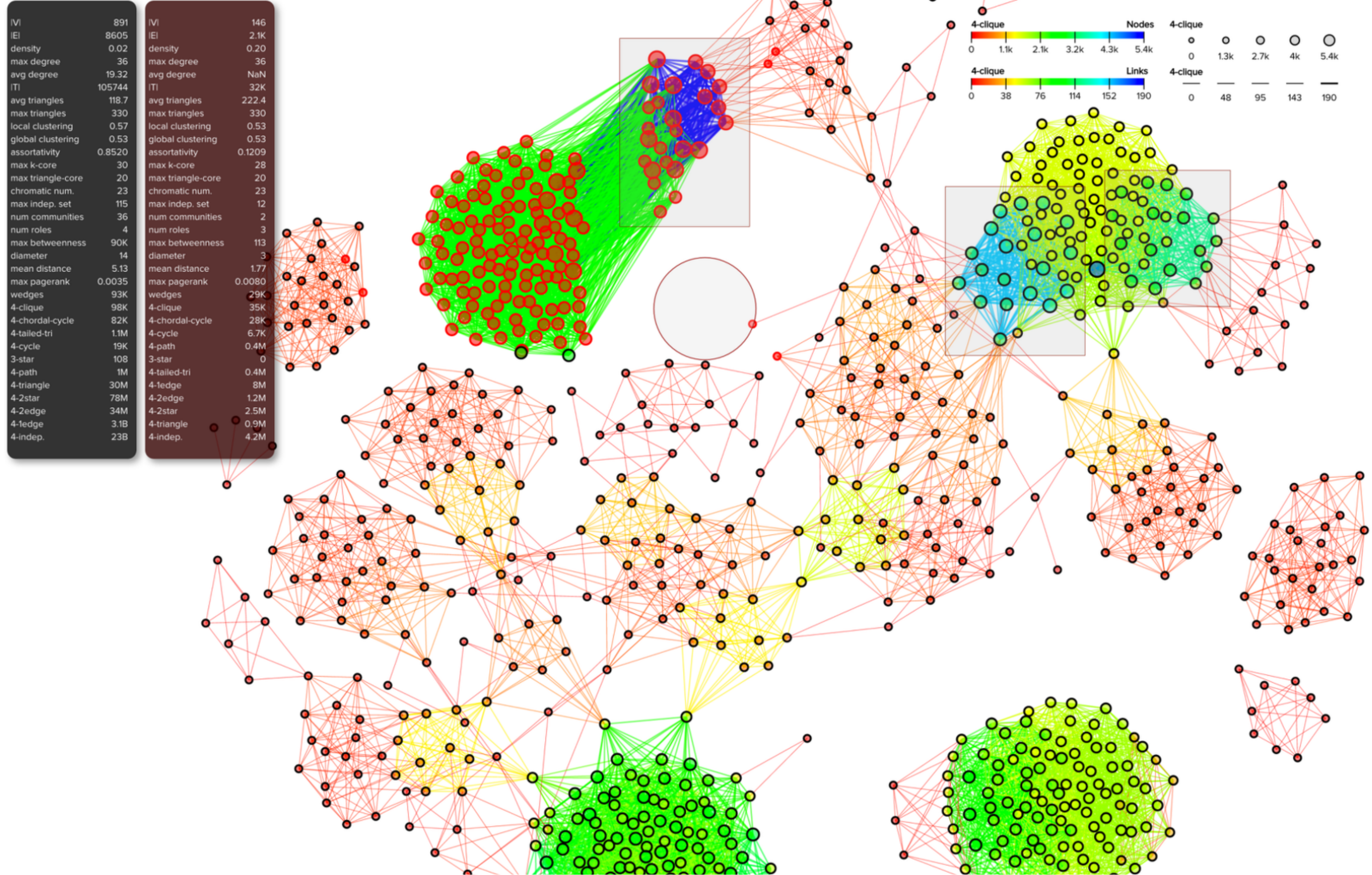}
\caption{\textbf{Visualization of the terroristRel network}: A network of terrorists and their relationships. Terrorists are linked to each other if they contact each other, use the same facility, are members of the same family, or belong to the same terrorist organization. Edges and nodes are weighted/colored by their number of incident cliques of size $4$ nodes. Notably this figure highlights how the structure of terrorist networks is decomposed of various clique patterns (terrorist organization) and how these cliques are interconnected. The figure highlights the largest clique on the top left denoted by dark blue.
}
\label{fig:interactive-graphlet-terrorist-social-network}
\end{figure}

\subsection{Real-time Visual Graphlet Mining} \label{sec:visual-graphlet-analytics}
Visual analytics is the science of analytical reasoning facilitated by interactive visual interfaces~\cite{thomas2005illuminating}.
This work develops an interactive visual graph analytics platform based on the proposed fast graphlet decomposition algorithm. 
In particular, we integrate interactive visualization with our state-of-the-art parallel graphlet decomposition algorithm in order to support discovery, analysis, and exploration of such data in real-time.

We utilize this multi-level graphlet analysis engine that uses graphlets as a basis for exploring, analyzing, and understanding complex networks interactively in real-time. And, we  highlight other key aspects including filtering, querying, ranking, manipulating, and a variety of multi-level network analysis and statistical techniques based on graphlets.

Notably, our proposed algorithm is shown to be fast and efficient for \emph{real-time} interactive exploration and mining of graphlets.
We expect this tool to be extremely useful to biologists and others interested in understanding biological (protein, brain networks, etc.) as well as chemical networks.

There are a number of important and unique challenges in designing methods for interactive exploration and mining of graphlets in real-time.
In particular, the real-time requirement of such a system requires fast parallel methods to achieve real-time interactive rates (e.g., with response times within  microseconds or less).
In particular, we derived dynamic update methods that are localized, that is, the update methods leverage the local structure of the graph for efficiently updating the counts when nodes/edges are selected, inserted, removed, etc.
Thus, given a single node or edge, the method updates the graphlet counts for that edge (as opposed to recomputing the full graphlet decomposition).

Figure~\ref{fig:interactive-graphlet-neural-brain-network} uses the interactive graphlet mining tool for real-time exploration of the brain neural network from C. Elegans~\cite{watts1998collective}. 
Additionally, the tool is also useful for exploring many other types of networks, e.g., a terrorist relationship network is shown in Figure~\ref{fig:interactive-graphlet-subgraph-analysis-terrorRel} whereas Figure~\ref{fig:interactive-graphlet-subgraph-analysis-comm-structure} uses graphlets as a basis for understanding and characterizing the communities and their structure. 
As an aside, the graph in Figure~\ref{fig:interactive-graphlet-subgraph-analysis-comm-structure} is generated using the block Chung-Lu graph model.
Thus, it is straightforward to see how graphlets can be used to characterize synthetic graph generators and for evaluating their utility (e.g., if the synthetic graph preserves the distribution of graphlets observed in a real-world network.).

\begin{figure}
\centering
\includegraphics[width=0.8\linewidth]{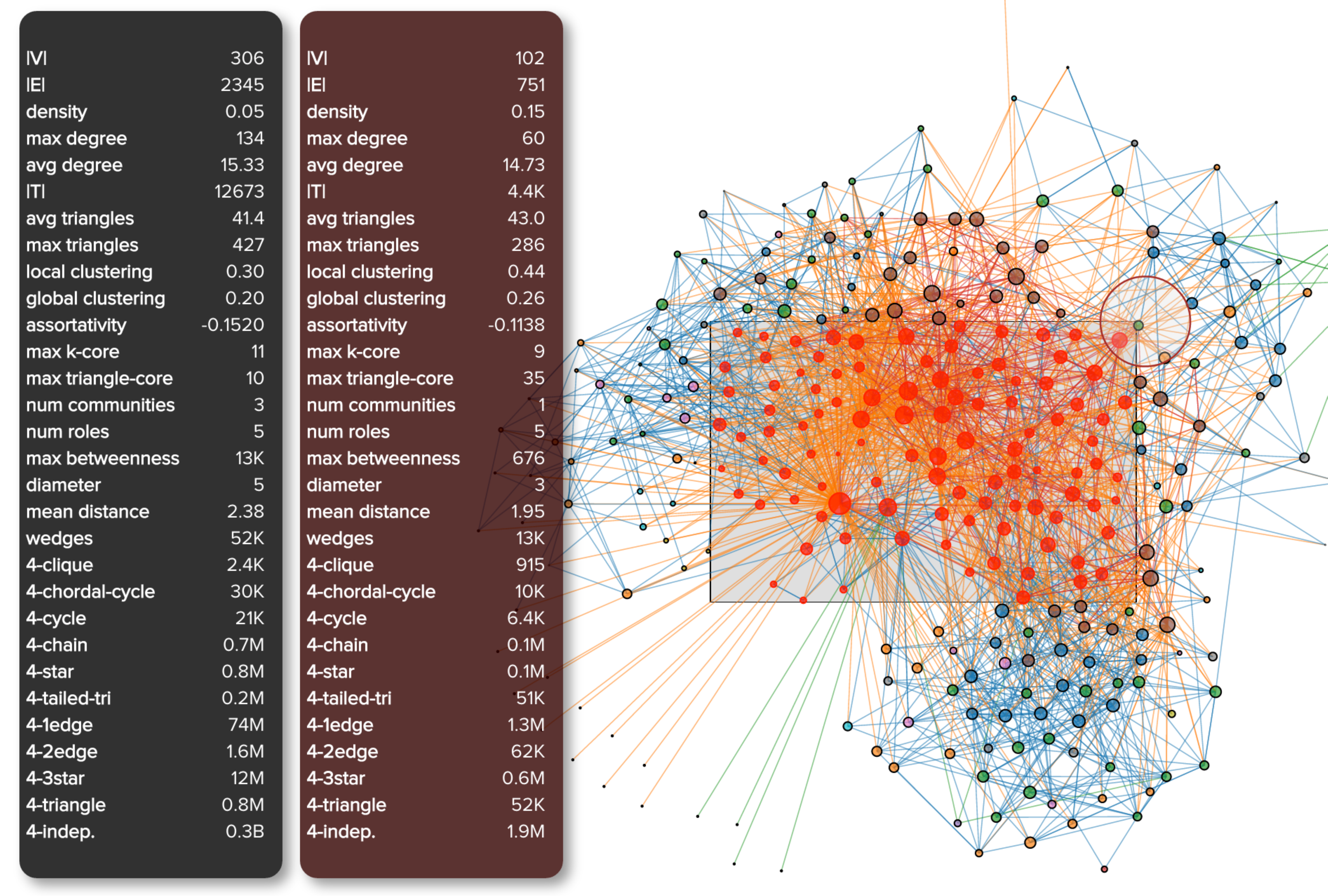}
\caption{Exploration of the brain neural network of C. Elegans~\protect\cite{watts1998collective} using our \textit{interactive graphlet visual analytics tool}.
Nodes are colored by their k-core number and weighted by betweenness, whereas the links are colored by eccentricity.}
\label{fig:interactive-graphlet-neural-brain-network}
\end{figure}

The \emph{visual graphlet analytics tool} is designed for rapid interactive visual exploration and graph mining (Figure~\ref{fig:interactive-graphlet-neural-brain-network}-\ref{fig:interactive-graphlet-subgraph-analysis-comm-structure}).
Graphlets are computed on-the-fly upon a simple drag-and-drop of a graph file into the web browser.
Additionally, the graphlet counts are updated efficiently after each selection, insertion, deletion, or change to the graph data.
Furthermore, it is designed to be consistent with the way humans learn via immediate-feedback upon every user interaction (e.g., change of a slider for filtering)~\cite{ahlberg1992dynamic,thomas2005illuminating}.
Users have rapid, incremental, and reversible control over all graph queries with immediate and continuous visual feedback.

\begin{figure}
\centering
\includegraphics[width=0.96\linewidth]{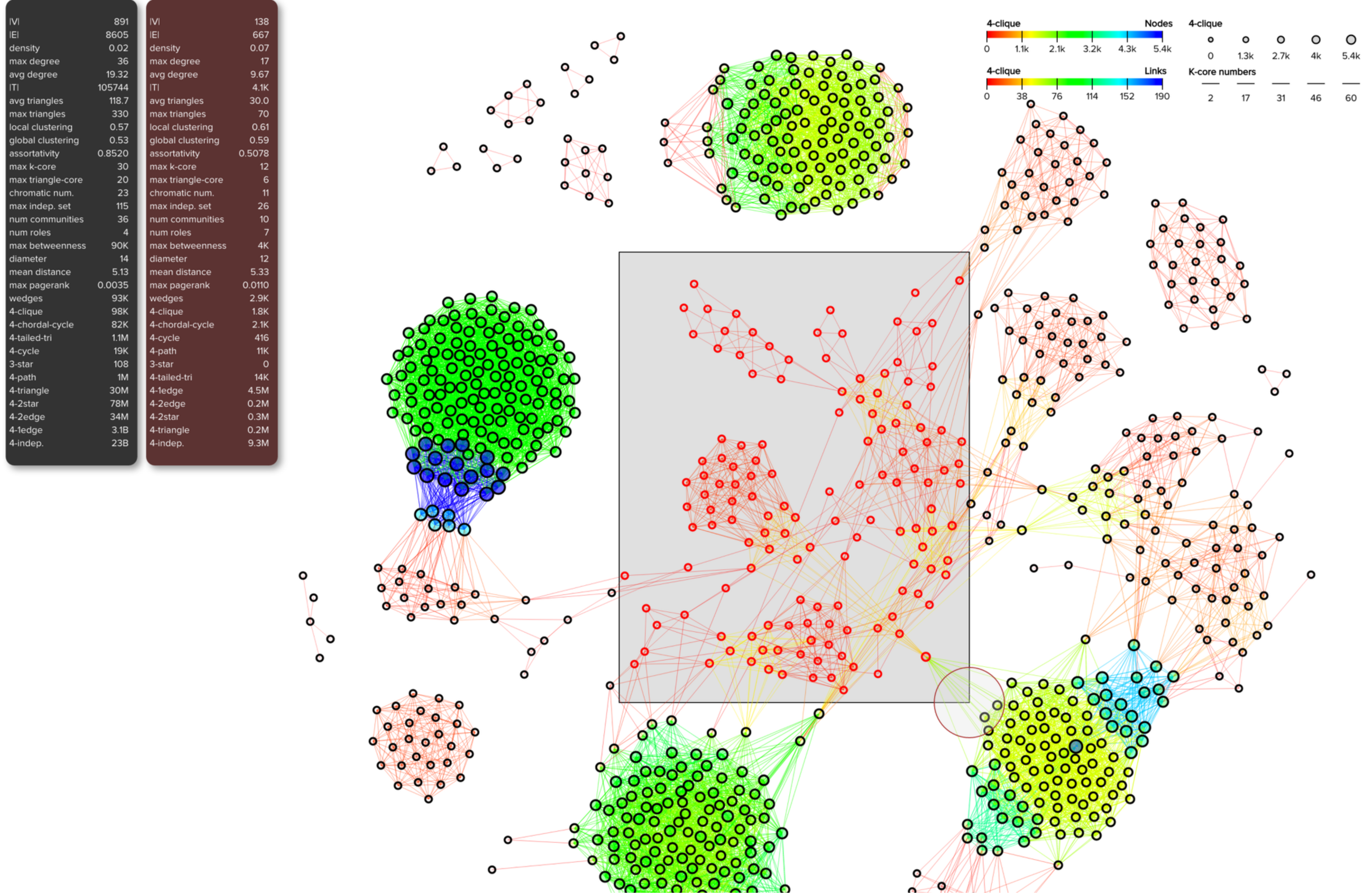}
\caption{
Illustration of the graphlet methods for real-time interactive graphlet analysis.
This demonstrates the efficiency and effectiveness of the proposed methods for interactive real-time graphlet computations.
In the screenshot above, the user selects a subgraph to interactively analyze via direct manipulation of the visualization using the mouse.
That is, the user adjusts the rectangular region above to highlight the subgraphs to analyze.
The graphlet statistics are updated each time a node/edge is added or removed from the rectangular region used to select the subgraph to explore via graphlets. Thus, the user can see how the graphlet statistics change as nodes and edges are added (or removed) from the user-specified rectangular region (which in turn indicates the nodes and edges to include in the analysis).
Note that we leverage localized graphlet update methods to achieve the performance required for real-time interactive graphlet mining and sense-making.
}
\label{fig:interactive-graphlet-subgraph-analysis-terrorRel}
\end{figure}

\begin{figure}
\centering
\includegraphics[width=0.96\linewidth]{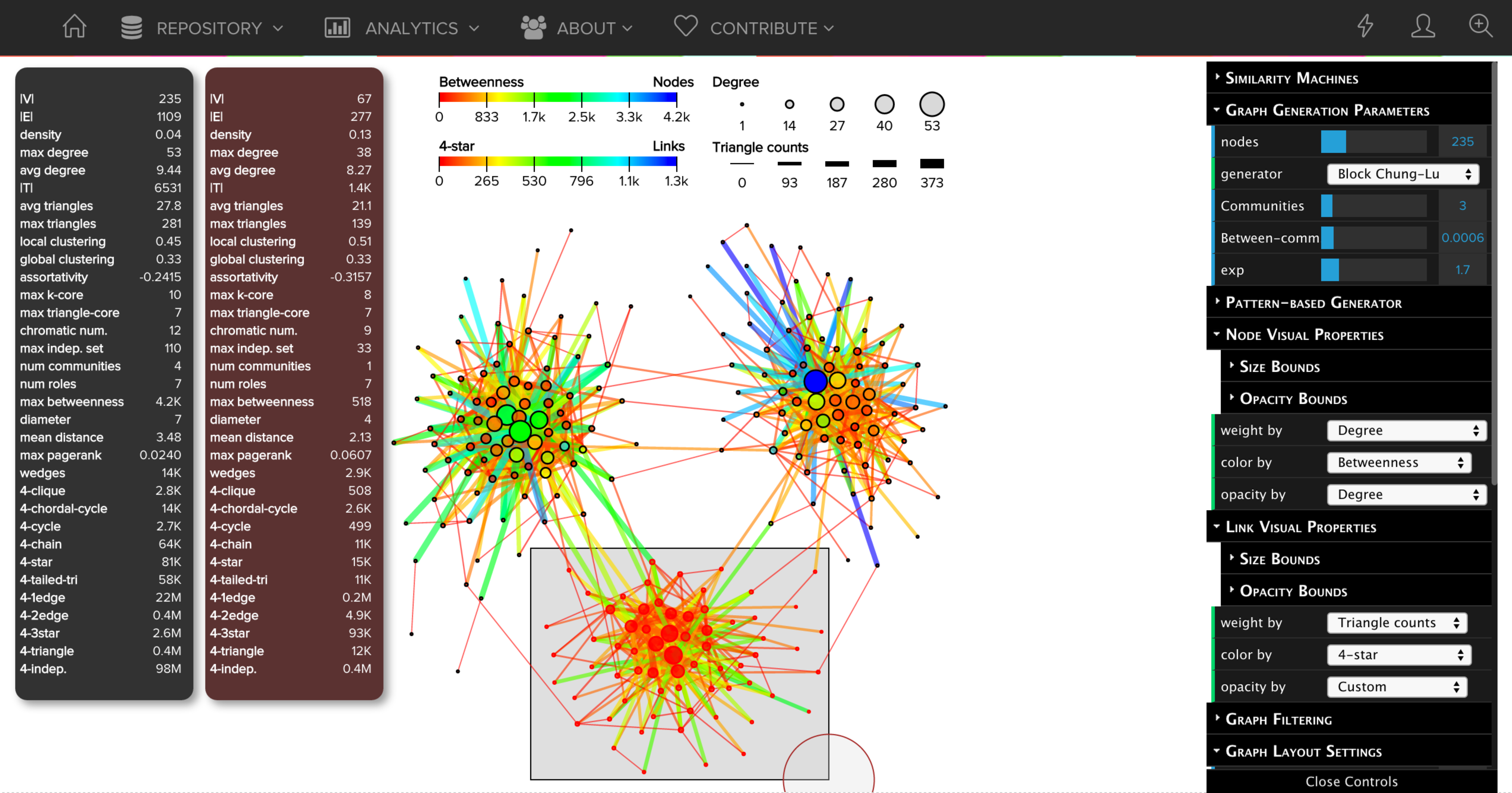}
\caption{
Interactive graphlet exploration of community structure via direct manipulation and selection of the visual representation
}
\label{fig:interactive-graphlet-subgraph-analysis-comm-structure}
\end{figure}

\section{Conclusion \& Future Work}
\label{sec:conc-future}
In this paper, we proposed a fast, efficient, and parallel algorithm for counting graphlets of size $k=\{3,4\}$-nodes that take only a fraction of the time to compute when compared with the current methods used. The proposed graphlet counting algorithm leverages a number of proven combinatorial arguments for different graphlets. For each edge, we count a few graphlets, and with these counts along with the combinatorial arguments, we obtain the exact counts of others in constant time. We systematically investigate the scalability of our algorithm on a large collection of $300$+ networks from a variety of domains.
In future work, we aim to extend our proposed algorithm to higher-order graphlets.

\bibliographystyle{agsm}
\bibliography{graphlet_journal}

\correspond{Nesreen K. Ahmed, Intel Labs, Intel Corporation, Santa Clara, CA 95054, USA. Email: nesreen.k.ahmed@intel.com}
\label{lastpage}
\end{document}